\newtheorem{theorem}{Theorem}
\theoremstyle{definition}
\newcommand{\wpr}{where\hyp{}provenance\xspace}
\newcommand{\Lin}{\kw{Lin}}
\newcommand{\Links}{\textsf{Links}\xspace}
\newcommand{\WLinks}{\textsf{Links\textsuperscript{W}}\xspace}
\newcommand{\LLinks}{\textsf{Links\textsuperscript{L}}\xspace}
\newcommand{\kw}[1]{\text{\sf\bfseries#1}}
\newcommand{\IntTy}{\kw{Int}}
\newcommand{\StringTy}{\kw{String}}
\newcommand{\BoolTy}{\kw{Bool}}
\newcommand{\code}[1]{\text{\sf#1}}
\newcommand{\QueryType}{\mathsf{QType}}
\newcommand{\BaseRow}{\mathsf{BaseRow}}
\newcommand{\ProvSpec}{\mathsf{ProvSpec}}
\newcommand{\drarr}{\ensuremath{\mathrel{\text{\texttt{<--}}}}}
\newcommand{\rarr}{\ensuremath{\mathrel{\text{\texttt{<-}}}}}
\newcommand{\arr}{\ensuremath{\mathrel{\text{\texttt{->}}}}}
\newcommand{\plusplus}{\ensuremath{\mathrel{\text{\texttt{++}}}}}
\newcommand{\Prov}{\kw{Prov}}
\newcommand{\rcd}[1]{\code{(}#1\code{)}}
\newcommand{\coll}[1]{\code{[}#1\code{]}}
\newcommand{\braces}[1]{\code{\{}#1\code{\}}}
\newcommand{\erase}[1]{{\downharpoonleft} #1 {\downharpoonright}}
\definecolor{keyword}{RGB}{96,0,53}
\definecolor{darkgreen}{RGB}{0,128,0}
\definecolor{listingsFrame}{gray}{0.6}
\definecolor{string}{RGB}{0,128,0}
\definecolor{string}{RGB}{0,0,128}
\lstdefinestyle{normal}{%
  xleftmargin=\parindent,
}
\lstdefinestyle{nospace}{%
  aboveskip=0em,
  belowskip=0em,
  xleftmargin=0em,
  xrightmargin=0em,
}
\lstdefinestyle{nohspace}{%
  xleftmargin=0em,
  xrightmargin=0em,
}
\lstdefinestyle{outfigure}{%
  xrightmargin=0em,
  basicstyle=\scriptsize\sf,
}
\lstdefinestyle{infigure}{%
  aboveskip=0em,
  belowskip=0em,
  xleftmargin=0em,
  xrightmargin=0em,
  basicstyle=\small\sf,
}
\lstdefinestyle{scriptsize}{%
  basicstyle=\scriptsize\sf,
}
\lstdefinestyle{footnote}{%
  basicstyle=\footnotesize\sf,
}
\lstdefinestyle{figure}{%
  xleftmargin=0pt,
  frame=single,
  rulecolor=\color{listingsFrame},
}
\newcommand{\mytilde}{%
  \texttt{\resizebox{.48em}{1ex}{\hbox{$\sim$}}}%
}
\lstdefinelanguage{Links}{%
  keywords={delete,fun,if,else,for,from,sig,server,where,query,table,with,var,typename},
  mathescape=true,
  comment=[l]{\#},
  morestring=[b]/,
  morestring=[b]",
  sensitive=true,
  escapechar=\%,
}
\lstdefinelanguage{LaTeX}[]{Tex}{%
  language=Tex,
  morekeywords={emph, label, ref, begin, end, part, chapter, section,
                subsection, subsubsection, paragraph, subparagraph, cite},
}
\lstdefinelanguage{SugarJ}[]{Java}{%
  language=Java,
  morekeywords={sugar, context, free, syntax, desugarings, sorts, signature,
    constructors, rules, strategies, assert, as, editor, services, colorer, folding,
    outliner, checks, recursive, errors, warnings, where,
    css, outlining, rec, color, completions, completion, template,
    layout, analyses, then, extension},
  mathescape=true,
  deletestring=[b]',
  morecomment=[l]{//},
  escapechar=\%,
}
\lstdefinelanguage{SugarJXML}[]{SugarJ}{%
  morekeywords={xmlschema},
  moredelim=*[s][\color{blue}]{<}{>},
  emph={$}, 
  emphstyle={\bfseries\color{keyword}},
  deletecomment=[l]{//},
  morecomment=[s]{<!--}{-->}
}
\lstdefinelanguage{SugarMDD}[]{SugarJ}{%
  morekeywords={model, transformation},
}
\lstdefinelanguage{SugarJATM}[]{SugarMDD}{%
  morekeywords={statemachine, initial, state, events},
}
\lstdefinelanguage{SugarJEntity}[]{SugarMDD}{%
  morekeywords={entity},
}
\lstdefinelanguage{SugarJATMEntity}[]{SugarJATM}{%
  morekeywords={data, entity},
}
\lstdefinelanguage{SugarJTemplate}[]{SugarMDD}{%
  morekeywords={in, \$for, template},
  mathescape=false
}
\lstdefinelanguage{SugarJFeature}[]{SugarMDD}{%
  morekeywords={featuremodel, features, constraint, config, enable, disable, variable},
  otherkeywords={\#ifdef},
}
\lstdefinelanguage{scala}{
  morekeywords={abstract,case,catch,class,def,%
    do,else,extends,false,final,finally,%
    for,if,implicit,import,match,mixin,%
    new,null,object,override,package,%
    private,protected,requires,return,sealed,%
    super,this,throw,trait,true,try,%
    type,val,var,while,with,yield},
  sensitive=true,
  morecomment=[l]{//},
  morecomment=[n]{/*}{*/},
  morestring=[b]",
  morestring=[b]',
  morestring=[b]"""
}
\lstdefinelanguage{SDF}{%
  morekeywords={context, free, syntax, sorts, signature,
    constructors, rules, strategies},
  escapechar=_,
  mathescape=true,
  comment=[l]{\%\%},
  morestring=[b]",
}
\lstdefinelanguage{MyPython}[]{Python}{%
  escapechar=\%,
  mathescape=true,
}
\lstdefinelanguage{MyHaskell}[]{Haskell}{%
  escapechar=\%,
  mathescape=true,
  deletekeywords={Nothing,Just,False,True,putStrLn,fail,fromJust,lookup,Num,exp,free,snd,String,
  return,error,otherwise,not,show,read,Eval,Read,readsPrec,print},
}
\lstdefinelanguage{SugarHaskell}[]{MyHaskell}{%
  morekeywords={context, free, syntax, desugarings, sorts, signature,
    constructors, rules, strategies, lexical, reject},
  mathescape=false,
}
\lstdefinelanguage{SugarHaskellArrows}[]{SugarHaskell}{%
  morekeywords={proc},
}
\lstdefinelanguage{EBNF}{
  morestring=[b]"
}
\lstdefinelanguage{Constraint}{
}
\lstdefinelanguage{Plain}{}
\lstdefinelanguage{Questionnaire}[]{Java}{%
  morekeywords = {questionnaire, question, value, Boolean, String, Integer, group,
    if, else, define, ask}
}
\lstdefinelanguage{SugarFomega}{
  keywords = {module, val, type, mu, if, then, else, case, of,
    fold, unfold, true, false, as, public, import, syntax, desugaring, typing, context, free, let, in, end, forall, do},
  mathescape = true,
  morestring=[b]",
}
\lstdefinelanguage{WLinks}[]{Links}{%
  language=Links,
  morekeywords={data,prov,default}
}
\lstdefinelanguage{LLinks}[]{Links}{%
  language=Links,
  morekeywords={lineage}
}
\newlength{\olinewidth}
\newlength{\ocharwidth}
\newcommand{\concat}{\ensuremath{+\!\!\!\!+\,}}
\newcommand{\DD}[1]{\ensuremath{\mathfrak{D}\sem{#1}}}
\newcommand{\LL}[1]{\ensuremath{\mathfrak{L}\sem{#1}}}
\newcommand{\LLstar}[1]{\ensuremath{\mathfrak{L}^*\sem{#1}}}
\newcommand{\sem}[1]{\llbracket #1 \rrbracket}
\newtheorem{lemma}[theorem]{Lemma}
\newtheorem{corollary}[theorem]{Corollary}
\begin{document}


\title{Language-integrated Provenance}

\author{Stefan Fehrenbach\\
University of Edinburgh\\10 Crichton Street\\ Edinburgh, EH8 9AB \\United Kingdom
\and James Cheney\\
University of Edinburgh\\10 Crichton Street\\ Edinburgh, EH8 9AB\\ United Kingdom}
\date{}
\maketitle

\begin{abstract}
  Provenance, or information about the origin or derivation of data,
  is important for assessing the trustworthiness of data and
  identifying and correcting mistakes.  Most prior implementations of
  data provenance have involved heavyweight modifications to database
  systems and little attention has been paid to how the provenance
  data can be used outside such a system.  We present extensions to
  the Links programming language that build on its support for
  language-integrated query to support provenance queries by rewriting
  and normalizing monadic comprehensions and extending the type system
  to distinguish provenance metadata from normal data.  The main
  contribution of this article is to show that the two most common forms
  of provenance can be implemented efficiently and used safely as a
  programming language feature with no changes to the database system.
\end{abstract}

\section{Introduction}

A Web application typically spans at least three different
computational models: the server-side program, browser-side HTML or
JavaScript, and SQL to execute on the database.  Coordinating these
layers is a considerable challenge.  Recently, programming languages
such as \Links~\cite{FMCO2006CooperLWY}, Hop~\cite{hop} and
Ur/Web~\cite{POPL2015Chlipala} have pioneered a \emph{cross-tier}
approach to Web programming.  The programmer writes a single program,
which can be type-checked and analyzed in its own right, but parts of
it are executed to run efficiently on the multi-tier Web architecture
by translation to HTML, JavaScript and SQL. Cross-tier Web programming
builds on \emph{language-integrated
  query}~\cite{meijer06sigmod,pialorsi2007introducing}, a technique
for safely embedding database queries into programming languages,
which has been popularized by Microsoft’s LINQ library, which provides
language-integrated query for .NET languages such as C\# and F\#.
(The language \Links was developed concurrently with Meijer et al.'s
work on LINQ; their names are coincidentally similar but they are
different systems.)


When something goes wrong in a database-backed Web application,
understanding what has gone wrong and how to fix it is also a
challenge.  Often, the database is the primary ``state'' of the
program, and problems arise when this state becomes inconsistent or
contains erroneous data.  For example, Figure~\ref{fig:boat-tours} shows \Links code for querying
data from a (fictional) Scottish tourism database, with the result
shown in Figure~\ref{fig:boat-tours-result}.  Suppose one of the phone
numbers is incorrect: we might want to know \emph{where} in the source
database to find the source of this incorrect data, so that we can
correct it.  Alternatively, suppose we are curious \emph{why} some
data is produced: for example, the result shows \lstinline!EdinTours!
twice.  If we were not expecting these results, e.g.\ because we believe that
\lstinline!EdinTours! is a bus tour agency and does not offer boat
tours, then we need to see additional input data to understand why they
were produced.

\begin{figure}[tb]
\small
\begin{lstlisting}[language=Links]
var agencies = table "Agencies"
 with (name:String, based_in:String, phone:String)
 from db;
var externalTours = table "ExternalTours"
 with (name:String, destination:String, type:String, price:Int)
 from db;
var q1 = query {
  for (a <-- agencies)
    for (e <-- externalTours)
    where (a.name == e.name && e.type == "boat")
      [(name = e.name,
        phone = a.phone)]
}
\end{lstlisting}
\caption{\Links table declarations and example query}
\label{fig:boat-tours}

\begin{center}
  \begin{tabular}{ll}
    name & phone\\
    \midrule
    EdinTours & 412 1200\\
    EdinTours& 412 1200\\
    Burns's & 607 3000
  \end{tabular}
\end{center}
\caption{Example query results}
\label{fig:boat-tours-result}
\end{figure}

Automatic techniques for producing such explanations, often called
\emph{provenance}, have been explored extensively in the database
literature~\cite{TODS2000CuiWW,ICDT2001BunemanKT,green07pods}.
Neither conventional nor cross-tier Web programming currently provides
direct support for provenance.  A number of implementation strategies
for efficiently computing provenance for query results have been
explored~\cite{VLDB2005BhagwatCTV,ICDE2009GlavicA,Festschrift2013GlavicMG},
but no prior work considers the interaction of provenance with
clients of the database.


We propose \emph{language-integrated provenance}, a new approach to
implementing provenance that leverages the benefits of
language-integrated query.  In this article, we present two instances of
this approach, one which computes \emph{where-provenance} showing
where in the underlying database a result was copied from, and another
which computes \emph{lineage} showing all of the parts of the
database that were needed to compute part of the result.
Both techniques are implemented by a straightforward source-to-source
translation which adjusts the types of query expressions to
incorporate provenance information and changes the query behavior to
generate and propagate this information.  Our approach is implemented
in \Links, and benefits from its strong support for rewriting queries
to efficient SQL equivalents, but the underlying ideas may be
applicable to other languages that support language-integrated query,
such as F\#~\cite{ML2006Syme}, SML\#~\cite{ICFP2011OhoriU}, or
Ur/Web~\cite{POPL2015Chlipala}.

Most prior implementations of provenance involve changes to relational
database systems and extensions to the SQL query language, departing from
the SQL standard that relational databases implement.  To date, none
of these proposals have been incorporated into the SQL standard or
supported by mainstream database systems.  If such extensions are
adopted in the future, however, we can simply generate queries that
use these extensions in \Links.  In some of these systems, enabling
provenance in a query changes the result type of the query (adding an
unpredictable number of columns).  Our approach is the first (to the
best of our knowledge) to provide type-system support that makes sure
that the extra information provided by language-integrated provenance
queries is used safely by the client.

Our approach builds on \Links's support for queries that construct
nested collections~\cite{SIGMOD2014CheneyLW}.  This capability is
crucial for lineage, because the lineage of an output record is a \emph{set} of relevant input
records.  Moreover, our provenance translations can be used with queries
that construct nested results.  Our approach is also distinctive in
allowing fine-grained control over where-provenance.  In particular, the
programmer can decide whether to enable or disable where-provenance
tracking for individual input table fields, and whether to
keep or discard provenance for each result field.

We present two simple extensions to \Links to support where-provenance
and lineage, and give (provably type-preserving) translations from
both extensions to plain \Links.  We have implemented both approaches
and experimentally validated them using a synthetic benchmark.
Provenance typically slows down query evaluation because more data is
manipulated.  For where-provenance, our experiments indicate a
constant factor overhead of 1.5--2.8.  For lineage, the slowdown is
between 1.25 and 7.55, in part because evaluating lineage queries
usually requires manipulating more data.  We also compare \Links to
Perm~\cite{ICDE2009GlavicA}, a database-integrated provenance system,
whose authors report slowdowns of 3--30 for a comparable form of
lineage.  In our experiments Perm generally outperforms \Links but
\Links is within an order of magnitude.

\paragraph{Contributions and outline}
Section~\ref{sec:overview} gives a high-level overview of our
approach, illustrated via examples.  Section~\ref{sec:background}
reviews background material on \Links upon which we rely.
This article makes the following three contributions:
\begin{itemize}
\item Definition of the \WLinks and \LLinks extensions to \Links, along
  with their semantics and provenance correctness properties
  (Section~\ref{sec:design})
\item Implementations of \WLinks and \LLinks by type-preserving translation to plain
  \Links (Section~\ref{sec:translation})
\item Experimental evaluation of the implementations on a number of
  queries (Section~\ref{sec:evaluation})
\end{itemize}
Related work is discussed in greater detail in Section~\ref{sec:related}.

This article significantly extends an earlier conference
paper~\cite{PPDP2016FehrenbachC}.  The conference version presented the where-provenance
and lineage translations and their implementation and evaluation; this
article in addition describes the semantics of \Links
(Section~\ref{sec:background}), 
and proves correctness and type-preservation properties that were not
included in the conference paper (Sections~\ref{sec:design} and \ref{sec:translation}).

\section{Overview}
\label{sec:overview}

In this section we give an overview of our approach, first reviewing
necessary background on \Links and language-integrated query based on
comprehensions, and then showing how provenance can be supported by
query rewriting in this framework. We will use a running example of a
simple tours database, with some example data shown  in Figure~\ref{fig:example-data}.

\subsection{Language-integrated query}

Writing programs that interact with databases can be tricky, because
of mismatches between the models of computation and data structures
used in databases and those used in conventional programming
languages.  The default solution (employed by JDBC and other typical
database interface libraries) is for the programmer to write queries
or other database commands as uninterpreted strings in the host
language, and these are sent to the database to be executed.  This
means that the types and names of fields in the query cannot be
checked at compile time and any errors will only be discovered as a
result of a run-time crash or exception.  More insidiously, failure to
adequately sanitize user-provided parameters in queries opens the door to
SQL injection attacks~\cite{IEEEComputer2013SharTan}.

Language-integrated query is a technique for embedding queries into
the host programming language so that their types can be checked
statically and parameters are automatically sanitized.  Broadly, there are
two common approaches to language-integrated query.  The first
approach, which we call \emph{SQL embedding}, adds
specialized constructs resembling SQL queries to the host
language, so that they can be typechecked and handled correctly by the
program.  This is the approach taken in C\#~\cite{meijer06sigmod,pialorsi2007introducing},
SML\#~\cite{ICFP2011OhoriU}, and Ur/Web~\cite{POPL2015Chlipala}.   The second approach, which we call \emph{comprehension}, uses monadic
comprehensions or related constructs of the host language, and
generates queries from such expressions.  The comprehension approach
builds on foundations for querying databases using comprehensions
developed by \citet{TCS1995BunemanNTW}, and has been adopted in
languages such as F\#~\cite{ML2006Syme} and
\Links~\cite{FMCO2006CooperLWY} as well as libraries such as
Database-Supported Haskell~\cite{IFL2010GiorgidzeGSW}.

\begin{figure}[tb]
\small
\textbf{Agencies}\\[.3em]
\begin{tabular}{rlll}
(oid) & name & based\_in & phone\\
\cmidrule{2-4}
1 & EdinTours & Edinburgh & 412 1200\\
2 & Burns's & Glasgow & 607 3000\\
\end{tabular}
\\[.8em]

\textbf{ExternalTours}\\[.3em]
\begin{tabular}{rlllr}
(oid) & name & destination & type & price in \pounds\\
\cmidrule{2-5}
3 & EdinTours & Edinburgh & bus & 20\\
4 &  EdinTours & Loch Ness & bus & 50\\
5 & EdinTours &Loch Ness & boat & 200\\
6 & EdinTours & Firth of Forth & boat & 50\\
7 & Burns's & Islay & boat & 100\\
8 & Burns's & Mallaig & train & 40
\end{tabular}
\caption{Example input data}\label{fig:example-data}
\end{figure}

The advantage of the comprehension approach is that it provides a
higher level of abstraction for programmers to write queries, without
sacrificing performance.  This advantage is critical to our work, so
we will explain it in some detail.
For example, the query shown in
Figure~\ref{fig:boat-tours} illustrates \Links comprehension syntax.
It asks for the names and phone numbers of all agencies having an
external tour of type \lstinline!"boat"!. The keyword \lstinline!for!
performs a comprehension over a table (or other collection), and the
\lstinline!where! keyword imposes a Boolean condition filtering the
results.  The result of each iteration of the comprehension is a
singleton collection containing the record %
\lstinline!(name = e.name,phone = a.phone)!.
%

Monadic comprehensions do not always correspond exactly to SQL
queries, but for queries that map flat database tables to flat
results, it is possible to normalize these comprehension expressions
to a form that is easily translatable to SQL~\cite{JCSS1996Wong}.  For
example, the following query
\begin{lstlisting}[language=Links]
var q1' = query {
  for (e <-- externalTours)
  where (e.type == "boat")
    for (a <-- agencies)
    where (a.name == e.name)
    [(name = e.name, phone = a.phone)]
}
\end{lstlisting}
does not directly correspond to a SQL query due to the alternation of
\lstinline!for! and \lstinline!where! operations; nevertheless, query
normalization generates a single equivalent SQL query in
which the \lstinline!where! conditions are both pushed into the SQL
query's \lstinline!WHERE! clause:
\begin{verbatim}
SELECT e.name AS name, a.phone AS phone
  FROM ExternalTours e, Agencies a
 WHERE e.type = 'boat' AND a.name = e.name
\end{verbatim}
Normalization frees the programmer to write
queries in more natural ways, rather than having to fit the
query into a pre-defined template expected by SQL.

However, this freedom can also lead to problems, for example if the
programmer writes a query-like expression that contains an operation,
such as \lstinline{print} or regular expression matching, that cannot
be performed on the database.  In early versions of \Links, this could
lead to unpredictable performance, because queries would unexpectedly
be executed on the server instead of inside the database.  The current version uses a
type-and-effect system (as described by \citet{DBPL2009Cooper} and
\citet{TLDI2012LindleyC}) to track which parts of the program must be
executed in the host language and which parts may be executed on the
database.  Using the \lstinline!query! keyword above forces the
typechecker to check that the code inside the braces will successfully
execute on the database.
\subsection{Higher-order functions and nested query results}

Although comprehension-based language-integrated query may seem (at
first glance) to be little more than a notational convenience, it has
since been extended to provide even greater flexibility to programmers
without sacrificing performance.

The original results on normalization (due to \citet{JCSS1996Wong})
handle queries over flat input tables and producing flat result
tables, and did not allow calling user-defined functions inside
queries.  Subsequent work has shown how to support higher-order
functions~\cite{DBPL2009Cooper,DBPL2013GrustU} and queries that
construct nested collections~\cite{SIGMOD2014CheneyLW}.  For example,
we can use functions to factor the previous query into reusable
components, provided the functions are nonrecursive and only perform
operations that are allowed in the database.

\begin{lstlisting}[language=Links]
fun matchingAgencies(name) {
  for (a <-- agencies)
  where (a.name == name)
    [(name = e.name, phone = a.phone)]
}
var q1'' = query {
  for (e <-- externalTours)
  where (e.type == "boat")
    matchingAgencies(e.name)
}
\end{lstlisting}

Cooper's results show that these queries still normalize to
SQL-equivalent queries, and this algorithm is implemented in \Links.
Similarly, we can write queries whose result type is an arbitrary
combination of record and collection types, not just a flat
collection of records of base types as supported by SQL:

\begin{lstlisting}[language=Links]
var q2 = query {
  for (a <-- agencies)
    [(name = a.name,
       tours = for (e <-- externalTours)
                 where (e.name == a.name)
                 [(dest = e.destination, type = e.type)]
}
\end{lstlisting}
This query produces records whose second \lstinline!tours! component
is itself a collection --- that is, the query result is of the type
\lstinline![(name:String,[(dest:String, type:Type)])]! which contains
a nested occurrence of the collection type constructor \lstinline![]!.
SQL does not directly support queries producing such nested results
--- it requires flat inputs and query results.



Our previous work on \emph{query shredding} \cite{SIGMOD2014CheneyLW}
gives an algorithm that evaluates queries with nested results
efficiently by translation to SQL.  Given a query whose return type
contains $n$ occurrences of the collection type constructor, query
shredding generates $n$ SQL queries that can be evaluated on the
database, and constructs the nested result from the resulting tables.
This is typically much more efficient than loading the database data
into memory and evaluating the query there.  \Links supports query
shredding and we will use it in this article to implement lineage.

Both capabilities, higher-order functions and nested query results, are
essential building blocks for our approach to provenance.  In what
follows, we will use these techniques without further explanation of
their implementation.  The details are covered in previous
papers~\cite{DBPL2009Cooper,TLDI2012LindleyC,SIGMOD2014CheneyLW}, but
are not needed to understand our approach.

\subsection{Where-provenance and lineage}


As explained in the introduction, provenance tracking for queries has been
explored extensively in the database community.  We are
now in a position to explain how these provenance techniques can be
implemented on top of language-integrated query in \Links.  We
review two of the most common forms of provenance, and illustrate our
approach using examples; the rest of the article will use similar
examples to illustrate our implementation approach.

\textbf{Where-provenance} is information about where information in
the query result ``came from'' (or was copied from) in the input.
\citet{ICDT2001BunemanKT} introduced this idea; our approach is based
on a later presentation for the nested relational calculus by
\citet{TODS2008BunemanCV}.  A common reason for asking for
where-provenance is to identify the source of incorrect (or
surprising) data in a query result.  For example, if a phone number in
the result of the example query is incorrect, we might ask for its
where-provenance.  In our system, this involves modifying the input
table declaration and query as follows:
\begin{lstlisting}[language=WLinks]
var agencies = table "Agencies"
 with (name:String, based_in:String, phone:String)
 where phone prov default
\end{lstlisting}
The annotation \lstinline[language=WLinks]!where phone prov default! says to assign phone
numbers the ``default'' provenance annotation of the form
\lstinline!(Agencies, phone, i)! where $i$ is the object id (oid) of
the corresponding row.  The field value will be of type
$\Prov(\StringTy)$; the data value can be accessed using the keyword
$\kw{data}$ and the provenance can be accessed using the keyword
$\kw{prov}$, as follows:
\begin{figure}[h!]
\begin{lstlisting}[language=WLinks]
var q1''' = query {
  for (a <-- agencies)
    for (e <-- externalTours)
    where (a.name == e.name && e.type == "boat")
      [(name = e.name,
        phone = data a.phone, p_phone = prov a.phone)]
}
\end{lstlisting}
  \caption{\WLinks query $q1'''$.}
  \label{fig:q1p3}
\end{figure}

The result of this query is as follows:
\begin{center}
\begin{tabular}{lll}
name & phone & p$\_$phone\\
\midrule
EdinTours & 412 1200 &\lstinline!(Agencies,phone,1)!\\
EdinTours & 412 1200 & \lstinline!(Agencies,phone,1)!\\
Burns's & 607 3000 &\lstinline!(Agencies,phone,2)!\\
\end{tabular}
\end{center}

We would like to emphasize one important point about our approach to
where-provenance: as illustrated by the above query, we need to change
the table definitions to indicate which fields carry provenance, and
we also need to annotate the query to indicate where the data or
provenance are used.  This effort is reasonable because queries are
typically small, but alternative strategies, such as automatically
annotating all fields, could also be considered.

\textbf{Why-provenance} is information that explains ``why'' a result
was produced.  In a database query setting, this is usually taken to
mean a \emph{justification} or \emph{witness} to the query result,
that is, a subset of the input records that includes all of the data
needed to generate the result record.  Actually, several related forms
of why-provenance have been studied~\cite{TODS2000CuiWW,ICDT2001BunemanKT,FTDB2009CheneyCT,Festschrift2013GlavicMG},
however, many of these only make sense for set-valued collections,
whereas \Links currently supports multiset semantics.  In this
article, we focus on a simple form of why-provenance called
\emph{lineage} which is applicable to either semantics.

Intuitively, the lineage of a record $r$ in the result of a query is a
subset $L$ of the records in the underlying database $db$ that
``justifies'' or ``witnesses'' the fact that $r$ is in the result of
$Q$ on $db$.  That is, running $Q$ on the lineage $L$ should produce a
result containing $r$, i.e.\ $r \in Q(L)$.  Obviously, this property
can be satisfied by many subsets of the input database, including the
whole database $db$, and this is part of the reason why there exist
several different definitions of why-provenance (for example, to
require minimality).
We follow the common approach of defining the lineage to be
the set of all input database records accessed in the process of
producing $r$; this is a safe overapproximation to the minimal
lineage, and usually is much smaller than the whole database.

We identify records in input database tables using pairs such as
\lstinline!(Agencies,2)! where the first component is the table
name and the second is the row id, and the lineage of an element of a
collection is just a collection of such pairs.  (Again, this has the
benefit that we can use a single type for references to data in
multiple input tables.)  Using this representation, the lineage
for \lstinline!q1! (Figure~\ref{fig:boat-tours}) is as follows:

\begin{center}
\begin{tabular}{ll||l}
name & phone &lineage\\
\midrule
EdinTours & 412 1200 & \lstinline![(Agencies,1),(ExternalTours,5)]!\\
EdinTours & 412 1200 & \lstinline![(Agencies,1),(ExternalTours,6)]!\\
Burns's & 607 3000 & \lstinline![(Agencies,2),(ExternalTours,7)]!\\
\end{tabular}
\end{center}

In our system, to obtain these results we simply use the keyword
\lstinline!lineage! instead of \lstinline!query!; for example, for
\lstinline!q1! we would simply write:
\begin{lstlisting}
lineage {
  for (a <-- agencies)
    for (e <-- externalTours)
    where (a.name == e.name && e.type == "boat")
      [(name = e.name,
        phone = a.phone)]
}
\end{lstlisting}

\Links's capabilities for normalizing and efficiently evaluating
queries provide the key ingredients needed for computing provenance.
For both where-provenance and lineage, we can translate programs using
the extensions described above, in a way that both preserves types and
ensures that the resulting query expressions can be converted to SQL
queries.  In the rest of this article, we give the details of these
translations and present an experimental evaluation showing that its
performance is reasonable.

\subsection{Pragmatics and limitations}

Most research on provenance in databases has focused on the process of
propagating annotations (e.g.\ source locations) through queries to the
output. 
This article is the first to consider support for provenance at the
programming language level.  Our attempt to do so has raised some
interesting issues that have not been considered in this previous
work, such as:
\begin{enumerate}
\item Where do the initial provenance annotations come from?
\item What are appropriate correctness criteria in a setting where the underlying
  program may be updated (by the program or other database users)?
\item Should we also track provenance information for updates, and if so how?
\end{enumerate}
In our approach, we require table declarations to be annotated to
indicate how the table's data is annotated with provenance.  Thus, we
do not assume that the underlying relational database schema contains
provenance data, but if such data is available we can use it.
However, as we shall see, this complicates matters since we need to be
able to handle updates to such tables.  We deal with this by
translating table references to pairs, with the first component
containing the raw table reference for use in updates and the second
containing a delayed query expression that produces the initial
annotated version of the table for use in queries.

Concerning the second question, we revisit correctness
criteria for where-provenance and lineage that have been considered in
previous work, and show that similar properties hold for our
approach.  However, as in previous work, our correctness properties
assume that the underlying database is unchanging.  This is of course
not a realistic assumption: \Links includes update operations that can
change the database tables, and other database users might
concurrently update the data or even change the structure of the
data.  It is an interesting question (beyond the scope of this paper)
how to generalize existing criteria for provenance correctness to this setting.

We mention two additional limitations.  First, since Links itself does
not yet support grouping and aggregation in queries, our approach does
not attempt to handle these features either.  This is an important
obstacle to be overcome in future work.  Likewise, we do not consider
the process of tracking provenance for updates to the database, even
when the updates are performed by \Links.  This has been considered by
Buneman et al.~\cite{TODS2008BunemanCV}, but in this paper we focus on
provenance tracking for queries and leave (language-integrated)
provenance tracking for updates for future work.

\section{Links background}
\label{sec:background}
We first review a subset of the \Links programming language that
includes all of the features relevant to our work; we omit some
features (such as effect typing, polymorphism, and concurrency) that
are not required for the rest of the article.  We also present a
simplified operational semantics for \Links, omitting detail regarding
query normalization and shredding that is presented in
more detail in previous work~\cite{TLDI2012LindleyC,SIGMOD2014CheneyLW}.
\ref{app:notation} lists notations introduced in this
  paper, with a brief explanation and reference to their first occurrence.

Figure~\ref{fig:links-syntax} presents a simplified subset of \Links
syntax, sufficient for explaining the provenance translations in this
article. Types include base types $O$ (such as integers, booleans and
strings), table types \lstinline!table($l_i$: $A_i$)$_{i=1}^n$!,
function types \lstinline!$A$ -> $B$!, record types
\lstinline!($l_i$: $A_i$)$_{i=1}^n$!, and collection types
\lstinline![$A$]!. In \Links, collection types are treated as
multisets inside database queries (reflecting SQL's default multiset
semantics), but represented as lists during ordinary execution.

\begin{figure}[tb]
\[\small
\begin{array}{lrcl}
\text{Base types} & O & \Coloneqq &\IntTy \mid \BoolTy \mid \StringTy\smallskip\\
\text{Rows} & R & \Coloneqq&\cdot \mid R,l:A\smallskip\\
\text{Table types} & T &\Coloneqq& \kw{table}(R)\smallskip\\
\text{Types} & A, B &\Coloneqq &O \mid T \mid A ~\lstinline!->!~ B
                                 \mid \rcd{R} \mid \coll{A}\smallskip\\
\text{Contexts} & \Gamma & \Coloneqq & \cdot \mid \Gamma,x:A\medskip\\
\text{Expressions} &L, M, N&\Coloneqq & c \mid x \mid  (l_i = M_i)_{i=1}^n \mid N.l\smallskip\\
&           &  \mid  &  \kw{fun}~f(x_i|_{i=0}^n) ~{ N } \mid N(M_i|_{i=0}^n)\smallskip\\
&            & \mid  &  \kw{var} ~x = M; N\mid \kw{if}~ (L)~ \{M\}~ \kw{else}~ \{N\}\smallskip\\
&           &  \mid  &  \kw{query}~ \{ N \} \mid \kw{table}~ \mathit{name}~ \kw{with}~ (l_i: O_i)_{i=1}^n\smallskip\\
&           &  \mid  &  \coll{} \mid \coll{N} \mid N
                       ~\plusplus~ M \mid \kw{empty}(M)\smallskip\\
& &  \mid  &  \kw{for} ~(x ~\lstinline!<-! ~L) ~M\mid ~\kw{where} (M)~
             N \smallskip\\
&           &  \mid  &  \kw{for} ~(x ~\lstinline!<--! ~L) ~M \mid   \kw{insert} ~L~\kw{values}~M\smallskip\\
& &  \mid  &  \kw{update} ~(x \drarr L)~\kw{where}~M~\kw{set}~N\smallskip\\
& &  \mid  &  \kw{delete} ~(x \drarr L)~\kw{where}~M
\end{array}
\]
\caption{Syntax of a subset of \Links.}
\label{fig:links-syntax}
\end{figure}

\begin{figure}
  \small
  \begin{align*}
    \Sigma, (\kw{fun}\,f(x_i|_{i=0}^n)\,M)(V_i|_{i=0}^n) & ⟶ \Sigma, M[f \coloneqq \kw{fun}\,f(x_i)\,M, x_i \coloneqq V_i]\\
    \Sigma, \kw{var}\, x = V; M & ⟶ \Sigma, M[x \coloneqq V] \\
    \Sigma, (l_i = V_i)_{i=1}^n.l_k & ⟶ \Sigma, V_k \\
    \Sigma, \kw{if}\,(\kw{true})\,M\,\kw{else}\,N & ⟶ \Sigma, M \\
    \Sigma, \kw{if}\,(\kw{false})\,M\,\kw{else}\,N & ⟶ \Sigma, N \\
    \Sigma, \kw{query}\,M & ⟶ \Sigma, M \\
    \Sigma, \kw{empty}(\texttt{[]}) & ⟶ \Sigma, \kw{true} \\
    \Sigma, \kw{empty}(V) & ⟶ \Sigma, \kw{false} \quad \text{iff } V \neq \texttt{[]} \\
    \Sigma, \kw{for}\,(x\,\texttt{<-}\,\texttt{[]})\,M & ⟶ \Sigma, \texttt{[]} \\
    \Sigma, \kw{for}\,(x\,\texttt{<-}\,[V])\,M & ⟶ \Sigma, M[x \coloneqq V]\\
    \Sigma, \kw{for}\,(x\,\texttt{<-}\,V \concat W)\,M& ⟶ \Sigma, (\kw{for}\,(x\,\texttt{<-}\, V)\,M) \concat (\kw{for}\,(x\,\texttt{<-}\, W)\,M)\\
    \Sigma, \kw{for}\,(x\,\texttt{<--}\, \kw{table}\,n)\, M & ⟶ \Sigma, \kw{for}\,(x\,\texttt{<-}\, \Sigma(n))\, M \\
    \Sigma, \kw{insert}\,(\kw{table}\,t)\,\kw{values}\,V & ⟶ \Sigma[t ↦ \Sigma(t) \concat V], ()
  \end{align*}%
  \begin{mathpar}
    \infer[]
    {\Sigma' = \Sigma[t \mapsto [X \in \Sigma(t) | \Sigma, M[x := X] ⟶^* \Sigma, \kw{false}]]}
    {\Sigma, \kw{delete}\,(x\,\texttt{<--}\,\kw{table}\,t)\,\kw{where}\,M ⟶ \Sigma', ()}

    \infer[]
    {\Sigma' = \Sigma[t \mapsto [u(X)
           | X \in \Sigma(t)]] \quad u(X) = {\begin{cases}
               (X\,\kw{with}\,l_i = V_i) & \text{if } M[x \coloneqq X] ⟶^* \kw{true}\\
               & \text{ and } N_i[x \coloneqq X]⟶^* V_i \\
            X & \text{otherwise}
          \end{cases}}}
    {\Sigma, \kw{update}\,(x\,\texttt{<--}\,\kw{table}\,t)\,\kw{where}\,M\,\kw{set}\,(l_i = N_i)_{i=1}^n ⟶ \Sigma', ()}
  \end{mathpar}
  
  \[
    \infer[]
    {\Sigma,M ⟶ \Sigma',M'}
    {\Sigma,\mathcal{E}[M] ⟶ \Sigma',\mathcal{E}[M']}
  \]

  \[
    \begin{array}{rcl}
      \mathcal{E} & \Coloneqq & [] \mid \mathcal{E}(M_1, \dots, M_n) \mid V(V_1, \dots, V_{i-1}, \mathcal{E}, M_{i+1}, \dots, M_n) \\
                  & \mid & (l_1 = V_1, \dots, l_{i-1} = V_{i-1}, l_i = \mathcal{E}, l_{i+1} = M_{i+1}, \dots, l_n = M_n) \mid \mathcal{E}.l \\
                  & \mid & \kw{if}\,(\mathcal{E})\, M\,\kw{else}\,N \\
                  & \mid & \kw{empty}(\mathcal{E}) \\
                  & \mid & [\mathcal{E}] \mid \mathcal{E} \concat M \mid V \concat \mathcal{E} \\
                  & \mid & \kw{for}\,(x\,\texttt{<-}\,\mathcal{E})\,M \mid \kw{for}\,(x\,\texttt{<--}\,\mathcal{E})\,M \\
                  & \mid & \kw{insert}\,(\mathcal{E})\, M \mid \kw{insert}\,(\kw{table}\,n)\,\mathcal{E} \\
                  & \mid & \kw{update}\,(x\,\texttt{<--}\,\mathcal{E})\,\kw{where}\,M\,\kw{set}\,(l_i=N_i)_{i=1}^n \\
                  & \mid & \kw{delete}\,(x\,\texttt{<--}\,\mathcal{E})\,\kw{where}\,M
    \end{array}
  \]

  \caption{Semantics of \Links.}
\label{fig:links-semantics}
\end{figure}

Expressions include standard constructs such as constants, variables,
record construction and field projection, conditionals, $n$-ary recursive functions and
application.  We freely use pair types $\rcd{A,B}$ and pair syntax
$\rcd{M,N}$ and projections $M.1$, $M.2$ etc., which are easily
definable using records.  Constants $c$ can be functions such as
integer addition, equality tests, etc.; their types are collected in a
signature $\Sigma$. The signature $\Sigma$ is also a simple model of a database: it maps tables to their contents.
In \Links we write $\kw{var}~ x = M; N$ for
binding a variable $x$ to the value of $M$ in expression $N$.  The semantics of the \Links
constructs discussed so far is call-by-value.  The expression
$\kw{query} ~\braces{M}$ introduces a query block, whose content is
not evaluated in the usual call-by-value fashion but instead first
\emph{normalized} to a form equivalent to an SQL query, and then
submitted to the database server.  The resulting table (or tables, in
the case of a nested query result) are then translated into a \Links
value.  Queries can be constructed using the expressions for the empty
collection $\coll{}$, singleton collection $\coll{M}$, and
concatenation of collections $M \plusplus N$.  In addition, the
comprehension expressions \lstinline!for(x <-- $M$) $N$! and
\lstinline!for(x <- $M$) $L$!  allow us to form queries involving
iteration over a collection.  The difference between the two
expressions is that \lstinline!for($x$ <-- $M$)! expects $M$ to be a
table reference, whereas \lstinline!for($x$ <- $M$)!  expects $M$ to
be a collection.  The expression $\kw{where}~(M)~N$ is equivalent to
$\kw{if}~(M)~\{N\}~\kw{else}~\{\coll{}\}$, and is
intended for use in filtering query results.  The expression
$\kw{empty}~(M)$ tests whether the collection produced by $M$ is
empty.  These comprehension syntax constructs can also be used outside
a query block, but they are not guaranteed to be translated to queries
in that case. The $\kw{insert}$, $\kw{delete}$ and $\kw{update}$
expressions perform updates on database tables; they are implemented
by direct translation to the analogous SQL update operations.

Figure~\ref{fig:links-semantics} presents the evaluation judgment
$\Sigma,M \to \Sigma',M'$ for \Links expressions.  We employ
evaluation contexts (following Felleisen and
Hieb~\cite{TCS1992FelleisenH}) $\mathcal{E}$ and define the semantics
using several axioms that handle redexes and a single inference rule
that shows how to evaluate an expression in which a redex occurs
inside an evaluation context. The rule for $\kw{update}$ uses
syntactic sugar for record update called $\kw{with}$ for brevity.
Most of the rules in
Figure~\ref{fig:links-semantics} are pure in the sense that they have
no side-effect on the state of the database. Only the rules for
$\kw{insert}$, $\kw{delete}$ and $\kw{update}$ may change the database
state. The rules here present the semantics of \Links at a high level,
and do not model the exact behavior of query evaluation; instead the
$\kw{query}~\{M\}$ operation just evaluates to $M$.
We assume functions used in database queries and updates are total and have a database equivalent.
This is assured by a type and effect system in the full language.
Lindley and Cheney~\cite{TLDI2012LindleyC} present a more detailed model that
also shows how flat \Links queries are normalized and evaluated externally
using SQL and Cheney et al.~\cite{SIGMOD2014CheneyLW}
shows how nested queries are implemented.

\begin{figure*}[tbp]
\small
\begin{mathpar}
\infer[Const]
{ \Sigma(c) = A}
{Γ ⊢ c : A}

\infer[Var]
{ x:A \in \Gamma }
{Γ ⊢ x : A}

\infer[Record]
{ Γ ⊢ M_i : A_i \quad (i \in \{1,\dots,n\})}
{ Γ ⊢ \rcd{ l_i \texttt{\small\ = } M_i }_{i=1}^n : \rcd{ l_i : A_i }_{i=1}^n}

\infer[Projection]
{ Γ ⊢ M : \rcd{ l_i : A_i}_{i=1}^n}
{ Γ ⊢ M\code{.}l_k : A_k }

\infer[Fun]
{Γ , [x_i : A_i]_{i=1}^n ⊢ M : B}
{Γ ⊢ \kw{fun }\rcd{ {x_i}|_{i=1}^n } \braces{M}: \rcd{A_i|_{i=1}^n } \arr B}

\infer[App]
{Γ ⊢ M : \rcd{A_i| _{i = 1}^n} ~\lstinline!->!~ B\\
Γ ⊢ N_i : A_i \quad (i \in \{1,\ldots,n\})}
{Γ ⊢ M(N_i|_{i=1}^n) : B}

\infer[Var]
{Γ ⊢ M : A\\
Γ, x:A ⊢ N : B}
{Γ ⊢ \kw{var}~x~=M; N : B}

\infer[Query]
{Γ ⊢ M : \coll{ A}  \\
A :: \QueryType}
{Γ ⊢ \kw{query}~\{M\} :  \coll{A}}

\infer[Empty]
{Γ ⊢ M : \coll{A}  }
{Γ ⊢ \kw{empty}(M) :  \BoolTy}

\infer[Table]
{ R :: \BaseRow}
{Γ ⊢ \kw{table}~n~\kw{with}~(R) : \kw{table} (R)}

\infer[Empty-List]
{\strut}
{ Γ ⊢ \coll{} : \coll{ A }}

\infer[List]
{ Γ ⊢ M : A}
{Γ ⊢ \coll{ M} : \coll{ A }}

\infer[Concat]
{Γ ⊢ M : \coll{A}\\
Γ ⊢ N : \coll{A}}
{Γ ⊢ M \plusplus N : \coll{A}}

\infer[For-List]
  {Γ ⊢ L : \coll{ A }\\
   Γ, x : A ⊢ M : \coll{B}}
  {\Gamma \vdash \kw{for}\ \rcd{ x\  \texttt{\small<-}\ L }\
    M : \coll{B}}

\infer[Where]
  {Γ ⊢ M : \code{Bool}\\
   Γ ⊢ N : \coll{B}}
  {\Gamma \vdash \kw{where}\ \rcd{ M }\ N : \coll{B}}

\infer[For-Table]
  {Γ ⊢ L : \kw{table}(R)\\
   Γ, x : \rcd{R}  ⊢ M : \coll{B}}
  {\Gamma \vdash \kw{for}\ \rcd{ x\  \drarr\ L }\ M : \coll{B}}

\infer[Insert]
{Γ ⊢ L : \kw{table}(R)\\
Γ ⊢ M : \coll{\rcd{R}}}
{Γ ⊢ \kw{insert}~L~\kw{values}~{M} : \rcd{}}

\infer[Update]
{Γ ⊢ L : \kw{table}(R)\\
Γ,x:\rcd{R}  ⊢ M : \BoolTy\\
 Γ, x:\rcd{R} ⊢ N : \rcd{R}}
{Γ ⊢ \kw{update}~(x \drarr L)~\kw{where}~{M}~\kw{set}~{N} : \rcd{}}

\infer[Delete]
{Γ ⊢ L : \kw{table}(R)\\
Γ,x:\rcd{R} ⊢ M : \BoolTy}
{Γ ⊢ \kw{delete}~(x \drarr L)~\kw{where}~{M} : \rcd{}}
\end{mathpar}
  \caption{Typing rules for \Links.}
  \label{fig:links-typerules}
\end{figure*}

The type system (again a simplification of the full system) is
illustrated in Figure~\ref{fig:links-typerules}.  Many rules are
standard; we assume a typing signature $\Sigma$ mapping constants and
primitive operations to their types.  The rule for
\lstinline!query {$M$}! refers to an auxiliary judgment
$A :: \QueryType$ that essentially checks that $A$ is a valid
query result type, meaning that it is constructed using base types and
collection or record type constructors only:
\[\infer{\strut}{O ::\QueryType}\quad
\infer{[A_i :: \QueryType]_{i=1}^n}{\rcd{l_i:A_i}_{i=1}^n::\QueryType}\quad
\infer{A::\QueryType}{\coll{A}::\QueryType}\]
Similarly, the $R :: \BaseRow$ judgment ensures that the types used
in a row are all base types:
\[\infer{
\strut
}{
\cdot :: \BaseRow
}
\quad
\infer
{
R :: \BaseRow
}{
R,l:O :: \BaseRow
}
\]
The full \Links type
system also checks that the body $M$ uses only features available on
the database (and only calls functions that satisfy the same
restriction).  The rules for other query operations are
straightforward, and similar to those for monadic comprehensions in
other systems.  Finally, the rules for updates (insert, update, and
delete) are also mildly simplified; in the full system, the conditions
and update expressions are required to be database-executable
operations.  \citet{TLDI2012LindleyC} present a more complete
formalization of \Links's type system that soundly characterizes the
intended run-time behavior.

The core language of \Links we are using is a simplification of the
full language in several respects.  \Links includes a number of
features (e.g.\ recursive datatypes, XML literals, client/server
annotations, and concurrency features) that are important parts of its
Web programming capabilities but not needed to explain our
contribution.  \Links also uses a type-and-effect system to determine
whether the code inside a \lstinline!query! block is translatable to
SQL, and which functions can be called safely from query blocks.  We
use a simplified version of \Links's type system that leaves out these
effects and does not deal with polymorphism.  Our implementation
does handle these features, with some limitations discussed later.

\section{Extending Links with provenance}
\label{sec:design}

In this paper we follow a well-explored approach to modeling
provenance by propagating \emph{annotations} of various kinds. Roughly
speaking, the idea is to interpret a query using a nonstandard
semantics over data with additional annotations on fields or records.
The nonstandard semantics propagates annotations from the input to the
output in a way that is intended to convey useful information about
how the results were derived from the inputs; sometimes the semantics
is proved correct with respect to some specification of the intended
meaning.  This idea dates to Wang and Madnick's \emph{polygen}
model~\cite{VLDB1990WangM}, and is adopted in much subsequent work on
provenance in databases (see~\cite{FTDB2009CheneyCT} for a survey).

In this section we describe two extensions of \Links: \WLinks and
\LLinks which provide language support for \wpr and lineage,
respectively. For both languages, we discuss language design, syntax,
semantics, type system, and most importantly, how provenance
annotations are propagated. We discuss how to provide initial
annotations for \WLinks here, and in Section~\ref{sec:translation} for
\LLinks. For both languages, the correctness theorems are only
concerned with the faithful propagation of annotations, not what the
annotations actually are.

\subsection{\WLinks}
\lstset{language=WLinks}

\WLinks extends \Links with language support for computing the \wpr of
database queries.  The syntax shown in Figure~\ref{fig:links-syntax}
is extended as follows:
\begin{eqnarray*}
  V & \Coloneqq & \cdots \mid V^c \\
  O &\Coloneqq& \cdots \mid \Prov(O)\\
  L,M,N &\Coloneqq & \cdots \mid \kw{data}~M \mid \kw{prov}~M 
                     \mid \kw{table}~n~\kw{with}~(R)~\kw{where}~S\\
  S &\Coloneqq& \cdot \mid S, l ~\kw{prov}~s\\
  s &\Coloneqq& \kw{default} \mid M
\end{eqnarray*}
Values $V$ can be annotated with an element $c$ of some sufficiently
large set of distinguishable atomic annotations, often called
\emph{colors}.  We will use \wpr triples for colors.  That is, an
annotation consists of a triple $(R,f,i)$ where $R$ is the source
table name, $f$ is the field name, and $i$ is the row identifier.  We
introduce the type constructor $\Prov(O)$, where $O$ is a type
argument of base type.  We treat $\Prov(O)$ itself as a base type, so
that it can be used as part of a table type.  (This is needed for
initializing provenance as explained below.)  Values of type
$\Prov(O)$ are annotated values $V^c$, where the annotation consists
of a triple $(R,f,i)$ where $R$ is the source table name, $f$ is the
field name, and $i$ is the row identifier.  For example,
$42^{("QA", "a", 23)}$ represents the answer 42 which was copied from
row 23, column {\sf a}, of table {\sf QA}.  The syntax above allows
arbitrary values to be annotated; however, the type system will only
permit values of base type to be annotated.  Annotated values are not
available in source programs; only the \WLinks runtime can construct
annotated values.

\begin{figure}[htb]
  \small
  \[
    \begin{array}{rcl}
      \Sigma, \kw{prov}\,V^c & ⟶ & \Sigma, c\\
      \Sigma, \kw{data}\,V^c & ⟶ & \Sigma, V\\
    \end{array}
  \]

  \[
    \begin{array}{rcl}
      \mathcal{E} & \Coloneqq & \dots \mid \kw{prov}\,\mathcal{E} \mid \kw{data}\,\mathcal{E} \\
    \end{array}
  \]
  \caption{Additional evaluation and context rules for \WLinks.}
  \label{fig:wlinks-semantics}
\end{figure}

We add two additional keywords \lstinline!prov! and \lstinline!data! to extract from an annotated value the provenance annotation and the value itself, respectively.
We extend the semantics from Figure~\ref{fig:links-semantics} with rules for these keywords as seen in Figure~\ref{fig:wlinks-semantics}.

\begin{figure}[tbp]
\small
\begin{mathpar}
\infer [Prov]
  { Γ ⊢ M : \Prov(A)}
  { Γ ⊢ \kw{prov}~M : \code{(\StringTy, \StringTy, \IntTy)} }

\infer [Data]
  { Γ ⊢ M : \Prov(A)}
  { Γ ⊢ \kw{data}~M : A }

\infer[Table]
 { R :: \BaseRow\\
Γ ⊢ S : \ProvSpec(R)
}
 { Γ ⊢ \kw{table}\ n\ \kw{with}\ \rcd{ R }\ \kw{where}\ S : \kw{table}\rcd{R \triangleright S}}

\infer[Insert]
{Γ ⊢ L : \kw{table}(R)\\
Γ ⊢ M : \coll{\rcd{\erase{R}}}}
{Γ ⊢ \kw{insert}~L~\kw{values}~{M} : \rcd{}}

\infer[Update]
{Γ ⊢ L : \kw{table}(R)\\
Γ,x:\rcd{\erase{R}}  ⊢ M : \BoolTy\\
 Γ, x:\rcd{\erase{R}} ⊢ N : \rcd{R}}
{Γ ⊢ \kw{update}~(x \drarr L)~\kw{where}~{M}~\kw{set}~{N} : \rcd{}}

\infer[Delete]
{Γ ⊢ L : \kw{table}(R)\\
Γ,x:\rcd{\erase{R}} ⊢ M : \BoolTy}
{Γ ⊢ \kw{delete}~(x \drarr L)~\kw{where}~{M} : \rcd{}}

\\

\infer{
\strut
}{
Γ ⊢ \cdot : \ProvSpec(R)
}

\infer{
Γ ⊢ S : \ProvSpec(R)
}{
Γ ⊢ S, l~\kw{prov}~\kw{default} : \ProvSpec(R)
}

\infer{
Γ ⊢ S : \ProvSpec(R)\\
Γ ⊢ M : \rcd{ R} ~\texttt{\small ->} ~\rcd{\StringTy, \StringTy ,\IntTy}
}{
Γ ⊢ S, l~\kw{prov}~{M} : \ProvSpec(R)
}
\end{mathpar}
  \caption{Additional typing rules for \WLinks. }
  \label{fig:wlinks-typerules}

\begin{eqnarray*}
\erase{O} &=& O\\
\erase{\Prov(A)} &=& \erase{A}\\
\erase{(l_i:A_i)_{i=1}^n} &=& (l_i:\erase{A_i})_{i=1}^n\smallskip\\
R \triangleright \cdot &=&R \\
(R,l:O )\triangleright (S,l~\kw{prov}~s) &=& (R \triangleright S),l:\Prov(O)
\end{eqnarray*}
  \caption{\WLinks type erasure and augmentation. }
  \label{fig:wlinks-auxiliary}
\end{figure}

Only the \LLinks runtime can create annotated values, and it only annotates database values.
We allow programmers to indicate which columns in a database table should carry annotations and give some control over what the annotations themselves are.
To this end, we extend the syntax of table expressions to
allow a list of \emph{provenance initialization specifications} $l~\kw{prov}~s$.
A specification $s$ is either the keyword $\kw{default}$ or an
expression $M$ which is expected to be of type
$\rcd{\overline{l_i:O_i}} \arr (\StringTy,\StringTy,\IntTy)$.
This way we have three different kinds of columns:
plain columns without annotations;
columns with \emph{default} \wpr where the annotation will be the table name, column name, and the row's \lstinline!oid!; and
columns with annotations that are computed by some user-defined function that takes the table row as input.

Default \wpr can be understood as user-defined \wpr with a
compiler-generated function of the form 
\lstinline!fun (r) { (T, C,  r.oid) }! 
where \lstinline!T! and \lstinline!C! are replaced by the
table and column name, respectively.
For example, if we added default \wpr to the phone field of the Agencies table, we would execute the following function on every row, to obtain the phone numbers provenance: \lstinline!fun (a) { ("Agencies", "phone", a.oid) }!.

The typing rules for the new constructs of \WLinks are shown in
Figure~\ref{fig:wlinks-typerules}.  These rules employ an auxiliary
judgment $\Gamma \vdash S : \ProvSpec(R)$, meaning that in context
$\Gamma$, the provenance specification $S$ is valid with respect to
record type $R$.  As suggested by the typing rule, the $\kw{prov}$
keyword extracts the provenance from a value of type $\Prov(A)$, and
$\kw{data}$ extracts its data, the $A$-value.  The most complex rule
is that for the $\kw{table}$ construct.  

The rules make use of an \emph{erasure} operation $\erase{R}$ that
takes a record or base type and replaces all occurrences of $\Prov(A)$
with $A$.  The rule for typing table references also uses an auxiliary
operation $R \triangleright S$ that defines the type of the provenance
view of a table whose fields are described by $R$ and whose provenance
specification is $S$.  As for ordinary tables, we check that the
fields are of base type.  These operations are defined in
Figure~\ref{fig:wlinks-auxiliary}.

The following proofs and definitions are based on previous work by Buneman et al.~\cite{TODS2008BunemanCV} in the context of nested relational algebra.
The main correctness property of \wpr is that annotations on values are correctly propagated.
It should not be the case that we construct annotated values out of thin air.
For the propagation behavior to be correct, it does not matter what the annotations are or where they come from.
Buneman et al.\ discuss some other interesting properties which do not hold in our language.
In their work, annotations are completely abstract, and queries have no way to inspect them.
Therefore, they can show that queries are invariant under recoloring of the input.
\WLinks has the \lstinline!prov! keyword to inspect provenance, therefore we cannot expect the same to hold here.
However, we speculate that a similar property holds for sufficiently polymorphic functions.

We assume a context $\Sigma$ where values inside tables are annotated with colors.
We do not make any assumptions about these colors.
However, they are particularly useful when they are distinct.
In the case of distinct annotations on the input, we can look at the output and trace back annotated values to their source (assuming evaluation does not conjure up new annotated values out of thin air).
In Figure~\ref{fig:cso} we define the function $\mathit{cso}_\Sigma$ for finding all \emph{colored subobjects} of a \WLinks{} term.
This function allows us to find the annotations in the program and state that we do not invent any during evaluation.
Thus, if we start with a distinctly annotated database and no annotated constants, we can then guarantee that all annotated values in the result of evaluation come, without modification, directly from the database.
Theorem~\ref{thm:where-correctness} formally states this intuition of evaluation not inventing annotated values.

\newcommand\cso[1]{\mathit{cso}_\Sigma#1}
\newcommand\csop[1]{\mathit{cso}'_\Sigma#1}

\begin{figure}
  \small
  \[
    \begin{array}{lcl}
      \cso(V^a) &=& \{V^a\} \cup \cso(V)\\
      \cso(c) &=& \emptyset\\
      \cso(\texttt{[]}) &=& \emptyset\\
      \cso(\texttt{[}M\texttt{]}) &=& \cso(M)\\
      \cso(M \concat N) &=& \cso(M) \cup \cso(N) \\
      \cso((l_i = M_i)_{i=1}^n) &=& \bigcup_{i=1}^n \cso(M_i)\\
      \cso(M.l) &=& \cso(M)\\ 
      \cso(\kw{fun}\,f(x_i| _{i = 1}^n)\,M) &=& \cso(M) \\
      \cso(M(N_i | _{i=1}^n)) &=& \cso(M) \cup \bigcup_{i=1}^n \cso(N_i)\\
      \cso(\kw{var}\, x = M; N) &=& \cso(M) \cup \cso(N)\\
      \cso(\kw{if}\,(L)\,M\,\kw{else}\,N ) &=& \cso(L) \cup \cso(M) \cup \cso(N)\\
      \cso(\kw{query}\,M) &=& \cso(M)\\
      \cso(\kw{table}\, n) &=& \cso(\Sigma(n)) \\ 
      \cso(\kw{empty}(M)) &=& \cso(M)\\
      \cso(\kw{for}\,(x\,\texttt{<-}\,M)\,N ) &=& \cso(M) \cup \cso(N)\\
      \cso(\kw{for}\,(x\,\texttt{<--}\,M)\,N ) &=& \cso(M) \cup \cso(N) \\
    \end{array}
  \]
  \caption{Colored subobjects in \WLinks{} expressions.}
  \label{fig:cso}
\end{figure}




We first show a helpful lemma: the colored subobjects of a term substituted into an
  evaluation context $\mathcal{E}[M]$ can be obtained by considering
  the evaluation context $\mathcal{E}$ and term $M$ separately,
  instead.  We extend $\cso(-)$ to operate on evaluation contexts
  in the obvious way.
\begin{lemma}\label{lemma:cso-evaluation-context}
Given evaluation context $\mathcal{E}$ and term $M$, we have:
\[ \cso(\mathcal{E}[M]) = \cso(\mathcal{E}) \cup \cso(M) \]
\end{lemma}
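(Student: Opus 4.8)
The plan is to proceed by structural induction on the evaluation context $\mathcal{E}$, after extending $\cso(-)$ to contexts in the evident way: the hole contributes no colored subobjects, so $\cso([]) = \emptyset$, and every other context former collects the colored subobjects of its immediate components as a union, exactly mirroring the clause for the corresponding expression former in Figure~\ref{fig:cso}. The whole lemma then reduces to a single structural observation: on every relevant syntactic form, $\cso(-)$ is a union over the immediate subterms, and plugging $M$ into the hole merely inserts $M$ into one such subterm position, so the colored subobjects of $M$ are simply added to those of the surrounding context.

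For the base case $\mathcal{E} = []$, I have $\mathcal{E}[M] = M$ and $\cso([]) = \emptyset$, so both sides equal $\cso(M)$. For the inductive step I would note that each context former (from Figure~\ref{fig:links-semantics}, as extended by Figure~\ref{fig:wlinks-semantics}) is an expression former with one distinguished subterm replaced by a subcontext $\mathcal{E}'$. Taking the application context $\mathcal{E} = \mathcal{E}'(M_1,\dots,M_n)$ as the representative case, $\mathcal{E}[M] = (\mathcal{E}'[M])(M_1,\dots,M_n)$, so by the application clause of $\cso(-)$ and the induction hypothesis on $\mathcal{E}'$,
\[ \cso(\mathcal{E}[M]) = \cso(\mathcal{E}'[M]) \cup \textstyle\bigcup_{i=1}^n \cso(M_i) = \big(\cso(\mathcal{E}') \cup \cso(M)\big) \cup \textstyle\bigcup_{i=1}^n \cso(M_i). \]
Since $\cso(\mathcal{E}) = \cso(\mathcal{E}') \cup \bigcup_{i=1}^n \cso(M_i)$ by the extended definition, associativity and commutativity of $\cup$ deliver $\cso(\mathcal{E}[M]) = \cso(\mathcal{E}) \cup \cso(M)$. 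Every other former — record, projection, conditional, $\kw{empty}$, the two $\kw{for}$ forms, concatenation, $\kw{insert}$, $\kw{update}$, $\kw{delete}$, and the \WLinks{} formers $\kw{prov}$ and $\kw{data}$ — goes through by the identical manipulation.

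Because the argument is bookkeeping rather than conceptual, I expect the only genuine obstacles to be matters of completeness. First, $\cso(-)$ must be defined on every former that can surround the hole; Figure~\ref{fig:cso} only lists representative cases, so I would record that the remaining formers ($\kw{prov}$, $\kw{data}$, $\kw{insert}$, $\kw{update}$, $\kw{delete}$) are given the same homomorphic clause, namely the union of the colored subobjects of their immediate subterms, which lets the induction step apply uniformly. Second, several contexts place the hole under a binder, such as $\kw{for}\,(x\,\texttt{<-}\,\mathcal{E})\,M$ and its table variant; this is harmless, since $\cso(-)$ never inspects variable names or binding structure and annotated values $V^c$ are closed, so no capture or scoping issue can arise. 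With those two points dispatched, the result follows from the idempotence, commutativity, and associativity of set union.
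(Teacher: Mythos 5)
Your proposal is correct and takes essentially the same route as the paper: structural induction on $\mathcal{E}$ with $\cso([])=\emptyset$, relying on the fact that every $\cso$ clause is a union over immediate subterms. The paper's own proof states only this and dismisses the inductive cases as straightforward, so your additional bookkeeping (the worked application case, the note on extending $\cso(-)$ to the remaining context formers, and the observation that binders are harmless) simply fills in details the paper elides.
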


\begin{proof}
  Proof by induction on the structure of the evaluation context.
  In the case for $\mathcal{E}=[]$ we take the colored subobjects of a hole to be the empty set.
  The other cases are straightforward.




\end{proof}

\begin{theorem}[Correctness of where-provenance]\label{thm:where-correctness}
  Let $M$ and $N$ be \WLinks terms, and let $\Sigma$ be a context that provides annotated table rows. We have:
  \[
    \Sigma, M \longrightarrow \Sigma, N \Rightarrow \cso(N) \subseteq \cso(M) 
  \]
\end{theorem}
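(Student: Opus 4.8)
The plan is to prove the theorem by induction on the derivation of $\Sigma, M \longrightarrow \Sigma, N$, treating each evaluation axiom of Figures~\ref{fig:links-semantics} and~\ref{fig:wlinks-semantics} as a base case and the single evaluation-context rule as the inductive step. A useful first observation is that the hypothesis fixes the state $\Sigma$ on both sides of the arrow, so the $\kw{insert}$, $\kw{delete}$ and $\kw{update}$ axioms---the only rules that alter the state---need never be considered. This restriction is convenient precisely because $\cso(-) = \mathit{cso}_\Sigma(-)$ is itself parameterised by $\Sigma$ (note $\cso(\kw{table}\,n) = \cso(\Sigma(n))$); confining attention to state-preserving steps lets us compare colored subobjects against one fixed $\Sigma$.

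Before the case analysis I would establish the main technical ingredient, a substitution lemma asserting that substitution introduces no colors beyond those already present:
\[ \cso(M[x \coloneqq V]) \subseteq \cso(M) \cup \cso(V). \]
This follows by a routine induction on the structure of $M$: a variable contributes no colored subobjects, so $\cso(x[x \coloneqq V]) = \cso(V)$ and $\cso(y[x \coloneqq V]) = \emptyset$ for $y \neq x$; constants and the empty list are trivial; and every remaining construct is handled homomorphically, since $\cso$ merely unions the colored subobjects of its immediate subterms. I would then generalise it to the simultaneous substitution $M[f \coloneqq \kw{fun}\,f(x_i)\,M, x_i \coloneqq V_i]$ used by the $\beta$-rule, using $\cso(\kw{fun}\,f(x_i)\,M) = \cso(M)$ so that the substituted function body contributes nothing new.

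With the lemma in hand the axiom cases are direct computations comparing $\cso$ of the redex and its contractum (using the evident extension $\cso(\kw{prov}\,M) = \cso(\kw{data}\,M) = \cso(M)$ consistent with Figure~\ref{fig:cso}). The substitution-driven rules ($\beta$-reduction, $\kw{var}$, and $\kw{for}\,(x\,\texttt{<-}\,[V])\,M$) follow immediately from the substitution lemma. The structural rules give equalities rather than mere inclusions: for example both $\cso(\kw{for}\,(x\,\texttt{<-}\,V \concat W)\,M)$ and $\cso$ of its two-branch expansion equal $\cso(V) \cup \cso(W) \cup \cso(M)$, and $\cso(\kw{for}\,(x\,\texttt{<--}\,\kw{table}\,n)\,M)$ equals $\cso$ of its contractum because $\cso(\kw{table}\,n) = \cso(\Sigma(n))$. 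Rules that discard subterms ($\kw{if}$, projection, $\kw{empty}$, $\kw{for}$ over $\texttt{[]}$) yield a contractum whose colors are a subset of the redex's. The two \WLinks{} rules are the most instructive: $\kw{prov}\,V^c \longrightarrow c$ replaces the annotated value by the bare color triple $c$, which is built from base constants and so has $\cso(c) = \emptyset \subseteq \cso(V^c)$; and $\kw{data}\,V^c \longrightarrow V$ strips the annotation, where $\cso(V) \subseteq \{V^c\} \cup \cso(V) = \cso(V^c)$. Thus the colored value $V^c$ may vanish but the inclusion is never violated, capturing exactly that $\kw{data}$ fabricates no new provenance.

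Finally, for the evaluation-context rule, suppose $\Sigma, \mathcal{E}[M_0] \longrightarrow \Sigma, \mathcal{E}[M_0']$ is derived from a premise $\Sigma, M_0 \longrightarrow \Sigma', M_0'$; since the conclusion leaves the state unchanged we must have $\Sigma' = \Sigma$, so the inner step is itself state-preserving and the induction hypothesis gives $\cso(M_0') \subseteq \cso(M_0)$. Then, by Lemma~\ref{lemma:cso-evaluation-context},
\[ \cso(\mathcal{E}[M_0']) = \cso(\mathcal{E}) \cup \cso(M_0') \subseteq \cso(\mathcal{E}) \cup \cso(M_0) = \cso(\mathcal{E}[M_0]), \]
which closes the case. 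The only genuine work is the substitution lemma---and in particular phrasing it generally enough to accommodate simultaneous substitution for recursive $n$-ary functions; I expect that to be the main obstacle, as everything else is bookkeeping over a finite list of axioms.
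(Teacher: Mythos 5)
Your proposal is correct and follows essentially the same route as the paper: induction on the evaluation derivation, a substitution property for the $\beta$/$\kw{var}$/singleton-$\kw{for}$ cases, direct computation of $\cso$ for the remaining axioms, and Lemma~\ref{lemma:cso-evaluation-context} for the context rule. If anything you are slightly more thorough than the paper's appendix, which leaves the substitution inclusion implicit and omits the $\kw{prov}\,V^c$ and $\kw{data}\,V^c$ cases that you handle explicitly.
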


\begin{proof}
  Proof by induction on the derivation of the evaluation relation $\longrightarrow$.
  We show some representative cases here, the full proof is in \ref{sec:where-correctness-proof}.

  \begin{itemize}
  \item Case $\kw{for}\,(x\,\texttt{<-}\,\texttt{[]})\,M ⟶ \texttt{[]}$:
$    \cso(\texttt{[]})
     = \emptyset  \subseteq \cso(\kw{for}\,(x\,\texttt{<-}\,\texttt{[]})\,M)$

  \item Case $\kw{for}\,(x\,\texttt{<-}\,\texttt{[}V\texttt{]})\,M ⟶ M[x\coloneqq V]$:
  \begin{align*}
    \cso(M[x\coloneqq V])
    & \subseteq \cso(M) \cup \cso(V) \\
    & = \cso(\kw{for}\,(x\,\texttt{<-}\,\texttt{[}V\texttt{]})\,M)
  \end{align*}

  \item Case $\kw{for}\,(x\,\texttt{<-}\,V \concat W)\,M ⟶ (\kw{for}\,(x\,\texttt{<-}\, V)\,M) \concat (\kw{for}\,(x\,\texttt{<-}\, W)\,M)$:
  \begin{align*}
    \cso(\kw{for}\,(x\,\texttt{<-}\,V \concat W)\,M)
    & = \cso(V \concat W) \cup \cso(M) \\
    & = \cso(V) \cup \cso(W) \cup \cso(M) \\
    & = \cso((\kw{for}\,(x\,\texttt{<-}\, V)\,M) \concat (\kw{for}\,(x\,\texttt{<-}\, W)\,M))
  \end{align*}
    
  \item Case $M ⟶ M' \Rightarrow \mathcal{E}[M] ⟶ \mathcal{E}[M']$ (evaluation step inside a context):
    \begin{align*}
      \cso(\mathcal{E}[M'])
      & = \cso(\mathcal{E}) \cup \cso(M') & \text{Lemma~\ref{lemma:cso-evaluation-context}} \\
      & \subseteq \cso(\mathcal{E}) \cup \cso(M) & \text{IH} \\
      & = \cso(\mathcal{E}[M]) & \text{Lemma~\ref{lemma:cso-evaluation-context}}
    \end{align*}
  \end{itemize}
\end{proof}


\subsection{Lineage}
\lstset{language=LLinks}
\newcommand\linarr{\longrightarrow_\mathsf{L}}

\LLinks adds the keyword \lstinline!lineage! to \Links.
Like the keyword \lstinline!query!, it is followed by a block of code that will be translated into SQL and executed on the database.
The \lstinline!query! keyword only affects where and how the evaluation takes place.
The result is the same as if database tables were lists in memory.
The \lstinline!lineage! keyword also triggers translation of the following code block into SQL\@.
However, the query is rewritten to not only compute the result, but
every row of the result is annotated  with its lineage.
The syntax is extended as follows:
\begin{eqnarray*}
  L,M,N &\Coloneqq& \cdots \mid\kw{lineage}\{M\}
\end{eqnarray*}
The expression \lstinline!lineage {$M$}! is similar to
\lstinline!query {$M$}!, in that $M$ must be an expression that can be
executed on the database (that is, terminating and side-effect free;
this is checked by \Links's effect type system just as for
\lstinline!query {M}!).  If $M$ has type $\coll{A}$ (which must be an
appropriate query result type) then the type of the result of
\lstinline!lineage {$M$}!  will be $𝔏⟦\coll{A}⟧$, where $𝔏⟦-⟧$ is a
type translation that adjusts the types of collections $\coll{A}$ to
allow for lineage, as shown in
Figures~\ref{fig:llinks-type-translation} and~\ref{fig:llinks-typing}.

\begin{figure}[tb]
  \begin{eqnarray*}
\Lin(A) &=& (\mathsf{data}:A,\mathsf{prov}:\coll{(\StringTy,\IntTy)})\\
  𝔏⟦O⟧ &=& O\\
  𝔏⟦A\arr B⟧ &=& 𝔏⟦A⟧ \arr 𝔏⟦B ⟧\\
  𝔏⟦\rcd{l_i:A_i}_{i=1}^n⟧ &=& \rcd{l_i:𝔏⟦A_i⟧}_{i=1}^n\\
  𝔏⟦\coll{A}⟧ &=& \coll{\Lin(𝔏⟦A⟧)}\\
  𝔏⟦\kw{table}(R)⟧ &=& 𝔏⟦\coll{\rcd{R}} ⟧
  \end{eqnarray*}
\caption{Lineage type translation}
\label{fig:llinks-type-translation}
  \begin{mathpar}
\infer[Lineage]
{\Gamma \vdash M : \coll{A} \\
A :: \QueryType}
{\Gamma \vdash \kw{lineage}~\{M\} :  𝔏⟦\coll{A}⟧}
\end{mathpar}
  \caption{Additional typing rule for \LLinks}
\label{fig:llinks-typing}
\end{figure}

\begin{figure}[tb]
  \small
  \begin{mathpar}
    \infer[]
    {\hat{\Sigma}, \mathit{annotate}(M) \linarr^* \hat{\Sigma}, \hat{L} }
    {\Sigma, \kw{lineage}\,M \longrightarrow \Sigma, \mathit{a2d}(\hat{L})}
  \end{mathpar}

  \begin{align*}
    \mathit{annotate}(\texttt{[]}) & = \texttt{[]} \\
    \mathit{annotate}(\texttt{[$V$]}) & = \texttt{[}\mathit{annotate}(V)\texttt{]}^\emptyset \\
    \mathit{annotate}(V \concat W) & = \mathit{annotate}(V) \concat \mathit{annotate}(W) \\
\\
    \mathit{a2d}(\texttt{[]}) & = \texttt{[]} \\    
    \mathit{a2d}(\texttt{[}V\texttt{]}^{\{a_1, \dots, a_n\}}) & = \texttt{[}(\code{data}=\mathit{a2d}(V), \code{prov}=\texttt{[}a_1, \dots, a_n\texttt{]})\texttt{]} \\
    \mathit{a2d}(V \concat W) & = \mathit{a2d}(V) \concat \mathit{a2d}(W)
  \end{align*}
  
  \caption{\LLinks semantics.}
  \label{fig:llinks-semantics}
\end{figure}

A \lstinline!lineage! block evaluates in one step to its result, as can be seen in Figure~\ref{fig:llinks-semantics}.
The result is determined by a second evaluation relation that is only used ``inside'' lineage blocks: $\linarr$.
The language which $\linarr$ operates on is \LLinks, except that list values are replaced by a variant of lists, $\hat{L}$, where every list element is annotated with a set of colors:

\begin{eqnarray*}
  V &\Coloneqq& \cdots \mid \hat{L} \\
  \hat{L} &\Coloneqq& \texttt{[]} \mid \texttt{[}V\texttt{]}^a \mid
                      \hat{L} \concat \hat{L}\\
M& \Coloneqq& \dots \mid M^{\cup b}
\end{eqnarray*}

Note how the set of annotations $a$ is on the singleton list constructor, not the actual element value as you might expect.
We use annotations to track lineage, which describes \emph{why} the value, or row, is in the result.
Lineage is not concerned with what the value actually is.

We represent lineage as a list of rows in the database and identify rows by their table name and row number.
Every occurrence of the list type constructor in the type of a lineage query result is replaced by a list of records of data and its provenance. For example, if a \lstinline!query! block has type \lstinline![Bool]!, the result of the same code in a \lstinline!lineage! block has type \lstinline![(data: Bool, prov: [(String, Int)])]!.

There are two functions for going from \LLinks values to annotated values used inside \lstinline!lineage! blocks, and back.
The first function is $\mathit{annotate}$, which recursively annotates \LLinks lists with empty lineage annotations.
We assume an extension of this function to non-list values and arbitrary \LLinks terms in the obvious way.
Only rows in database tables will have nonempty lineage annotations, provided by an extended context $\hat{\Sigma}$.
The second function is $\mathit{a2d}$, which recursively transforms annotated lists into plain data \LLinks lists.
Nonlist values are traversed in the obvious way.
Every annotated list element will be transformed into a record with \lstinline!data! and \lstinline!prov! fields.
The \lstinline!prov! field will hold the lineage annotations, a set of colors, as a list.
Here we assume that colors are \LLinks values.  In practice they will be pairs of table name and row number; in theory we could use anything and define one more function to go from color to \LLinks value.


\begin{figure}[tb]
  \small

  \begin{align*}
    \hat\Sigma, \texttt{[]}^{\cup b} & \linarr \hat\Sigma, \texttt{[]} \\
    \hat\Sigma, (\texttt{[} V \texttt{]}^a)^{\cup b} & \linarr \hat\Sigma, \texttt{[} V \texttt{]}^{a \cup b} \\
    \hat\Sigma, (V \concat W)^{\cup b} & \linarr \hat\Sigma, V^{\cup b} \concat W^{\cup b} \\
    \hat\Sigma, (\kw{fun}\,f(x_i|_{i=0}^n)\,M)(V_i|_{i=0}^n) & \linarr \hat\Sigma, M[x_i \coloneqq V_i]_{i=0}^n \\
    \hat\Sigma, \kw{var}\,x = V; M & \linarr \hat\Sigma, M[x \coloneqq V] \\
    \hat\Sigma, \kw{for}\,(x\,\texttt{<- []})\,M & \linarr \hat\Sigma, \texttt{[]} \\
    \hat\Sigma, \kw{for}\,(x\,\texttt{<-}\,\texttt{[}V\texttt{]}^a)\, M & \linarr \hat\Sigma, (M[x \coloneqq V])^{\cup a} \\
    \hat\Sigma, \kw{for}\,(x\,\texttt{<-}\,V \concat W)\,M & \linarr \hat\Sigma, (\kw{for}\,(x\,\texttt{<-}\,V)\,M) \concat \kw{for}\,(x\,\texttt{<-}\,W)\,M \\
    \hat\Sigma, \kw{for}\,(x\,\texttt{<--}\,\kw{table}\,t)\, M & \linarr \hat\Sigma, \kw{for}\,(x\,\texttt{<-}\,\hat{\Sigma}(t))\,M \\
    \hat\Sigma, \kw{query}(V) & \linarr \hat\Sigma, V \\
    \hat\Sigma, \kw{if}(\kw{true})\,M\,\kw{else}\,N & \linarr \hat\Sigma, M \\
    \hat\Sigma, \kw{if}(\kw{false})\,M\,\kw{else}\,N & \linarr \hat\Sigma, N \\
    \hat\Sigma, (l_i = V_i)_{i=1}^n.l_k & \linarr \hat\Sigma, V_k \\
  \end{align*}
  \[
    \begin{array}{rcl}
    \mathcal{E} & \Coloneqq & \dots \mid \mathcal{E}^{\cup b} \\
    \end{array}
  \]
  \caption{Propagation of lineage annotations.}
  \label{fig:llinks-lineage-semantics}
\end{figure}

Evaluation inside \lstinline!lineage! blocks is almost the same as evaluation outside.
A \lstinline!lineage! block is similar to a \lstinline!query! block in that it can contain only pure, nonrecursive functions, and no database updates.
  We do not support \lstinline!empty! inside lineage blocks, because it can lead to nonmonotonic queries.
Figure~\ref{fig:llinks-lineage-semantics} shows the evaluation rules.
The major differences from regular evaluation are in the treatment of \lstinline!for! comprehensions and the new syntax $M^{\cup b}$.
A table comprehension takes the table values from an annotated signature $\hat{\Sigma}$, which maps tables to lists with lineage annotations.
A \lstinline!for! comprehension over a singleton list adds the singleton's annotation to all of the elements in the output list.
For this use alone we introduce the new type of expression $M^{\cup b}$.
It takes a term and a set of annotations, evaluates the term to a list value, and adds the annotations.
This is not syntax intended to be used by the programmer.

\begin{figure}[tb]
  \small
  \begin{align*}
    \| \texttt{[} M \texttt{]}^a \| & = a \cup \| M \| \\
    \| \texttt{[]} \| & = \emptyset \\
    \| M \concat N \| & = \| M \| \cup \| N \| \\
    \| M^{\cup b}\| & = b \cup \| M \| \\
    \| \kw{table}\, t \| & = \| \hat{\Sigma}(t) \| \\
    \| \kw{for}\,(x\,\texttt{<-}\,M)\,N\| & = \| M \| \cup \| N \| \\
  \end{align*}
  
  \begin{align*}
    \texttt{[} M \texttt{]}^a|_b & = \begin{cases}
      \texttt{[} M|_b \texttt{]}^a & \text{if } a \subseteq b \\
      \texttt{[]} & \text{otherwise}
    \end{cases} \\
    \texttt{[]}|_b & = \texttt{[]} \\
    (M \concat N)|_b & = M|_b \concat N|_b \\
    M^{\cup a}|_b & = \begin{cases}
      (M|_b)^{\cup a} & \text{if } a \subseteq b\\
      \texttt{[]} & \text{otherwise}
    \end{cases}\\
    \kw{table}\, t|_b & = \kw{table}\, t \\
    (\kw{for}\,(x\,\texttt{<-}\,M)\,N)|_b & = \kw{for}\,(x\,\texttt{<-}\,M|_b)\,N|_b
  \end{align*}

  \begin{mathpar}
    \infer[]
    { }
    {V \sqsubseteq V}

    \infer[]
    { }
    {\texttt{[]} \sqsubseteq L}

    \infer[]
    {V \sqsubseteq V'}
    {\texttt{[} V \texttt{]}^b \sqsubseteq \texttt{[} V' \texttt{]}^b}

    \infer[]
    {V \sqsubseteq V' \\ W \sqsubseteq W' }
    {V \concat W \sqsubseteq V' \concat W'}

    \infer[]
    {\forall 1 \leq i \leq n: \\
      l_i = l_i'
      \\ V_i \sqsubseteq V_i'}
    {(l_i = V_i)_{i=1}^n \sqsubseteq (l_i' = V_i')_{i=1}^n}
  \end{mathpar}
  
  \caption{Auxiliary definitions to collect lineage, restrict values, and find sublists.}
  \label{fig:llinks-supporting-definitions}
\end{figure}

Lineage of a query result tells us which elements of the input were responsible for each element of the output to exist.
If we run the same query again, but on only that part of the input that was mentioned in the lineage annotations, we should get the same output.
Nonmonotonic queries, that is queries that use aggregations,
emptiness tests, or set difference, cause issues here:
For example consider the query that selects everything from table $a$ if table $b$ is empty.
Every row in the result would be annotated with a corresponding row in $a$.
One would also need to record somehow the fact that $b$ was empty.
We could annotate whole tables in addition to individual rows but this would complicate the annotation model.
For this work, we chose to only consider monotonic queries.

In order to state the lineage correctness property formally, we need three auxiliary definitions from Figure~\ref{fig:llinks-supporting-definitions}.
We only show the most relevant cases here, but extend both functions to the entire language in the obvious way. The full definitions can be found in \ref{def:lineage-aux-full}.
The function $\| \cdot \|$ collects all lineage annotations mentioned in a value and is extended to \LLinks terms.
The function $\cdot|_b$ restricts values, in particular list elements, to those annotated with a subset of annotations $b$.
We extend this to \LLinks terms in the obvious way and to annotated contexts such that tables mentioned in a restricted context $\hat{\Sigma}|_b$ do not contain rows which are not in $b$.
Note that this function always preserves list literals and values originating in the surrounding program because those are annotated with empty lineage.
Finally we have the recursive sublist relation $\sqsubseteq$.
For example \lstinline![(a = [2])]! $\sqsubseteq$ \lstinline![(a = [1]), (a = [2, 3])]!.

Suppose a monotonic \LLinks query $q$ evaluates, inside a lineage block, to an annotated value $\hat{v}$ in a context $\hat{\Sigma}$.
For every part $\hat{p}$ of the value $\hat{v}$ we can obtain a smaller context $\hat{\Sigma}|_{\|\hat{p}\|}$ by erasing all values from the original context $\hat{\Sigma}$ which are not mentioned in $\hat{p}$.
The lineage annotations are correct if every part $\hat{p} \sqsubseteq \hat{v}$ of the output $\hat{v}$ is also a part of the output $\hat{v}'$ obtained by evaluating the same query $q$ in the restricted context $\hat{\Sigma}|_{\|\hat{p}\|}$. 



\begin{theorem}\label{thm:lin-correctness-ind}
  Given monotonic terms $M$ and $N$, a context $\hat\Sigma$, and a set of annotations $c$, we have
  \[
    \hat\Sigma, M \linarr \hat\Sigma, N \quad \Rightarrow \quad M|_c = N|_c \quad \vee \quad \hat{\Sigma}|_c, M|_c \linarr \hat{\Sigma}|_c, N|_c
  \]
\end{theorem}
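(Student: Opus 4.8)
The plan is to prove the statement by induction on the derivation of $\hat\Sigma, M \linarr \hat\Sigma, N$, that is, by case analysis on which rule of Figure~\ref{fig:llinks-lineage-semantics} justifies the step. For each axiom I would compute both $M|_c$ and $N|_c$ from the defining clauses of $\cdot|_b$ in Figure~\ref{fig:llinks-supporting-definitions}, and then decide which disjunct to establish, almost always by splitting on whether the annotation set attached to the redex is contained in $c$. The recurring pattern is: when the relevant annotations lie inside $c$, restriction preserves enough structure that the \emph{same} shape of redex survives and the second disjunct $\hat\Sigma|_c, M|_c \linarr \hat\Sigma|_c, N|_c$ holds by replaying the corresponding rule on the restricted terms; when they do not, restriction collapses the affected subterm to $\texttt{[]}$ and either both sides become equal (first disjunct) or the restricted left-hand side fires a \emph{different}, degenerate rule (for instance $\texttt{[]}^{\cup b} \linarr \texttt{[]}$, or the empty-list \lstinline!for! rule) that nevertheless reaches $N|_c$ in one step.

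Before the main induction I would establish three auxiliary facts. First, a substitution lemma stating that restriction commutes with value substitution, $(M[x \coloneqq V])|_c = M|_c[x \coloneqq V|_c]$, needed for the $\beta$-like rules (function application, $\kw{var}$, and the singleton \lstinline!for! rule); this holds because the subset test $a \subseteq c$ guarding each list constructor is unaffected by substituting into subterms. Second, the definitional identity $\hat\Sigma|_c(t) = \hat\Sigma(t)|_c$ for the table-comprehension rule, so that restricting the context and then looking up $t$ agrees with looking up $t$ and then restricting. Third, a commutation lemma between restriction and evaluation contexts, analogous to Lemma~\ref{lemma:cso-evaluation-context}, saying that $(\mathcal{E}[M])|_c$ either factors as $\mathcal{E}^{*}[M|_c]$ for a restricted context $\mathcal{E}^{*}$ (when no enclosing $\cdot^{\cup b}$ wrapper severs the hole) or is independent of $M$ (when some wrapper has $b \not\subseteq c$, forcing the result to $\texttt{[]}$).

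With these in hand, the congruence rule $\mathcal{E}[M] \linarr \mathcal{E}[M']$ is the case I expect to be the main obstacle, since it is the only place where the new context former $\mathcal{E}^{\cup b}$ interacts with restriction and one must track carefully whether restriction keeps the hole live. If every $\cdot^{\cup b}$ layer surrounding the hole satisfies $b \subseteq c$, the commutation lemma gives $(\mathcal{E}[M])|_c = \mathcal{E}^{*}[M|_c]$ and $(\mathcal{E}[M'])|_c = \mathcal{E}^{*}[M'|_c]$, so the induction hypothesis applied to $M \linarr M'$ together with closure of $\linarr$ under $\mathcal{E}^{*}$ yields the second disjunct (the subcase $M|_c = M'|_c$ feeding through congruence to give $(\mathcal{E}[M])|_c = (\mathcal{E}[M'])|_c$). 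If some surrounding layer has $b \not\subseteq c$, both filled contexts restrict to $\texttt{[]}$ and the first disjunct holds immediately.

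Finally, I would flag one delicate point the proof must commit to: in the concatenation cases (e.g.\ $(V \concat W)^{\cup b}$ with $b \not\subseteq c$) the two restricted terms come out as $\texttt{[]}$ and $\texttt{[]} \concat \texttt{[]}$, which are equal only if list expressions are identified up to the collection-monoid laws (unit and associativity of $\concat$). I would therefore read the equality $M|_c = N|_c$ modulo these laws, consistent with the multiset reading of collections used throughout, so that such padding concatenations of empty lists count as equal.
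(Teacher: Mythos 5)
Your proposal follows essentially the same route as the paper's proof: induction on the derivation of $\linarr$, with each case split on whether the relevant annotation set is contained in $c$, replaying the rule on the restricted terms when it is, and collapsing the affected subterm to $\texttt{[]}$ (yielding one of the two disjuncts) when it is not; the substitution and context-restriction facts you isolate as separate lemmas are used implicitly in the paper's singleton-$\kw{for}$ case, and your explicit treatment of the congruence rule via a restriction/evaluation-context commutation lemma fills in a case the paper omits. The one place you genuinely go beyond the paper is the observation about $(V \concat W)^{\cup b}$ with $b \not\subseteq c$, where the two sides restrict to $\texttt{[]}$ and $\texttt{[]} \concat \texttt{[]}$ respectively: this is a real subtlety, since neither disjunct of the theorem holds up to syntactic equality there, the paper's proof (which presents only the two $M^{\cup a}$ cases) does not address it, and your proposal to read $M|_c = N|_c$ modulo the unit and associativity laws for $\concat$ is the natural repair.
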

\begin{proof} By induction on the evaluation relation $\linarr$.
  We need the alternative $M|_c = N|_c$ because sometimes restriction can yield the empty list, on both sides, in which case there is no evaluation step to be made.
  The two interesting cases are the singleton \lstinline!for! comprehension, which introduces $M^{\cup a}$, and adding annotations to a singleton list, which eliminates $M^{\cup a}$.

  Case $\hat\Sigma, \kw{for}\,(x\,\texttt{<-}\,\texttt{[}V\texttt{]}^a)\,M \linarr \hat\Sigma, M[x \coloneqq V]^{\cup a}$:\\
  We have two cases, depending on $c$. If $a \subseteq c$ then
  $(\kw{for}\,(x\,\texttt{<-}\,\texttt{[}V\texttt{]}^a)\,M)|_c = \kw{for}\,(x\,\texttt{<-}\,\texttt{[}V|_c\texttt{]}^a)\,(M|_c)$ and therefore
  \[ \hat\Sigma|_c, \kw{for}\,(x\,\texttt{<-}\,\texttt{[}V|_c\texttt{]}^a)\,(M|_c)
    \linarr
    \hat\Sigma|_c, (M|_c[x \coloneqq V|_c])^{\cup a}
  \]
  Furthermore, we have $(M|_c[x \coloneqq V|_c])^{\cup a} = ((M[x \coloneqq V])|_c)^{\cup a}$, which can be shown by induction, but only states that $\cdot |_c$ is well-behaved with respect to substitution, and $((M[x \coloneqq V])|_c)^{\cup a} = (M[x \coloneqq V])^{\cup a}|_c$ by definition of $M^{\cup a} |_c$ in the case that $a \subseteq c$, and therefore
  \[
    \hat\Sigma|_c, (\kw{for}\,(x\,\texttt{<-}\,\texttt{[}V\texttt{]}^a)\,M)|_c \linarr \hat\Sigma|_c, (M[x \coloneqq V]^{\cup a})|_c
  \]
  Otherwise $a \not\subseteq c$ and on the left hand side we have
  \[
    (\kw{for}\,(x\,\texttt{<-}\,\texttt{[}V\texttt{]}^a)\,M)|_c = \kw{for}\,(x\,\texttt{<-}\,(\texttt{[}V\texttt{]}^a)|_c)\,(M|_c) = \kw{for}\,(x\,\texttt{<-}\,\texttt{[]})\,(M|_c)
  \] which evaluates to the empty list:
  \[
    \hat\Sigma|_c, \kw{for}\,(x\,\texttt{<-}\,\texttt{[]})\,(M|_c) \linarr \hat\Sigma|_c, \texttt{[]}
  \]
  Since $(M[x \coloneqq V]^{\cup a})|_c = \texttt{[]}$ we can conclude that \[\hat\Sigma|_c, (\kw{for}\,(x\,\texttt{<-}\,\texttt{[}V\texttt{]}^a)\,M)|_c \linarr \hat\Sigma|_c, (M[x \coloneqq V]^{\cup a})|_c\]

  Case $\hat\Sigma, (\texttt{[}V\texttt{]}^b)^{\cup a} \linarr \hat\Sigma, \texttt{[}V\texttt{]}^{a \cup b}$:\\
  Depending on $c$ we, again, have two cases.
  If $a \subseteq c$ then $(\texttt{[}V\texttt{]}^b)^{\cup a}|_c = (\texttt{[}V\texttt{]}^b|_c)^{\cup a}$.
  Now, if $b \subseteq c$ then $\texttt{[}V\texttt{]}^b|_c = \texttt{[}V|_c\texttt{]}^b$ and we have an evaluation step $\hat\Sigma|_c, (\texttt{[}V|_c\texttt{]}^b)^{\cup a} \linarr \hat\Sigma|_c, \texttt{[}V|_c\texttt{]}^{a \cup b}$ where the term on the right hand side is equal to $\texttt{[}V\texttt{]}^{a \cup b}|_c$.
  Otherwise, $b \not\subseteq c$ and $\texttt{[}V\texttt{]}^b|_c = \texttt{[]}$ but on the right hand side we also have $\texttt{[}V\texttt{]}^{a \cup b}|_c = \texttt{[]}$. In other words, by restricting with $c$ we get the same value on both sides.
  We reach the same conclusion in the case that $a \not\subseteq c$.
\end{proof}

\begin{corollary}\label{cor:lin-correctness-ind}
  By repeated application of Theorem~\ref{thm:lin-correctness-ind} we have
  \[
    \hat\Sigma, M \linarr^j \hat\Sigma, N
    \quad\Rightarrow\quad
    \hat{\Sigma}|_c, M|_c \linarr^k \hat{\Sigma}|_c, N|_c
  \]
  where $j,k \in \mathbb{N}$ and $k \leq j$.
  
\end{corollary}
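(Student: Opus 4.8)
The plan is to prove the corollary by a straightforward induction on the number of steps $j$, discharging each individual step with Theorem~\ref{thm:lin-correctness-ind}. One observation makes the chaining clean: lineage evaluation $\linarr$ never mutates the store (lineage blocks forbid updates), so the context stays $\hat\Sigma$ on both sides of every step, and its restriction is uniformly $\hat\Sigma|_c$ throughout. This means the single-step theorem can be applied repeatedly against one fixed restricted context rather than a varying one.

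For the base case $j = 0$ we have $M = N$, hence $M|_c = N|_c$, and we take $k = 0$ so that $\hat\Sigma|_c, M|_c \linarr^0 \hat\Sigma|_c, N|_c$ holds reflexively with $k \le j$. For the inductive step I would decompose the $j$-step reduction at its head as $\hat\Sigma, M \linarr \hat\Sigma, M' \linarr^{j-1} \hat\Sigma, N$. Applying Theorem~\ref{thm:lin-correctness-ind} to the first step gives two alternatives: either $M|_c = M'|_c$, or $\hat\Sigma|_c, M|_c \linarr \hat\Sigma|_c, M'|_c$. Applying the induction hypothesis to the remaining $j-1$ steps yields $\hat\Sigma|_c, M'|_c \linarr^{k'} \hat\Sigma|_c, N|_c$ for some $k' \le j-1$. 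In the first alternative $M|_c = M'|_c$, so the same $k'$-step reduction already witnesses the goal with $k = k' \le j-1 \le j$; in the second alternative we prepend the single restricted step to obtain a reduction of length $k = k'+1 \le j$. In both cases $k \le j$, as required, and the step-count bound falls out directly from this case split.

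The only real subtlety is ensuring Theorem~\ref{thm:lin-correctness-ind} is actually applicable at every intermediate step, since its hypothesis requires the terms to be monotonic. I would therefore first establish (or invoke) a preservation-of-monotonicity lemma: if $M$ is monotonic and $\hat\Sigma, M \linarr \hat\Sigma, M'$, then $M'$ is again monotonic, so each $M'$ reached along the sequence is a legal argument both to the theorem and to the induction hypothesis. This is the main obstacle, but it should be routine, as the $\linarr$ rules introduce no new emptiness tests, aggregations, or set differences. Everything else is bookkeeping of the step count $k$ against $j$.
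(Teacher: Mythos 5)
Your proof is correct and is essentially the argument the paper intends: the corollary is justified there only by the phrase ``by repeated application of Theorem~\ref{thm:lin-correctness-ind}'', and your induction on $j$ with the case split on the two disjuncts (equality of restrictions versus a genuine restricted step) is exactly that repeated application made explicit, including the bookkeeping $k \le j$. Your added remark that monotonicity must be preserved along the reduction sequence is a legitimate point the paper leaves implicit, but it does not change the route.
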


\begin{lemma}\label{lem:pinvp}
  Given a value $\hat{v}$ and a subvalue $\hat{p} \sqsubseteq \hat{v}$ of that value, we have
  \[\hat{p} \sqsubseteq \hat{v}|_{\|\hat{p}\|}\]
\end{lemma}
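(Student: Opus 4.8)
The plan is to prove this by induction on the derivation of $\hat{p} \sqsubseteq \hat{v}$, but first I would \emph{strengthen} the statement so that the induction goes through. The difficulty with the literal statement is that in the inductive step for singleton lists the relevant restriction set grows: to relate $[V]^{b_0} \sqsubseteq [V']^{b_0}$ I would need $V \sqsubseteq V'|_{\|[V]^{b_0}\|}$, whereas the induction hypothesis only supplies $V \sqsubseteq V'|_{\|V\|}$, and $\|[V]^{b_0}\| = b_0 \cup \|V\|$ is strictly larger. I would therefore prove the generalized claim: if $\hat{p} \sqsubseteq \hat{v}$ and $\|\hat{p}\| \subseteq b$, then $\hat{p} \sqsubseteq \hat{v}|_{b}$, from which the lemma follows by instantiating $b \coloneqq \|\hat{p}\|$.

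Before the main induction I would establish a ``no-drop'' helper by structural induction on values: if $\|\hat{v}\| \subseteq b$ then $\hat{v}|_b = \hat{v}$. The point is that every annotation set $a$ appearing on a list constructor inside $\hat{v}$ satisfies $a \subseteq \|\hat{v}\|$ (immediate from the definition of $\|\cdot\|$, e.g.\ $\|[M]^a\| = a \cup \|M\|$), so when $\|\hat{v}\| \subseteq b$ the side condition $a \subseteq b$ of every restriction clause is met and no list element is ever discarded. The only interesting case is $\hat{v} = [M]^a$, where $a \subseteq b$ gives $[M]^a|_b = [M|_b]^a$ and the inner $M|_b = M$ follows from the induction hypothesis since $\|M\| \subseteq \|[M]^a\| \subseteq b$.

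With these in hand, the main induction on $\hat{p} \sqsubseteq \hat{v}$ is routine. In the reflexivity base case $\hat{p} = \hat{v}$, the condition $\|\hat{p}\| \subseteq b$ is exactly the hypothesis of the no-drop helper, so $\hat{v}|_b = \hat{v}$ and reflexivity closes the goal. The case $[] \sqsubseteq \hat{v}$ is immediate, since $[] \sqsubseteq L$ for any $L$. In the singleton case $[V]^{b_0} \sqsubseteq [V']^{b_0}$ (from $V \sqsubseteq V'$), the hypothesis $\|[V]^{b_0}\| = b_0 \cup \|V\| \subseteq b$ gives both $b_0 \subseteq b$ (so $[V']^{b_0}|_b = [V'|_b]^{b_0}$) and $\|V\| \subseteq b$, so the induction hypothesis yields $V \sqsubseteq V'|_b$ and the list rule reassembles $[V]^{b_0} \sqsubseteq [V'|_b]^{b_0}$. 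The concatenation and record cases are analogous: the hypothesis on $\|\hat{p}\|$ splits into the corresponding hypotheses on each subcomponent, restriction distributes over $\concat$ and over record fields, and the induction hypotheses recombine via the matching $\sqsubseteq$ rule.

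The main obstacle I anticipate is getting the induction to close at all, which is precisely why the generalization over an arbitrary superset $b \supseteq \|\hat{p}\|$ is needed: it absorbs what would otherwise require separate monotonicity (that $b' \subseteq b''$ implies $\hat{v}|_{b'} \sqsubseteq \hat{v}|_{b''}$) and transitivity lemmas for $\sqsubseteq$. The no-drop helper, handling the reflexivity case uniformly across all value shapes, is the other small ingredient that I expect to do real work.
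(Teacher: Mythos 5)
Your proof is correct, and although it follows the same overall skeleton as the paper's --- induction on the derivation of $\hat{p} \sqsubseteq \hat{v}$ --- the generalization you introduce is a genuine and worthwhile departure. The paper's proof works directly with the fixed restriction set $\|\hat{p}\|$: in the singleton case it invokes the induction hypothesis to obtain $V \sqsubseteq V'|_{\|V\|}$ and then asserts the conclusion $[V]^{b} \sqsubseteq [V']^{b}|_{b \cup \|V\|}$ without spelling out how to pass from restriction by $\|V\|$ to restriction by the strictly larger set $b \cup \|V\|$; and it dismisses the reflexivity case as ``trivially true,'' which tacitly relies on the fact that $V|_{\|V\|} = V$. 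Your two ingredients address exactly these points: the strengthened statement (for any $b \supseteq \|\hat{p}\|$ one has $\hat{p} \sqsubseteq \hat{v}|_{b}$) makes the induction hypothesis strong enough that the singleton, concatenation, and record cases close without any separate monotonicity or transitivity lemma for $\sqsubseteq$, and the no-drop helper ($\|\hat{v}\| \subseteq b$ implies $\hat{v}|_{b} = \hat{v}$) discharges reflexivity honestly, since every annotation occurring inside $\hat{v}$ is collected by $\|\hat{v}\|$ and so survives the restriction. The paper's version is shorter but leaves these steps implicit; yours buys a proof that goes through exactly as stated, at the cost of one additional auxiliary structural induction.
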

\begin{proof} By induction on the subvalue relation $\sqsubseteq$.
  \begin{itemize}
  \item Cases $V \sqsubseteq V$ and $\texttt{[]}\sqsubseteq V$ are trivially true.
  \item Case $\texttt{[} V \texttt{]}^b \sqsubseteq \texttt{[} V' \texttt{]}^b$:
    We have $\texttt{[} V' \texttt{]}^b|_{\|\texttt{[} V \texttt{]}^b\|} = \texttt{[} V' \texttt{]}^b|_{b \cup \|V\|}$ by definition, and $V'|_{\|V\|} \sqsupseteq V$ by the induction hypothesis, and can therefore conclude $\texttt{[} V \texttt{]}^b \sqsubseteq \texttt{[} V' \texttt{]}^b|_{\|\texttt{[} V \texttt{]}^b\|}$.
  \item The cases for list concatenation and records are similar.

  
  \end{itemize}
\end{proof}

\begin{theorem}[Correctness of lineage]\label{thm:lin-correctness}
  Let $q$ be a monotonic query with $\|q\| = \emptyset$ and let $\hat\Sigma$ be a context, such that
  $q$ evaluates to $\hat{v}$ in $\hat\Sigma$: $ \hat{\Sigma}, q \linarr^* \hat{\Sigma}, \hat{v}$.
  Then for every sublist $\hat{p} \sqsubseteq \hat{v}$ we can evaluate $q$ in a restricted context $\hat\Sigma|_{\|\hat{p}\|}$ to obtain a value $\hat{v}'$ and $\hat{p}$ will be a sublist of $\hat{v}'$.

  \[ \forall \hat{p} \sqsubseteq \hat{v}: \hat\Sigma|_{\|\hat{p}\|}, q \linarr^* \hat{\Sigma}|_{\|\hat{p}\|}, \hat{v}' \wedge \hat{p} \sqsubseteq \hat{v}' \]
\end{theorem}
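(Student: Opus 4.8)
The plan is to reduce the whole statement to the two facts already in hand—Corollary~\ref{cor:lin-correctness-ind} and Lemma~\ref{lem:pinvp}—by instantiating the restriction set to $c = \|\hat{p}\|$ for the given sublist $\hat{p} \sqsubseteq \hat{v}$. First I would apply Corollary~\ref{cor:lin-correctness-ind} to the hypothesised evaluation $\hat\Sigma, q \linarr^* \hat\Sigma, \hat{v}$ with this choice of $c$, obtaining
\[
  \hat\Sigma|_{\|\hat{p}\|}, q|_{\|\hat{p}\|} \linarr^* \hat\Sigma|_{\|\hat{p}\|}, \hat{v}|_{\|\hat{p}\|}.
\]
This already exhibits a terminating evaluation in the restricted context, since restriction maps values to values (the clauses for $\texttt{[}V\texttt{]}^a$, $\texttt{[]}$, concatenation, and records all preserve value shape), so $\hat{v}|_{\|\hat{p}\|}$ is again a normal form. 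The candidate witness is therefore $\hat{v}' = \hat{v}|_{\|\hat{p}\|}$.

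The one gap between this display and the desired conclusion is that the corollary runs $q|_{\|\hat{p}\|}$ on the left, whereas the theorem asks to run $q$ itself. So the key intermediate step is the identity $q|_{\|\hat{p}\|} = q$, and this is exactly where the hypothesis $\|q\| = \emptyset$ is used. I would establish a small auxiliary fact by induction on term structure: if $\|M\| = \emptyset$ then $M|_b = M$ for every $b$. The point is that the only clauses of $\cdot|_b$ that can discard a subterm fire on an annotation $a \not\subseteq b$ (the $\texttt{[}M\texttt{]}^a$ and $M^{\cup a}$ cases), and $\|M\| = \emptyset$ forces every such $a$ to be $\emptyset \subseteq b$; table references and empty-annotated list literals are copied verbatim. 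Specialising to $M = q$ and $b = \|\hat{p}\|$ and substituting into the corollary's conclusion yields $\hat\Sigma|_{\|\hat{p}\|}, q \linarr^* \hat\Sigma|_{\|\hat{p}\|}, \hat{v}'$, discharging the first conjunct.

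It then remains to check $\hat{p} \sqsubseteq \hat{v}'$, that is $\hat{p} \sqsubseteq \hat{v}|_{\|\hat{p}\|}$, which is precisely the statement of Lemma~\ref{lem:pinvp} applied to the value $\hat{v}$ and its subvalue $\hat{p}$. Combining the restricted evaluation above with the lemma settles both conjuncts of the goal, with no further induction on $\linarr$ needed at this level—all the work of tracking annotations through single steps has been absorbed into Theorem~\ref{thm:lin-correctness-ind} and its corollary.

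I expect the only real obstacle to be the bookkeeping around $q|_{\|\hat{p}\|} = q$ and its interaction with table references. Because $\|\kw{table}\,t\| = \|\hat\Sigma(t)\|$ ties a table's annotation set to the surrounding context, one must read $\|q\| = \emptyset$ as a condition on the annotations $q$ carries \emph{syntactically} (list literals annotated with $\emptyset$ after $\mathit{annotate}$, and no $\cup$-expressions), and observe that restriction acts as the identity on the table references $\kw{table}\,t$ themselves, the genuine shrinking of tables happening only through the restricted context $\hat\Sigma|_{\|\hat{p}\|}$. Once that distinction is made explicit, the argument is a direct appeal to the corollary and to Lemma~\ref{lem:pinvp}.
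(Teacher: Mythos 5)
Your proposal is correct and follows essentially the same route as the paper's proof: instantiate Corollary~\ref{cor:lin-correctness-ind} at $c=\|\hat{p}\|$, take $\hat{v}'=\hat{v}|_{\|\hat{p}\|}$, use $\|q\|=\emptyset$ to conclude $q|_{\|\hat{p}\|}=q$, and finish with Lemma~\ref{lem:pinvp}. Your extra remarks on why restriction is the identity on annotation-free terms and on table references only make explicit what the paper states in one sentence.
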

\begin{proof}
  Using Corollary~\ref{cor:lin-correctness-ind} of Theorem~\ref{thm:lin-correctness-ind} we have
  \[ \hat\Sigma|_{\|\hat{p}\|}, q|_{\|\hat{p}\|} \linarr^* \hat\Sigma|_{\|\hat{p}\|}, \hat{v}|_{\|\hat{p}\|} \]
  for any $\hat{p}$ and, because of Lemma~\ref{lem:pinvp}, $\hat{v}|_{\|\hat{p}\|} \sqsupseteq \hat{p}$ so set
  \[ \hat{v}' = \hat{v}|_{\|\hat{p}\|} \]
  Since $q$ has no annotations on its own, it is not affected by restriction: $q|_{\|\hat{p}\|} = q$ and we can conclude that
  \[ \hat\Sigma|_{\|\hat{p}\|}, q \linarr^* \hat{\Sigma}|_{\|\hat{p}\|}, \hat{v}' \wedge \hat{p} \sqsubseteq \hat{v}' \]
\end{proof}

\section{Provenance translations}
\label{sec:translation}

In the previous section, we have presented two extensions of \Links:
\WLinks, which supports where-provenance in queries, and \LLinks,
which supports lineage in queries. Here, we show that both extensions
can be implemented by a type-preserving source-to-source translation
to plain \Links.

\subsection{Where-Provenance}

\lstset{language=WLinks}

We define a type-directed translation from \WLinks to \Links based on
the semantics presented in the previous section.  The syntactic
translation of types $𝔚⟦-⟧$ is shown in
Figure~\ref{fig:wlinks-type-translation}.  We write $𝔚⟦\Gamma⟧$ for
the obvious extension of the type translation to contexts.  The
implementation extends the \Links parser and type checker, and
desugars the \WLinks AST to a \Links AST after type checking, reusing
the backend mostly unchanged.  The expression translation function is
also written $𝔚⟦-⟧$ and is shown in
Figure~\ref{fig:where-translation}.

\begin{figure}
\begin{eqnarray*}
    𝔚⟦O⟧ &=& O\\
    𝔚⟦A \arr B⟧ &=& 𝔚⟦A ⟧ \arr 𝔚⟦B⟧\\
    𝔚⟦(l_i:A_i)_{i=1}^n⟧ &=& (l_i:𝔚⟦A_i⟧)_{i=1}^n\\
    𝔚⟦\coll{A}⟧ &=& \coll{𝔚⟦A⟧}\\
    𝔚⟦\Prov(A)⟧ &=& (\mathsf{data}:  𝔚⟦A⟧,
                   \mathsf{prov}:(\StringTy,\StringTy,\IntTy) )\\
 𝔚⟦\kw{table}(R)⟧ &=& (\kw{table}(\erase{R}),() \arr \coll{𝔚⟦\rcd{R} ⟧} )
\end{eqnarray*}
  \caption{Type translation for \WLinks}
  \label{fig:wlinks-type-translation}
\end{figure}

\begin{figure*}[tb]
\small
\begin{minipage}{12cm}
\begin{eqnarray*}
𝔚⟦c ⟧ &=& c\\
𝔚⟦x⟧ &=& x\\
𝔚⟦(l_i = M_i)_{i=1}^n⟧ &=& (l_i=𝔚⟦M_i⟧)_{i=1}^n\\
𝔚⟦N.l⟧ &=& 𝔚⟦N⟧.l\\
𝔚⟦\kw{fun}(x_i|_{i=0}^n)~\{M\}⟧ &=& \kw{fun}(x_i|_{i=0}^n)~\{𝔚⟦M⟧
                                    \}\\
𝔚⟦M(N_i|_{i=0}^n) ⟧ &=& 𝔚⟦M ⟧(𝔚⟦N_i⟧|_{i=0}^n) \\
𝔚⟦\kw{var}~x=M;N ⟧ &=& \kw{var}~x = 𝔚⟦M ⟧; 𝔚⟦N⟧\\
𝔚⟦\kw{query}~\{M\}⟧ &=& \kw{query}~\{𝔚⟦M⟧\}\\
𝔚⟦[] ⟧  &=& []\\
𝔚⟦[M] ⟧  &=& [𝔚⟦M⟧ ]\\
𝔚⟦M \plusplus N ⟧  &=& 𝔚⟦M⟧ \plusplus 𝔚⟦N⟧
\\
𝔚⟦\kw{if}~(L)~\{M\}~\kw{else}~\{N\}⟧ &=& \kw{if}~(𝔚⟦ L⟧)~\{𝔚⟦M⟧\}~\kw{else}~\{𝔚⟦N⟧\}\\
𝔚⟦\kw{empty}~(M)⟧ &=& \kw{empty}~(𝔚⟦M⟧)\\
𝔚⟦\kw{for}~(x \rarr L) ~M⟧ &=& \kw{for}~(x \rarr 𝔚⟦L⟧) ~𝔚⟦M⟧ \\
𝔚⟦\kw{where}(M)~N⟧ &=& \kw{where}(𝔚⟦M⟧)~𝔚⟦N⟧ \\
 𝔚⟦\kw{for}~(x \drarr  L)~M⟧ &=&
\kw{for}~ (x \rarr 𝔚⟦L⟧.2())~𝔚⟦M⟧ \\
𝔚⟦\kw{data}~M⟧ &=& 𝔚⟦M⟧.\mathsf{data}\\
𝔚⟦\kw{prov}~M⟧ &=& 𝔚⟦M⟧.\mathsf{prov}\\
𝔚⟦\kw{insert} ~L~\kw{values}~M⟧ &=& \kw{insert} ~𝔚⟦ L⟧.1~\kw{values}~𝔚⟦ M⟧\\
𝔚⟦\kw{update} ~(x \rarr L)~\kw{where}~M~\kw{set}~N⟧ &=& \kw{update} ~(x \rarr 𝔚⟦L⟧.1)~\kw{where}~𝔚⟦M⟧~\kw{set}~𝔚⟦N⟧\\
𝔚⟦\kw{delete} ~(x \rarr L)~\kw{where}~M⟧  &=&\kw{delete} ~(x
                                                    \rarr
                                                    𝔚⟦L⟧.1)~\kw{where}~𝔚⟦M⟧
\end{eqnarray*}
\end{minipage}
\[
𝔚⟦\kw{table}~n~\kw{with} (R) \kw{where}~S⟧
=
(\kw{table}~n~\kw{with} ~(R), \kw{ fun} () \{ \kw{for} (x \drarr
    \kw{table}~n~\kw{with} ~(R)) \coll{\rcd{R \triangleright^n_x S}}\})
\]
\[\begin{array}{rcl}
\cdot \triangleright^n_x \cdot &=& \cdot\\
 (R,l:O) \triangleright^n_x \cdot &=& (R \triangleright^n_x \cdot),l=x.l \\
 (R,l:O) \triangleright^n_x (S,l~\kw{prov}~\kw{default}) &=& (R
                                                           \triangleright^n_x
                                                           S), l=\rcd{\kw{data} = x.l,
                                                         \kw{prov} = \rcd{n,l_d,x.oid}}\\
(R,l:O )\triangleright^n_x (S,l~\kw{prov}~M) &=& (R
                                                           \triangleright^n_x
                                                           S), l=\rcd{\kw{data} = x.l,
                                                         \kw{prov} =
                                                          𝔚⟦ M⟧(x) }
\end{array}\]
\caption{Translation of \WLinks to \Links, and auxiliary operation
  $R \triangleright^n_x S$}
\label{fig:where-translation}
\end{figure*}

Values of type $\Prov(O)$ are represented at runtime as
ordinary \Links records with type
\lstinline[language=Links]!(data: $O$, prov: (String, String, Int))!.
Thus, the keywords \lstinline!data!
and \lstinline!prov! translate to projections to the respective
fields.

We translate table declarations to pairs.  The first component is a
simple table declaration where all columns have their primitive
underlying non-provenance type.  We will use the underlying table
declaration for insert, update, and delete operations.  The second
component is essentially a delayed query that calculates \wpr for the
entire table.  (The fact that it is delayed is important here, because it
means that it can be inlined and simplified later, rather than loaded
into memory.)  We compute provenance for each record by iterating over
the table.  For every record of the input table, we construct a new
record with the same fields as the table.  For every column with
provenance, the field's value is a record with
\lstinline[language=Links]!data! and
\lstinline[language=Links]!prov!  fields.  The \lstinline[language=Links]!data! field is just the value.
The translation of table
references also uses an auxiliary operation $R \triangleright^n_x S$
which, given a row type $R$, a table name $n$, a variable $x$ and a
provenance specification $S$, constructs a record in which each
field contains data from $x$ along with the specified provenance (if any).
We wrap the iteration in an anonymous function to delay execution:
otherwise, the provenance-annotated table would be constructed in
memory when the table reference is first evaluated.  We will
eventually apply this function in a query, and the \Links query
normalizer will inline the provenance annotations and normalize them
along with the rest of the query.

We translate table comprehensions to comprehensions over 
the second component of a
translated table declaration.  Since that component is a function, we have
to apply it to a (unit) argument.

For example, recall the example query \lstinline!q1'''! from
Section~\ref{sec:overview}, Figure~\ref{fig:q1p3}. The table declaration translates as
follows:
\begin{lstlisting}[language=Links]
var agencies = (table "Agencies"
                    with (name: String, based_in: String, phone: String),
 fun () { for (t <-- table "Agencies"
                          with (name: String, based_in: String, phone: String))
   [(name = t.name, based_in = t.based_in,
     phone = (data = t.phone, prov = ("Agencies", "phone", t.oid)))] })
\end{lstlisting}
The translation of the \lstinline!externalTours! table reference is
similar, but simpler, since it has no \lstinline!prov!
annotations. The query translates to
\begin{lstlisting}[language=Links]
query {
  for (a <-- agencies.2())
    for (e <-- externalTours.2())
    where (a.name == e.name && e.type == "boat")
      [(name = e.name,
        phone = a.phone.data, p_phone = a.phone.prov)]
}
\end{lstlisting}
Moreover, after
inlining the adjusted definitions of \lstinline!agencies! and
\lstinline!externalTours! and normalizing, the provenance computations
in the delayed query \lstinline!agencies.2! are also inlined, resulting in the following SQL query.
In this query, the table and column part of the \wpr are in fact static, and the generated SQL query reflects this by using constants in the \lstinline[language=SQL]!select! clause.
We see no trace of function application, or nested record projections in the guise of \lstinline!data! and \lstinline!prov!.

\begin{lstlisting}[language=SQL]
select
  e.name as name,
  a.phone as phone,
  'agencies' as p_phone_1,
  'phone' as p_phone_2,
  a.oid as p_phone_3
from
  Agencies as a,
  ExternalTours as e
where
  a.name = e.name and e.type = 'boat'
\end{lstlisting}

The type-preservation correctness property of the where-provenance
translation is that it preserves well-formedness.  We first need 
\begin{lemma}\label{lem:where-helper}
Let $R$ be a row and $S$ be a provenance specification.  Then 
\begin{itemize}
\item $𝔚⟦\rcd{\erase{R}}⟧ = \rcd{R}$.
\item $\erase{\rcd{R \triangleright S}} = \rcd{R}$.
\end{itemize}
\end{lemma}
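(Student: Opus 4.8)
The plan is to prove both equalities by straightforward structural induction, unfolding the definitions of $\erase{-}$, $\WW{-}$, and $\triangleright$ from Figures~\ref{fig:wlinks-auxiliary} and~\ref{fig:wlinks-type-translation}. The single observation that drives everything is that $\erase{-}$ returns a row containing no $\Prov(-)$ constructor, and that $\WW{-}$ acts as the identity on such $\Prov$-free base rows: this is immediate from $\WW{O} = O$ for a $\Prov$-free base type $O$, together with the fact that on records $\WW{-}$ merely recurses field-wise. Consequently the only genuine content is that erasure cancels exactly the $\Prov$ wrappers introduced by $\triangleright$; the remainder is routine bookkeeping. I would also tacitly use that the rows $R$ appearing in table declarations satisfy $R :: \BaseRow$ and are written by the programmer without provenance, so that $\erase{R} = R$.

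For the first equality I would induct on the structure of $R$. In the empty case $\erase{\cdot} = \cdot$ and $\WW{\rcd{\cdot}} = \rcd{\cdot}$. In the case $R = R', l:O$ I use $\erase{R', l:O} = \erase{R'}, l : \erase{O}$, and the fact that $\WW{-}$ translates a record type field-by-field, so $\WW{\rcd{\erase{R', l:O}}}$ is the field-wise translation of $\rcd{\erase{R'}}$ extended with $l : \WW{\erase{O}}$. Since $\erase{O}$ is a $\Prov$-free base type we have $\WW{\erase{O}} = \erase{O}$, and the induction hypothesis handles the remaining fields. This yields $\WW{\rcd{\erase{R}}} = \rcd{\erase{R}}$ for every $R$; it coincides with the stated right-hand side $\rcd{R}$ precisely when $\erase{R} = R$, i.e.\ when the declared row carries no $\Prov$ annotations.

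For the second equality I would induct on the provenance specification $S$, which by the definition of $\triangleright$ is aligned with a suffix of the fields of $R$. In the base case $S = \cdot$ we have $R \triangleright \cdot = R$, and $\erase{\rcd{R}} = \rcd{R}$ because $R$ is $\Prov$-free. In the case $S = S', l~\kw{prov}~s$ with $R = R', l:O$, the defining clause gives $(R', l:O) \triangleright (S', l~\kw{prov}~s) = (R' \triangleright S'), l : \Prov(O)$; applying $\erase{-}$ rewrites the last field as $\erase{\Prov(O)} = \erase{O} = O$, cancelling the wrapper, while $\erase{\rcd{R' \triangleright S'}} = \rcd{R'}$ by the induction hypothesis. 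Re-assembling the fields gives $\rcd{R', l:O} = \rcd{R}$, as required.

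I expect the only real obstacle to be conceptual rather than computational: pinning down the invariant that the rows $R$ to which this lemma is applied are free of $\Prov(-)$, so that $\erase{R} = R$ and the right-hand sides are genuinely $\rcd{R}$ rather than $\rcd{\erase{R}}$. If one instead allows $R$ itself to contain $\Prov$ (as the augmented row type of a table does when it appears in the $\kw{insert}$, $\kw{update}$, and $\kw{delete}$ rules), then the same two inductions establish the slightly more general identities $\WW{\rcd{\erase{R}}} = \rcd{\erase{R}}$ and $\erase{\rcd{R \triangleright S}} = \rcd{\erase{R}}$, which are exactly what the type-preservation argument consumes. Once this point is settled, both inductions are short and mechanical.
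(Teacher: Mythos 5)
Your proof is correct. The paper in fact gives no proof of this lemma at all --- it is stated as an auxiliary fact and left implicit --- so there is nothing to diverge from; your two structural inductions (on $R$ for the first item, on $S$ for the second) are exactly the routine argument the authors are relying on. Your caveat is also the one point of genuine substance: as literally stated, the first induction only yields $\WW{\rcd{\erase{R}}} = \rcd{\erase{R}}$, which coincides with the claimed right-hand side $\rcd{R}$ only under the implicit convention that $R$ is the programmer-declared, $\Prov$-free row (so that $\erase{R} = R$), and the same convention is needed for the base case $S = \cdot$ of the second item. The generalized identities you record, $\WW{\rcd{\erase{R}}} = \rcd{\erase{R}}$ and $\erase{\rcd{R \triangleright S}} = \rcd{\erase{R}}$, are indeed what the Insert/Update/Delete cases of Theorem~\ref{thm:where-type-preservation} actually consume, since there the table row is already of the augmented form $R' \triangleright S$.
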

The type-preservation property for the translation is stated as follows and proved in \ref{thm:where-type-preservation-proof}:
\begin{theorem}\label{thm:where-type-preservation}~
  \begin{enumerate}
  \item For every \WLinks context $\Gamma$, term $M$, and type $A$, if $
    \Gamma \vdash_\WLinks M : A$ then $ 𝔚⟦\Gamma⟧ \vdash_\Links
    𝔚⟦M⟧ : 𝔚⟦A⟧$.
\item For every \WLinks context $\Gamma$, provenance specification
  $S$, row $R$ and subrow $R'$ such that $R'\triangleright^n_x S$ is defined, if $Γ ⊢ S : \ProvSpec(R)$ then 
$𝔚⟦\Gamma⟧,x{:}\rcd{R}   ⊢ \rcd{R' \triangleright^n_x  S} :  𝔚⟦\rcd{R' \triangleright S} ⟧ $.
  \end{enumerate}
\end{theorem}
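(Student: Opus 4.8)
The plan is to prove the two parts simultaneously: part~1 by induction on the derivation of $\Gamma \vdash_\WLinks M : A$, and part~2 by induction on the structure of the subrow $R'$ (equivalently, on the derivation of $\Gamma \vdash S : \ProvSpec(R)$). The two statements are mutually recursive --- the $\kw{table}$ case of part~1 appeals to part~2 instantiated with $R' = R$, while the user-defined-provenance case of part~2 appeals to part~1 on the annotation function $M$ --- but the recursion is well-founded, since in the latter case $M$ is a strict subterm of the table declaration. Thus the whole argument can be organised as a single induction on the size of the typing derivation, with Lemma~\ref{lem:where-helper} available throughout.

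For part~1, most cases are purely homomorphic: the term translation commutes with the constructor, $\WW{-}$ commutes with the corresponding type constructor, and the conclusion is immediate from the induction hypotheses. I would discharge these (constants, variables, records, projection, functions, application, $\kw{var}$, conditionals, $\kw{query}$, list constructors, concatenation, $\kw{empty}$, $\kw{where}$, and $\kw{for}\ (x \rarr L)$) uniformly and focus on the reshaped constructs. For $\kw{data}\ M$ and $\kw{prov}\ M$ I would use $\WW{\Prov(A)} = \rcd{\mathsf{data}{:}\WW{A}, \mathsf{prov}{:}(\StringTy,\StringTy,\IntTy)}$ with the \Links projection rule, noting $\WW{(\StringTy,\StringTy,\IntTy)} = (\StringTy,\StringTy,\IntTy)$ as $\WW{-}$ is the identity on base types. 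For $\kw{for}\ (x \drarr L)$, whose translation is $\kw{for}\ (x \rarr \WW{L}.2())\ \WW{M}$, the hypothesis gives $\WW{L} : \rcd{\kw{table}(\erase{R}), () \arr \coll{\WW{\rcd{R}}}}$, so the second projection applied at unit has type $\coll{\WW{\rcd{R}}}$, and the \Links \textsf{For-List} rule applies with the bound variable receiving $\WW{\rcd{R}}$ in the translated context. For $\kw{insert}$, $\kw{update}$ and $\kw{delete}$ the translation uses $\WW{L}.1 : \kw{table}(\erase{R})$, and I would match the value, condition, and set-expression obligations using the first clause of Lemma~\ref{lem:where-helper}, $\WW{\rcd{\erase{R}}} = \rcd{R}$.

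The crux of part~1 is the $\kw{table}$ case, whose translation is a pair: the raw declaration as first component and the delayed query $\kw{fun}\ ()\,\{\kw{for}\ (x \drarr \kw{table}\ n\ \kw{with}\ (R))\ \coll{\rcd{R \triangleright^n_x S}}\}$ as second, and it must receive $\WW{\kw{table}(R \triangleright S)} = \rcd{\kw{table}(\erase{R \triangleright S}), () \arr \coll{\WW{\rcd{R \triangleright S}}}}$. By the second clause of Lemma~\ref{lem:where-helper}, $\erase{\rcd{R \triangleright S}} = \rcd{R}$, so the first component types at $\kw{table}(R)$ (since $R :: \BaseRow$); for the second I would type the body with \textsf{For-Table}, which binds $x : \rcd{R}$ and reduces the goal to $\rcd{R \triangleright^n_x S} : \WW{\rcd{R \triangleright S}}$, i.e.\ exactly part~2 with $R' = R$. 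Part~2 then follows by induction on $R'$, matching the clauses of $\triangleright^n_x$ against those of $\triangleright$ and $\WW{-}$: the empty case gives $\rcd{}$; a plain field copies $x.l : O$ and leaves the type fixed; the default case builds $\rcd{\mathsf{data}{=}x.l, \mathsf{prov}{=}\rcd{n,l_d,x.\mathsf{oid}}} : \WW{\Prov(O)}$, matching the $l{:}\Prov(O)$ introduced by $\triangleright$; and the user-defined case sets $\mathsf{prov}$ to $\WW{M}(x)$, where part~1 supplies $\WW{M} : \WW{\rcd{R}} \arr (\StringTy,\StringTy,\IntTy)$.

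I expect the main obstacle to be precisely the $\kw{table}$ case and the well-typedness of this last application. The bookkeeping that threads $\erase{-}$, $\triangleright$, $\triangleright^n_x$, and $\WW{-}$ through the delayed-function encoding is delicate, and the application $\WW{M}(x)$ typechecks against $x : \rcd{R}$ only because a source table row $R$ contains no $\Prov$ constructors, so that $\WW{-}$ acts as the identity and $\rcd{R} = \WW{\rcd{R}}$. Making this side condition explicit (or justifying that declared rows are always $\Prov$-free) is the one genuinely non-routine point; everything else reduces to structural induction and the two equations of Lemma~\ref{lem:where-helper}.
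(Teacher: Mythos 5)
Your proposal matches the paper's proof essentially exactly: a mutual induction in which part~1 proceeds on the \WLinks typing derivation (with the $\kw{table}$ case invoking part~2 at $R'=R$ via the \textsc{For-Table} rule and the delayed-function encoding) and part~2 proceeds on the $\ProvSpec$ derivation (with the user-defined case invoking part~1 on the annotation function), with Lemma~\ref{lem:where-helper} discharging the erasure bookkeeping in both the $\kw{table}$ and update/insert/delete cases. Your closing observation that $\WW{M}(x)$ typechecks against $x:\rcd{R}$ only because declared table rows are $\Prov$-free, so that $\WW{\rcd{R}}=\rcd{R}$, is a genuine side condition that the paper's own proof leaves implicit.
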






We have shown that annotation-propagation in \WLinks is
color-propagating (Theorem~\ref{thm:where-correctness}) and that the
translation to \Links is type-preserving
(Theorem~\ref{thm:where-type-preservation}).  We have not, however,
shown that the translation correctly implements the semantics.  This
is intuitively clear, but a formal proof is nontrivial because a
single step in \WLinks can translate to multiple steps in \Links,
involving terms that have no \WLinks counterpart.

\subsection{Lineage}\label{sec:translation-lineage}

\lstset{language=LLinks}
We define a typed translation from \LLinks to \Links.  The translation
has two parts: an outer translation called \emph{doubling} (𝔇) and an inner
part called \emph{lineage translation} (𝔏).  The former is used for
translating ordinary \LLinks code while the latter is used to
translate query code inside a $\kw{lineage}$ keyword.  
The syntactic translation of \LLinks types for the doubling translation is shown in
Figure~\ref{fig:llinks-type-translation}, and the translation used for
the lineage translation is the $𝔏 $ translation shown earlier.
We write $𝔇⟦\Gamma⟧$ and
$𝔏⟦\Gamma⟧$ for the obvious extensions of these translations to
contexts.

The translation of \LLinks expressions to \Links is shown in
Figures~\ref{fig:lineage-outer-translation}--\ref{fig:lineage-term-translation}.  
Following the type translation, term translation operates in two modes: 𝔇 and
𝔏.  We translate ordinary \Links programs using the translation $𝔇⟦-⟧$.
When we reach a \lstinline!lineage! block, we switch to using the $𝔏⟦-⟧$
translation.  $𝔏⟦\coll{M}⟧$ provides initial lineage for list literals.  Their
lineage is simply empty.  Table comprehension is the most interesting
case.  We translate a table iteration
\lstinline!for ($x$ <-- $L$) $M$! to a nested list comprehension.  The
outer comprehension binds $y$ to the results of the lineage-computing
view of $L$.  The inner comprehension binds a fresh variable $z$,
iterating over $𝔏⟦M⟧$---the original comprehension body $M$
transformed using 𝔏.  The original comprehension body $M$ is defined
in terms of $x$, which is not bound in the transformed comprehension.
We therefore replace every occurrence of $x$ in $𝔏⟦e⟧$ by
\lstinline!$y$.data!.  In the body of the nested comprehension we thus
have $y$, referring to the table row annotated with lineage, and $z$,
referring to the result of the original comprehension's body, also
annotated with lineage.  As the result of our transformed
comprehension, we return the plain data part of $z$ as our data, and
the combined lineage annotations of $y$ and $z$ as our provenance.
(Handling \lstinline!where!-clauses is straightforward, as shown in
Figure~\ref{fig:lineage-inner-translation}.)

\begin{figure}[tb]
\small
\begin{eqnarray*}
  𝔇⟦O⟧ &=& O\\
  𝔇⟦A  \arr B⟧ &=& (𝔇⟦A⟧\arr 𝔇⟦B⟧, 𝔏⟦A⟧ \arr 𝔏⟦B ⟧)\\
  𝔇⟦\rcd{l_i:A_i}_{i=1}^n⟧ &=& \rcd{l_i:𝔇⟦A_i⟧}_{i=1}^n\\
  𝔇⟦\coll{A}⟧ &=& \coll{𝔇⟦A⟧}\\
  𝔇⟦\kw{table}(R)⟧ &=& (\kw{table}(R),() \arr 𝔏⟦[(R)] ⟧ )
\end{eqnarray*}
  \caption{Doubling translation}
\label{fig:llinks-doubling-translation}
\end{figure}

\begin{figure*}[tb]
\small
\begin{minipage}{12cm}
\begin{eqnarray*}
𝔇⟦c⟧ &=& c\\
𝔇⟦x⟧ &=& x\\
𝔇⟦(l_i = M_i)_{i=1}^n⟧ &=& (l_i=𝔇⟦M_i⟧)_{i=1}^n\\
𝔇⟦N.l⟧ &=& 𝔇⟦N⟧.l\\
𝔇⟦\kw{fun}(x_i|_{i=1}^n)~\{M\}⟧ &=&
                                    (\kw{fun}(x_i|_{i=1}^n)~\{𝔇⟦M⟧\}, 𝔏^*⟦\kw{fun}(x_i|_{i=1}^n)~\{M\}⟧)\\
𝔇⟦M(N_i|_{i=1}^n) ⟧ &=& 𝔇⟦M ⟧.1(𝔇⟦N_i⟧_{i=1}^n) \\
𝔇⟦\kw{var}~x=M;N ⟧ &=& \kw{var}~x = 𝔇⟦M ⟧; 𝔇⟦N⟧\\
𝔇⟦\coll{} ⟧  &=& \coll{}\\
𝔇⟦\coll{M} ⟧  &=& \coll{𝔇⟦M⟧ }\\
𝔇⟦M \plusplus N ⟧  &=& 𝔇⟦M⟧ \plusplus 𝔇⟦N⟧\\
𝔇⟦\kw{if}~(L)~\{M\}~\kw{else}~\{N\}⟧ &=& \kw{if}~(𝔇⟦ L⟧)~\{𝔇⟦M⟧\}~\kw{else}~\{𝔇⟦N⟧\}\\
𝔇⟦\kw{query}~\{M\}⟧ &=& \kw{query}~\{𝔇⟦M⟧\}\\
𝔇⟦\kw{empty}~(M)⟧ &=& \kw{empty}~(𝔇⟦M⟧)\\
𝔇⟦\kw{for}~(x \rarr L)~M⟧ &=& \kw{for}~(x \rarr 𝔇⟦L⟧) ~𝔇⟦M⟧ \\
𝔇⟦\kw{where}(M)~N⟧ &=& \kw{where}(𝔇⟦M⟧) ~𝔇⟦N⟧ \\
𝔇⟦\kw{for}~(x \drarr  L)~M⟧ &=&
\kw{for}~ (x \rarr 𝔇⟦L⟧.1) ~𝔇⟦M⟧\\
𝔇⟦\kw{insert} ~L~\kw{values}~M⟧ &=& \kw{insert} ~𝔇⟦ L⟧.1~\kw{values}~𝔇⟦ M⟧\\
𝔇⟦\kw{update} ~(x \rarr L)~\kw{where}~M~\kw{set}~𝔇⟦ N⟧ &=& \kw{update} ~(x \rarr 𝔇⟦L⟧.1)~\kw{where}~𝔇⟦M⟧~\kw{set}~N⟧\\
𝔇⟦\kw{delete} ~(x \rarr L)~\kw{where}~M⟧  &=&\kw{delete} ~(x
                                                    \rarr
                                                    𝔇⟦L⟧.1)~\kw{where}~𝔇⟦M⟧\\
𝔇⟦\kw{lineage}~\{M\}⟧ &=& \kw{query}~\{𝔏^*⟦M⟧\}
\end{eqnarray*}
\end{minipage}
\begin{eqnarray*}
𝔇⟦\kw{table}~n~\kw{with} ~(R)⟧
&=&
(\kw{table}~n~\kw{with} ~(R),\kw{ fun} () \{ 𝔏⟦\kw{table}~n~\kw{with} ~(R)⟧
  \})
\end{eqnarray*}
\caption{Translation of \LLinks to \Links: outer translation}
\label{fig:lineage-outer-translation}
\end{figure*}
\begin{figure*}[tb]
\small
\begin{minipage}{12cm}
\begin{eqnarray*}
𝔏⟦c ⟧ &=& c\\
𝔏⟦x⟧ &=& x\\
𝔏⟦(l_i = M_i)_{i=1}^n⟧ &=& (l_i=𝔏⟦M_i⟧)_{i=1}^n\\
𝔏⟦N.l⟧ &=& 𝔏⟦N⟧.l\\
𝔏⟦\kw{fun}(x_i|_{i=1}^n)~\{M\}⟧ &=&  (\kw{fun}(x_i|_{i=1}^n)~\{𝔏⟦M⟧\})\\
𝔏⟦M(N_i|_{i=1}^n) ⟧ &=& 𝔏⟦M ⟧(𝔏⟦N_i⟧|_{i=1}^n) \\
𝔏⟦\kw{var}~x=M;N ⟧ &=& \kw{var}~x = 𝔏⟦M ⟧; 𝔏⟦N⟧\\
𝔏⟦\coll{} ⟧  &=& \coll{}\\
𝔏⟦\coll{M} ⟧  &=& \coll{\rcd{\mathsf{data} = 𝔏⟦M⟧,\mathsf{prov} = \coll{}}}\\
𝔏⟦M \plusplus N ⟧  &=& 𝔏⟦M⟧ \plusplus 𝔏⟦N⟧
\\
𝔏⟦\kw{if}~(L)~\{M\}~\kw{else}~\{N\}⟧ &=& \kw{if}~(𝔏⟦ L⟧)~\{𝔏⟦M⟧\}~\kw{else}~\{𝔏⟦N⟧\}\\
𝔏⟦\kw{query}~\{M\}⟧ &=& \kw{query}~\{𝔏⟦M⟧\}\\
𝔏⟦\kw{empty}~(M)⟧ &=& \kw{empty}~(𝔏⟦M⟧)\\
𝔏⟦\kw{for}~(x \rarr L)~M⟧ &=&\begin{array}[t]{l}
\kw{for}~(y \rarr 𝔏⟦L⟧)\\
\quad \kw{for}~(z \rarr 𝔏⟦M⟧[x \mapsto y.\mathsf{data}])\\
\qquad [(\mathsf{data} = z.\mathsf{data}, \mathsf{prov} = y.\mathsf{prov} \plusplus z.\mathsf{prov})]
\end{array} \\
𝔏⟦\kw{where}(M)~N⟧ &=& \kw{where}(𝔏⟦M⟧)~(𝔏⟦N⟧)\\
 𝔏⟦\kw{for}~(x \drarr  L) ~M⟧ &=&
\begin{array}[t]{l}
\kw{for}~(y \rarr 𝔏⟦L⟧)\\
\quad \kw{for}~(z \rarr 𝔏⟦M⟧[x \mapsto y.\mathsf{data}])\\
\qquad [(\mathsf{data} = z.\mathsf{data}, \mathsf{prov} = y.\mathsf{prov} \plusplus z.\mathsf{prov})]
\end{array}
\\
𝔏⟦\kw{lineage}~\{M\}⟧ &=& \kw{query}~\{ 𝔏⟦M⟧\}
\end{eqnarray*}
\end{minipage}
\begin{eqnarray*}
𝔏⟦\kw{table}~n~\kw{with}~ (R)⟧ &=& \kw{for} (x \drarr \kw{table}~n~\kw{with} ~(R)) [(\mathsf{data} = x, \mathsf{prov} = [(n, x.\mathsf{oid})])]
\end{eqnarray*}
\caption{Translation of \LLinks to \Links: inner translation}
\label{fig:lineage-inner-translation}
\end{figure*}
\begin{figure*}[tb]
\small
\begin{eqnarray*}
𝔏^*⟦M⟧ &=& 𝔏⟦M⟧[x_i \mapsto  d2l⟦A_i⟧(x_i) |_{i=1}^n] \\
&& \text{ where
           $x_1:A_1,\ldots,x_n:A_n$ are the free variables of $M$}\\
d2l⟦A⟧ &:& 𝔇⟦A⟧ \arr 𝔏⟦A⟧\\
d2l⟦O⟧(x) &=& x\\
d2l⟦A \arr B⟧(f) &=& f.2\\
d2l⟦(l_1:A_1,\ldots,l_n:A_n)⟧(x) &=&
                                     (l_1:d2l⟦A_1⟧(x.l_1),\ldots,l_n:d2l⟦A_n⟧(x.l_n))\\
d2l⟦[A]⟧(y) &=& \kw{for}(x \rarr y)
                [(\mathsf{data}=d2l⟦A⟧(x),\mathsf{prov}=[])]\\
d2l⟦\kw{table}(R)⟧(t) &=& t.2()
\end{eqnarray*}
\caption{Translation of \LLinks to \Links: term translation}
\label{fig:lineage-term-translation}
\end{figure*}

One subtlety here is that lineage blocks need not be closed,
and so may refer to variables that were defined (and will be bound to
values at runtime) outside of the lineage block.  This could cause
problems: for example, if we bind $x$ to a collection $[1,2,3]$
outside a lineage block and refer to it in a comprehension inside such
a block, then uses of $x$ will expect the collection elements to be
records such as $\rcd{\mathsf{data}=1,\mathsf{prov}=L}$ rather than
plain numbers.  Therefore, such variables need to be adjusted so that
they will have appropriate structure to be used within a lineage
block.  The auxiliary type-indexed function $d2l⟦A⟧$ in Figure~\ref{fig:lineage-term-translation} accomplishes this by mapping a
value of type $𝔇⟦A⟧$ to one of type $𝔏⟦A⟧$.  We define $𝔏^*⟦-⟧$ as a
function that applies $𝔏⟦-⟧$ to its argument and substitutes all free
variables $x:A$ with $d2l⟦A⟧(x)$.

The $𝔇⟦-⟧$ translation also has to account for functions that are
defined outside lineage blocks but may be called either outside or
inside a lineage block.  To support this, the case for functions in
the $𝔇⟦-⟧$ translation creates a pair, whose first component is the
recursive $𝔇⟦-⟧$ translation of the function, and whose second
component uses the $𝔏^*⟦-⟧$ translation to create a version of the
function callable from within a lineage block.  (We use $𝔏^*⟦-⟧$
because functions also need not be closed.)  Function calls outside
lineage blocks are translated to project out the first component;
function calls inside such blocks are translated to project out the
second component (this is actually accomplished via the $A \arr B$
case of $d2l$.)

Finally, notice that the $𝔇⟦-⟧$ translation maps table types and table
references to pairs.  This is similar to the $𝔚⟦-⟧$ translation, so we
do not explain it in further detail; the main difference is that we
just use the $\mathsf{oid}$ field to assign default provenance to all
rows.






For example, if we wrap the query from Figure~\ref{fig:boat-tours} in a \lstinline!lineage! block it will be rewritten to this:
\begin{lstlisting}[language=Links]
for (y_a <- agencies.2())
  for (z_a <- for (y_e <- externalTours.2())
                  for (z_e <- [(data = (name = y_a.data.name, phone = y_a.data.phone),
                                   prov = [])])
                  where (y_a.data.name == y_e.data.name && y_e.data.type == "boat")
                    [(data = z_e.data, prov = y_e.prov ++ z_e.prov)])
    [(data = z_a.data, prov = y_a.prov ++ z_a.prov)]
\end{lstlisting}
Once \lstinline!agencies! and \lstinline!externalTours!
are inlined, \Links's built-in normalization algorithm simplifies this query
to:
\begin{lstlisting}[language=Links]
for (y_a <- table "Agencies" with ...)
  for (y_e <- table "ExternalTours" with ...)
  where (y_a.data.name == y_e.data.name && y_e.data.type == "boat")
    [(data = (name = y_a.data.name,phone = y_a.data.phone),
      prov = [("Agencies", y_a.oid), ("ExternalTours",y_e.oid)])]
\end{lstlisting}

Before considering the main type-preservation result, we state some
auxiliary lemmas with corresponding proofs in \ref{lem:lin-helpers-proof}:
\begin{lemma}\label{lem:lin-helpers}
\begin{enumerate}
\item If $A :: \QueryType$ then  $𝔇⟦A⟧ = 𝔇⟦𝔏⟦A⟧⟧$.
\item If $\Gamma \vdash M : 𝔇⟦A⟧$ then $\Gamma\vdash d2l(M) : 𝔏⟦A⟧$.
\end{enumerate}
\end{lemma}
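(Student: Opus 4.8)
The plan is to prove both parts by structural inductions that follow the type-indexed definitions of the two translations. For the first part, the fact I would isolate first is that the doubling translation $\DD{-}$ acts as the identity on every type whose syntax tree is built solely from base types, records, and collections --- that is, one that mentions neither the function arrow $\arr$ nor the $\kw{table}$ constructor. This is a one-line induction, since by Figure~\ref{fig:llinks-doubling-translation} the map $\DD{-}$ does anything nontrivial only in the $\arr$ and $\kw{table}$ cases and is otherwise a congruence. I would then observe that $\LL{-}$ sends query types into exactly this function- and table-free fragment: $\LL{O}=O$, a record type goes to a record of such types, and each $\coll{C}$ goes to $\coll{\Lin(\LL{C})}$, where $\Lin$ only adds a $\mathsf{data}$ field holding $\LL{C}$ and a $\mathsf{prov}$ field of type $\coll{(\StringTy,\IntTy)}$, all again free of $\arr$ and $\kw{table}$. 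Combining the two observations by induction on the derivation of $A :: \QueryType$ yields $\DD{\LL{A}}=\LL{A}$, the coherence between the translations needed downstream when a $\kw{lineage}$ block is doubled into a $\kw{query}$ block. The only clause requiring attention is the collection case, where the $\Lin$ wrapper appears; the base and record cases are immediate from the induction hypothesis.

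For the second part I read $d2l(M)$ as $d2l\sem{A}(M)$ and show that $d2l\sem{A}$ genuinely inhabits its advertised type $\DD{A} \arr \LL{A}$, again by induction on $A$ following the clauses of its definition in Figure~\ref{fig:lineage-term-translation}. At base type $d2l\sem{O}$ is the identity and $\DD{O}=O=\LL{O}$; at a function type $d2l\sem{B \arr C}(f)=f.2$ projects the second component of the doubled pair $\DD{B\arr C}=(\DD{B}\arr\DD{C},\,\LL{B}\arr\LL{C})$, whose type is exactly $\LL{B}\arr\LL{C}=\LL{B\arr C}$; at a record type I project each field to get $M.l_i : \DD{A_i}$, apply the induction hypothesis for $d2l\sem{A_i}(M.l_i):\LL{A_i}$, and reassemble; and at a table type $d2l\sem{\kw{table}(R)}(t)=t.2()$ forces the delayed provenance view, whose result type is $\LL{\coll{\rcd{R}}}=\LL{\kw{table}(R)}$. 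Each of these is a direct application of the \Links typing rules (Projection, App, Record) together with the definitions of $\DD{-}$ and $\LL{-}$.

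The one step I expect to be the genuine obstacle is the collection case of the second part. There $d2l\sem{\coll{B}}(M)$ is the comprehension $\kw{for}\,(x \rarr M)\,[(\mathsf{data}=d2l\sem{B}(x),\mathsf{prov}=[])]$, and I must check it is well typed at $\LL{\coll{B}}=\coll{\Lin(\LL{B})}$. From $M:\DD{\coll{B}}=\coll{\DD{B}}$ the For-List rule binds $x:\DD{B}$; the induction hypothesis gives $d2l\sem{B}(x):\LL{B}$, the empty list is typed at $\coll{(\StringTy,\IntTy)}$, so the body record has type $\Lin(\LL{B})$ and the whole comprehension has type $\coll{\Lin(\LL{B})}$, as required. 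Once this case and the function- and table-free observation from the first part are in place, the remaining cases are routine bookkeeping that follows by induction.
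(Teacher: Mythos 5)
Your proof is correct and follows essentially the same route as the paper's: part~1 by induction over $A :: \QueryType$, using the observation that $\DD{-}$ is the identity on arrow- and table-free types together with the fact that $\LL{-}$ maps query types into that fragment, and part~2 by induction on $A$ following the clauses of $d2l$, with the function, table and collection cases discharged exactly as in the appendix (projection $.2$, forcing $t.2()$, and the \textsc{For-List} derivation for the comprehension). One point worth flagging: what you actually establish in part~1 is $\DD{\LL{A}} = \LL{A}$ (together with $\DD{A}=A$), not the equation $\DD{A} = \DD{\LL{A}}$ as literally stated; the literal equation fails already for $A=\coll{\IntTy}$, since $\DD{\coll{\IntTy}}=\coll{\IntTy}$ whereas $\DD{\LL{\coll{\IntTy}}}=\coll{\Lin(\IntTy)}$. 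Your version is the one the paper actually needs in the $\kw{lineage}$ case of Theorem~\ref{thm:lin-type-preservation}, where $\kw{query}~\{\LLstar{M}\}$ must be shown to have type $\DD{\LL{\coll{A}}}$, so you have in effect silently repaired a typo in the statement; you are also more careful than the paper's own one-line argument, which asserts that $\LL{-}$ is the identity on such types when in fact it only preserves the arrow- and table-free fragment.
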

 The type-preservation property for the translation from \LLinks to \Links is
stated as follows:
\begin{theorem}\label{thm:lin-type-preservation}
  Let $M$ be given such that $\Gamma \vdash_\LLinks M : A$.  Then:
  \begin{enumerate}
 \item $𝔏⟦\Gamma⟧ \vdash_\Links 𝔏⟦M⟧ : 𝔏⟦A⟧ $
 \item $𝔇⟦\Gamma⟧ \vdash_\Links 𝔏^*⟦M⟧ : 𝔏⟦A⟧$
  \item $𝔇⟦\Gamma⟧ \vdash_\Links 𝔇⟦M⟧ : 𝔇⟦A⟧$
  \end{enumerate}
\end{theorem}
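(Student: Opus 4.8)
The plan is to prove the three statements simultaneously, but in a staged order that respects the dependencies among the three translations: I would first establish part (1) by induction, then derive part (2) from it, and finally prove part (3) by a second induction that appeals to parts (1) and (2). This ordering is forced by the definitions in Figures~\ref{fig:lineage-outer-translation}--\ref{fig:lineage-term-translation}. The inner translation $𝔏⟦-⟧$ never mentions $𝔇⟦-⟧$, so part (1) is self-contained; $𝔏^*⟦-⟧$ is literally $𝔏⟦-⟧$ followed by a substitution on the free variables, so part (2) is a corollary of part (1); and $𝔇⟦-⟧$ refers back to $𝔏^*⟦-⟧$ in exactly the two places where it crosses into lineage mode (the $\kw{fun}$ and $\kw{lineage}$ cases), so part (3) may consume parts (1) and (2) without any circularity arising.

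For part (1) I would induct on the derivation of $\Gamma \vdash_\LLinks M : A$. The large majority of cases are pure congruences, where $𝔏⟦-⟧$ commutes with the term constructor and the claim follows by the induction hypothesis on immediate subterms; along the way I need the easy auxiliary fact that $A :: \QueryType$ implies $𝔏⟦A⟧ :: \QueryType$, so that the Query and Lineage rules still apply after translation. The genuinely interesting case is the table comprehension $\kw{for}~(x \drarr L)~M$, whose image is the nested comprehension of Figure~\ref{fig:lineage-inner-translation}: here $y$ is bound to an element of $𝔏⟦L⟧$ of type $\Lin(\rcd{R})$, I substitute $x ↦ y.\mathsf{data}$ into $𝔏⟦M⟧$, and I reassemble a $\Lin$-record whose $\mathsf{prov}$ field concatenates the two provenance lists. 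Checking that the result has type $𝔏⟦\coll{B}⟧ = \coll{\Lin(𝔏⟦B⟧)}$ needs the definition of $\Lin$ together with a standard substitution lemma for replacing $x : \rcd{R}$ by $y.\mathsf{data} : \rcd{R}$; the list-literal case $𝔏⟦\coll{M}⟧$ relies similarly on $\coll{}$ admitting the type $\coll{(\StringTy,\IntTy)}$ of the empty provenance.

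Part (2) I would obtain directly from part (1). Since $𝔏^*⟦M⟧ = 𝔏⟦M⟧[x_i ↦ d2l⟦A_i⟧(x_i)]$ where $x_i{:}A_i$ are the free variables of $M$, part (1) yields $𝔏⟦\Gamma⟧ \vdash_\Links 𝔏⟦M⟧ : 𝔏⟦A⟧$, while Lemma~\ref{lem:lin-helpers}(2) gives $𝔇⟦\Gamma⟧ \vdash_\Links d2l⟦A_i⟧(x_i) : 𝔏⟦A_i⟧$ for each $i$. A simultaneous substitution lemma for \Links then lets me replace each $x_i$, which carries type $𝔏⟦A_i⟧$ in $𝔏⟦\Gamma⟧$, by the corresponding $d2l$-term of the same type that is well-typed in $𝔇⟦\Gamma⟧$, producing $𝔇⟦\Gamma⟧ \vdash_\Links 𝔏^*⟦M⟧ : 𝔏⟦A⟧$ as required.

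Finally, for part (3) I would again induct on the typing derivation. The congruence cases are routine. The load-bearing cases are the four where $𝔇$ builds a pair or switches mode: for $\kw{fun}$ the first pair component is typed by the induction hypothesis for part (3) on the body and the second by part (2) applied to the whole function term, matching $𝔇⟦\rcd{A_i}\arr B⟧ = (𝔇⟦\rcd{A_i}⟧\arr 𝔇⟦B⟧,\, 𝔏⟦\rcd{A_i}⟧\arr 𝔏⟦B⟧)$; for application I project the first component and apply it; for $\kw{table}$ the thunked second component is typed by part (1); and for $\kw{lineage}~\{M\}$, whose source type is $𝔏⟦\coll{A}⟧$, I use part (2) to obtain $𝔇⟦\Gamma⟧ \vdash_\Links 𝔏^*⟦M⟧ : 𝔏⟦\coll{A}⟧$ and then reconcile this with the target $𝔇⟦𝔏⟦\coll{A}⟧⟧$ using Lemma~\ref{lem:lin-helpers}(1), which relates $𝔇⟦-⟧$ and $𝔇⟦𝔏⟦-⟧⟧$ on query types. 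I expect this last reconciliation, together with keeping the three type translations aligned across every mode switch, to be the main obstacle: the bookkeeping at the boundaries where $𝔇$, $𝔏$ and $𝔏^*$ meet is delicate, and it is precisely the identities of Lemma~\ref{lem:lin-helpers} and the preservation of $\QueryType$-hood that make the $\kw{fun}$, $\kw{table}$ and $\kw{lineage}$ cases typecheck.
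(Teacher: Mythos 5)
Your proposal matches the paper's proof essentially step for step: part (1) by induction with the comprehension and list-literal cases as the interesting ones, part (2) as a corollary of part (1) via Lemma~\ref{lem:lin-helpers}(2) and the substitution lemma, and part (3) by a second induction whose $\kw{fun}$, application, $\kw{table}$ and $\kw{lineage}$ cases invoke parts (1) and (2) and reconcile the types with Lemma~\ref{lem:lin-helpers}(1). The staging you identify to avoid circularity is exactly the structure of the paper's argument, so no further comparison is needed.
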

\begin{proof}
The proof of the first part is by induction on the structure of typing
derivations.  The interesting cases are for the \textsc{List},
\textsc{ForList} and \textsc{ForTable} cases, where lineage
annotations are created or propagated.  The detailed derivations are
given in \ref{thm:lin-type-preservation-proof}.

For the second part, suppose $\Gamma \vdash M : A$.  Then by part 1 we
know $𝔏⟦\Gamma⟧ \vdash 𝔏⟦M⟧ : 𝔏⟦A⟧$. Clearly, for each $x_i :A_i$ in
$\Gamma$ we have $ \DD{\Gamma} \vdash x_i:\DD{A_i}$, so it follows
that $ \DD{\Gamma} \vdash d2l(x_i) : \LL{A_i}$ for each $i$ by Lemma~\ref{lem:lin-helpers}(2).  Using the (standard)
substitution lemma for \Links typing, we can conclude $\DD{\Gamma}
\vdash \LLstar{M} : \LL{A}$.

Finally, for the third part, again the proof is by induction on the
structure of the derivation of $\Gamma \vdash M : A$.  Most cases are
straightforward; we show a few representative cases for
(single-argument) functions and the $\kw{lineage}$ keyword,
illustrating the need for duplicating code in the type translation for
functions and the use of $\LLstar{-}$.  The cases for updates and
table references are similar to those for \WLinks, but simpler because
the types of the fields do not change in the translation from \LLinks
to \Links.  We illustrate the case for translation of functions, since
it is one of the subtler cases; the cases for function application and
the $\kw{lineage}$ keyword are given in the appendix.
If the derivation is of the form:
\[
\infer[Fun]
{Γ , x : A⊢ M : B}
{Γ ⊢ \kw{fun }\rcd{ x} \braces{M}: A \arr B}
\]
then by induction we have $\DD{Γ} ,x : \DD{A} ⊢ \DD{M}
: \DD{B}$ and by part 2 we know that $\DD{\Gamma} \vdash \LLstar{\kw{fun }\rcd{ x } \braces{M}}
:  \LL{\rcd{A } \arr B}$.  We can proceed as follows:

\[\begin{array}{ll}
\DD{Γ} , x:\DD{A} ⊢ \DD{M}: \DD{B} 
&\text{by IH}\\
\DD{Γ} ⊢ \kw{fun }\rcd{ x}
  \braces{\DD{M}}: \DD{ A} \arr \DD{B}
&\text{by rule}\\
 \DD{\Gamma} \vdash \LLstar{\kw{fun }\rcd{ {x} } \braces{M}}
  :  \LL{A} \arr \LL{B}
& \text{by part 2}\\
\DD{Γ} ⊢ (\kw{fun }\rcd{x }
  \braces{\DD{M}},\LLstar{\kw{fun }\rcd{ x }
    \braces{M}}): \DD{A \arr B}
&\text{by rule}
\end{array}
\]
where the final step relies on the fact that $\DD{A \arr B} = (\DD{A}
\arr \DD{B}, \LL{A} \arr \LL{B})$.
\end{proof}



As with the where-provenance translation, we have proven the
correctness of lineage annotation propagation
(Theorem~\ref{thm:lin-correctness}) and type-preservation of the
translation (Theorem~\ref{thm:lin-type-preservation}).
The latter is a partial sanity check, but no proof, that this translation faithfully implements the semantics.

\section{Experimental Evaluation}
\label{sec:evaluation}

We implemented the two variants of \Links with language\hyp{}integrated
provenance, \WLinks and \LLinks, featuring our extensions for \wpr and
lineage, respectively.
In this section we compare them against plain \Links on a number of queries to determine their overhead.
We also compare both variants against \emph{Perm}, a database-integrated provenance system.

Both provenance variants of \Links build on its query shredding
capabilities as described by \citet{SIGMOD2014CheneyLW}.  They used
queries against a simple test database schema (see
Figure~\ref{fig:schema}) that models an organization with departments,
employees and external contacts.  We adapt some of their benchmarks
to return \wpr and lineage and compare against the same queries
without provenance.

Unlike \citet{SIGMOD2014CheneyLW} our database does not include an additional \lstinline!id! field, instead we use PostgreSQL's OIDs, which are used for identification of rows in \wpr and lineage.
We populate the databases at varying sizes using randomly generated data in the same way \citet{SIGMOD2014CheneyLW} describe it:
``We vary the number of departments in the organization from 4 to 4096
(by powers of 2). Each department has on average 100 employees and
each employee has 0–2 tasks.''
The largest database, with 4096 departments, is 142~MB on disk when exported by \texttt{pg\_dump} to a SQL file (excluding OIDs).
We create additional indices on \lstinline!tasks(employee)!, \lstinline!tasks(task)!, \lstinline!employees(dept)!, and \lstinline!contacts(dept)!.

All tests were performed on an otherwise idle desktop system with a 3.2~GHz quad-core CPU, 8~GB RAM, and a 500~GB HDD.
The system ran Linux (kernel 4.5.0) and we used PostgreSQL 9.4.2 as the database engine.
\Links and its variants \WLinks and \LLinks are interpreters written in OCaml, which were compiled to native code using OCaml 4.02.3.
The exact versions of \WLinks and \LLinks used for this set of benchmarks can be downloaded from \url{https://www.inf.ed.ac.uk/research/isdd/admin/package?download=188} and \url{https://www.inf.ed.ac.uk/research/isdd/admin/package?download=189} respectively.

\subsection{Where-provenance}

To be usable in practice, \wpr should not have unreasonable runtime overhead.
We compare queries \emph{without} any \wpr against queries that calculate \wpr on \emph{some} of the result and queries that calculate \emph{full} \wpr wherever possible.
This should give us an idea of the overhead of \wpr on typical queries, which are somewhere in between full and no provenance.

The nature of \wpr suggests two hypotheses:
First, we expect the asymptotic cost of \wpr-annotated queries to be the same as that of regular queries.
Second, since every single piece of data is annotated with a triple,
we expect the runtime of a fully \wpr-annotated query to be at most four times the runtime of an unannotated query just for handling more data.


We only benchmark \emph{default} \wpr, that is table name, column name, and the database-generated OID for row identification.
External provenance is computed by user-defined database-executable functions and can thus be arbitrarily expensive.


\begin{figure}
\begin{lstlisting}[language=Links,style=infigure]
table departments with (oid: Int, name: String)
table employees with (oid: Int, dept: String, name: String, salary: Int)
table tasks with (oid: Int, employee: String, task: String)
table contacts with (oid: Int, dept: String, name: String, client: Bool)
\end{lstlisting}
\caption{Benchmark database schema, cf. \citet{SIGMOD2014CheneyLW}.}
\label{fig:schema}
\vspace{1cm}
\begin{lstlisting}[language=WLinks,style=infigure]
# Q1 : [(contacts: [("client": Prov(Bool), name: Prov(String))], ...
for (d <-- departments)
  [(contacts = contactsOfDept(d),
    employees = employeesOfDept(d),
    name = d.name)]

# Q2 : [(d: Prov(String))]
for (d <- q1())
where (all(d.employees, fun (e) {
      contains(map(fun (x) { data x }, e.tasks), "abstract") }))
  [(d = d.name)]

# Q3 : [(b: [Prov(String)], e: Prov(String))]
for (e <-- employees)
  [(b = tasksOfEmp(e), e = e.name)]

# Q4 : [(dpt:Prov(String), emps:[Prov(String)])]
for (d <-- departments)
  [(dpt = (d.name),
    emps = for (e <-- employees)
              where ((data d.name) == (data e.dept))
                [(e.name)])]

# Q5 : [(a: Prov(String), b: [(name: Prov(String), ...
for (t <-- tasks)
  [(a = t.task, b = employeesByTask(t))]

# Q6 : [(d: Prov(String), p: [(name: Prov(String), tasks: [String])])]
for (x <- q1())
  [(d = x.name,
    p = get(outliers(x.employees),
            fun (y) { map(fun (z) { data z }, y.tasks) }) ++
        get(clients(x.contacts), fun (y) { ["buy"] }))]
\end{lstlisting}
\caption{``allprov'' benchmark queries used in experiments}
\label{fig:full-where}
\end{figure}

We use the queries with nested results from \citet{SIGMOD2014CheneyLW} and use them unchanged for comparison with the two variants with varying amounts of \wpr.

For \emph{full} \wpr we change the table declarations to add provenance to every field, except the OID.
The full declarations can be found in Figure~\ref{fig:table-declarations-where}.
This changes the types, so we have to adapt the queries and some of
the helper functions used inside the queries, see Figure~\ref{fig:prov-helpers}.
Figure~\ref{fig:full-where} shows the benchmark queries with full
provenance.
See \ref{appendix:benchmark-code} for the full code, including table declarations and helper functions.
Note that for example query Q2 maps the \lstinline[language=WLinks]!data! keyword over the employees tasks before comparing the tasks against \lstinline!"abstract"!.
Query Q6 returns the outliers in terms of salary and their tasks, concatenated with the clients, who are assigned the \emph{fake} task \lstinline!"buy"!.
Since the fake task is not a database value it cannot have \wpr.
\WLinks type system prevents us from pretending it does.
Thus, the list of tasks has type \lstinline![String]!, not \lstinline![Prov(String)]!.

The queries with \emph{some} \wpr are derived from the queries with full provenance.
Query Q1 drops provenance from the contacts' fields.
Q2 returns data and provenance separately.
It does not actually return less information, it is just less type-safe.
Q3 drops provenance from the employee.
Q4 returns the employees' provenance only, and drops the actual data.
Q5 does not return provenance on the employees fields.
Q6 drops provenance on the department.
(These queries make use of some auxiliary functions which
are included in the appendix.)

\noindent\textbf{Setup.}
We have three \WLinks programs, one for each level of \wpr annotations.
For each database size, we drop all tables and load a dump from disk, starting with 4096.
We then run \WLinks three times, once for each program in order \emph{all}, \emph{some}, \emph{none}.
Each of the three programs performs and times its queries 5 times in a row and reports the median runtime in milliseconds.
The programs measure runtime using the \WLinks built-in function \lstinline!serverTimeMilliseconds! which in turn uses OCaml's \lstinline!Unix.gettimeofday!.

\begin{figure}
\centering{\sf \include{graph} }

\caption{Where-provenance query runtimes.}
\label{fig:wpr-graph}

\begin{center}
  \begin{tabular}{lrrr|r}
    \toprule
    Query & \multicolumn{3}{c}{median runtime$^*$ in ms} &  overall slowdown\\
          & allprov & someprov & noprov &  (geom mean)\\
    \midrule
    Q1 & 6068 & 3653 & 1763 & 2.26 \\
    Q2 & 60 & 60 & 60 & 1.52 \\
    Q3 & 8100 & 8064 & 4497 & 1.88 \\
    Q4 & 1502 & 1214 & 573 & 2.8 \\
    Q5 & 6778 & 3457 & 2832 & 1.85 \\
    Q6 & 17874 & 18092 & 16716 & 1.22 \\
    \bottomrule
  \end{tabular}\vspace{.5em}
\end{center}
\caption{Median runtimes for largest dataset (Q1 at 512 departments,
  Q5 at 1024 departments, Q6 at 2048 departments, others at 4096
  departments) and geometric means of overall slowdowns.}
\label{fig:wpr-slowdown}
\end{figure}

\noindent\textbf{Data.}
Figure~\ref{fig:wpr-graph} shows our experimental results.
We have one plot for every query, showing the database size on the x-axis and the median runtime over five runs on the y-axis.
Note that both axes are logarithmic.
Measurements of full \wpr are in black circles, no provenance are yellow triangles, some provenance is blue squares.
Based on test runs we had to exclude some results for queries at larger
database sizes because the queries returned results that were too
large for \Links to construct as in-memory values.

The graph for query Q2 looks a bit odd.
This seems to be due to Q2 not actually returning any data for some database sizes, because for some of the (randomly generated) instances there just are no departments where all employees have the task \lstinline!"abstract"!.

The table in Figure~\ref{fig:wpr-slowdown} lists all queries with their median runtimes with full, some, and no provenance.
The time reported is in milliseconds, for the largest database instance that both variants of a query ran on.
For most queries this is 4096; for Q1 it is 512, 1024 for Q5, and 2048 for Q6.
Figure~\ref{fig:wpr-slowdown} also reports the slowdown of full \wpr versus no provenance as the geometric mean across all database sizes, for each query.
The slowdown ranges from 1.22 for query Q6 up to 2.8 for query Q4.
Note that query Q2 has the same runtime for all variants at 4096 departments, but full provenance is slower for some database sizes, so the overall slowdown is $>1$.


\noindent\textbf{Interpretation.}
The graphs suggest that the asymptotic cost of all three variants is the same, confirming our hypothesis.
This was expected, anything else would have suggested a bug in our implementation.

The multiplicative overhead seems to be larger for queries that return more data.
Notably, for query Q2, which returns no data at all on some of our test database instances, the overhead is hardly visible.
The raw amount of data returned for the full \wpr queries is three to four times that of a plain query.
Most strings are short names and provenance adds two short strings and a number for table, column, and row.
The largest overhead is 2.8 for query Q4, which exceeds our expectations due to just raw additional data needing to be processed.

\subsection{Lineage}
We expect lineage to have different performance characteristics than
\wpr.  Unlike \wpr, lineage is conceptually set valued.  A query with
few actual results could have huge lineage, because lineage is
combined for equal data.  In practice, due to \Links using multiset
semantics for queries, the amount of lineage is bounded by the shape of
the query.  Thus, we expect lineage queries to have the same
asymptotic cost as queries without lineage.  However, the
lineage translation still replaces single comprehensions by nested
comprehensions that combine lineage.  We expect this to have a larger
impact on performance than \wpr, where we only needed to trace more
data through a query.

\begin{figure}
\begin{lstlisting}[language=LLinks,style=infigure]
typename Lin(a) = (data: a, prov: [(String, Int)]);

# AQ6 : [Lin((department: String, outliers: [Lin((name: String, ...
for (d <- for (d <-- departments)
            [(employees = for (e <-- employees) where (d.name == e.dept)
                                [(name = e.name, salary = e.salary)],
              name = d.name)])
  [(department = d.name,
    outliers = for (o <- d.employees) where (o.salary > 1000000 || o.salary < 1000) [o])]

# Q3 : [Lin((b: [Lin(String)]), e: String)]
for (e <-- employees) [(b = tasksOfEmp(e), e = e.name)]

# Q4 : [Lin((dpt: String, emps: [Lin(String)]))]
for (d <-- departments)
  [(dpt = d.name,
    emps = for (e <-- employees) where (d.name == e.dept) [e.name])]

# Q5 : [Lin((a: String, b: [Lin((name: String, salary: Int, ...
for (t <-- tasks) [(a = t.task, b = employeesByTask(t))]

# Q6N : [Lin((department: String, people:[Lin((name: String, ...
for (x <-- departments)
  [(department = x.name,
    people = (for (y <-- employees)
                 where (x.name == y.dept && (y.salary < 1000 || y.salary > 1000000))
                    [(name = y.name,
                      tasks = for (z <-- tasks) where (z.employee == y.name) [z.task])])
           ++  (for (y <-- contacts) where (x.name == y.dept && y."client")
                    [(name = y.dept, tasks = ["buy"])]))]

# Q7 : [Lin((department: String, employee: (name: String, ...
for (d <-- departments) for (e <-- employees)
where (d.name == e.dept && e.salary > 1000000 || e.salary < 1000)
  [(employee = (name = e.name, salary = e.salary), department = d.name)]

# QC4 : [Lin((a: String, b: String, c: [Lin((doer: String, ...
for (x <-- employees) for (y <-- employees)
where (x.dept == y.dept && x.name <> y.name)
  [(a = x.name, b = y.name,
    c = (for (t <-- tasks) where (x.name == t.employee) [(doer = "a", task = t.task)])
     ++ (for (t <-- tasks) where (y.name == t.employee) [(doer = "b", task = t.task)]))]

# QF3 : [Lin((String, String))]
for (e1 <-- employees) for (e2 <-- employees)
where (e1.dept == e2.dept && e1.salary == e2.salary && e1.name <> e2.name)
  [(e1.name, e2.name)]

# QF4 : [Lin(String)]
(for (t <-- tasks) where (t.task == "abstract") [t.employee]) ++
(for (e <-- employees) where (e.salary > 50000) [e.name])
\end{lstlisting}
\caption{Lineage queries used in experiments}
\label{fig:lineage-queries}
\end{figure}

\begin{figure}
\centering
{\sf \include{lingraph} }
\caption{Lineage query runtimes.}\label{fig:lin-graph}
\begin{center}
  \begin{tabular}{lrrr}
    \toprule
    Query & \multicolumn{2}{c}{median runtime in ms} & overall slowdown\\
          & lineage & nolineage &  (geom mean)\\
    \midrule
    AQ6 & 493 & 108 & 3.8\\
    Q3 & 4234 & 969 &3.76\\
    Q4 & 1208 & 125 &7.55\\
    Q5 & 13662 & 11851 &1.25\\
    Q6N & 15200 & 7872 & 2.38\\
    Q7 & 16766 & 1283 &4.17\\
    QC4 & 13291 & 4021 &1.53\\
    QF3 & 22298 & 2412 &6.71\\
    QF4 & 682 & 73 &6.49\\
    \bottomrule
  \end{tabular}\vspace{.5em}
\end{center}
\caption{Median runtimes at largest dataset (Q7 at 128 departments,
  QC4 at 16 departments, QF3 at 512 departments, others at 1024
  departments) and geometric means of overall slowdowns}
\label{fig:lin-slowdown}
\end{figure}

\lstset{language=LLinks}

Figure~\ref{fig:lineage-queries} lists the queries used in the lineage
experiments.  For lineage, queries are wrapped in a
\lstinline[language=LLinks]!lineage! block.  Our implementation does
not currently handle function calls in lineage blocks automatically,
so in our experiments we have manually written lineage-enabled
versions of the functions \lstinline!employeesByTask! and \lstinline!tasksOfEmp!, whose bodies are wrapped in a
\lstinline!lineage! block.  
We reuse some of the queries from the \wpr experiments, namely Q3, Q4, and Q5.
Queries AQ6, Q6N, and Q7 are inspired by query Q6, but not quite the same.
Queries QF3 and QF4 are two of the flat queries from \citet{SIGMOD2014CheneyLW}.
Query QC4 computes pairs of employees in the same department and their
tasks in a ``tagged union''.  Again, these queries employ some helper
functions which are included in an appendix.

We use a similar experimental setup to the one for \wpr.
We only use databases up to 1024 departments, because most of the queries are a lot more expensive.
Query QC4 has excessive runtime even for very small databases.
Query Q7 ran out of memory for larger databases.
We excluded them from runs on larger databases.

\noindent\textbf{Data.}
Figure~\ref{fig:lin-graph} shows our lineage experiment results.
Again, we have one plot for every query, showing the database size on the x-axis and the median runtime over five runs on the y-axis.
Both axes are logarithmic.
Measurements with lineage are in black circles, no lineage is shown as yellow triangles.

The table in Figure~\ref{fig:lin-slowdown} lists queries and their median runtimes with and without lineage.
The time reported is in milliseconds, for the largest database instance that both variants of a query ran on.
For most queries this is 1024; for Q7 it is 128, 16 for QC4, and 512 for QF3.
The table also reports the slowdown of lineage versus no lineage as the geometric mean over all database sizes.
(We exclude database size 4 for the mean slowdown in QF4 which reported taking 0 ms for no lineage queries which would make the geometric mean infinity.)
The performance penalty for using lineage ranges from query Q5 needing a quarter more time to query Q4 being more than 7 times slower than its counterpart.

\noindent\textbf{Interpretation.}
Due to \Links multiset semantics, we do not expect lineage to cause an
asymptotic complexity increase.
The experiments confirm this.
Lineage is still somewhat expensive to compute, with slowdowns ranging
from 1.25 to more than 7 times slower.
Further investigation of
the SQL queries generated by shredding is needed.

\subsection{Threats to validity}
Our test databases are only moderately sized.
However, our result sets are relatively large.
Query Q1 for example returns the whole database in a different shape.
\Links' runtime representation of values in general and database results in particular has a large memory overhead.
In practice, for large databases we should avoid holding the whole result in memory.
This should reduce the overhead (in terms of memory) of provenance significantly.
(It is not entirely clear how to do this in the presence of nested results and thus query shredding.)
In general, it looks like the overhead of provenance is dependent on the amount of data returned.
It would be good to investigate this more thoroughly.
Also, it could be advantageous to represent provenance in a special way.
In theory, we could store the relation and column name in a more
compact way, for example.

One of the envisioned main use cases of provenance is debugging.
Typically, a user would filter a query anyway to pin down a problem
and thus only look at a small number of results and thus also query
less provenance.  Our experiments do not measure this scenario but
instead compute provenance for all query results eagerly.  Thus, the
slowdown factors we showed represent worst case upper bounds that may
not be experienced in common usage patterns.

Our measurements do not include program rewriting time.
However, this time is only dependent on the lexical size of the
program and is thus fairly small and, most importantly, independent of the database size.
Since \Links is interpreted, it does not really make sense to distinguish translation time from execution time, but both the \wpr translation and the lineage translation could happen at compile time, leaving only slightly larger expressions to be normalized at runtime.
Across the queries above, the largest observed time spent rewriting \WLinks or \LLinks to plain \Links was 5 milliseconds with the arithmetic mean coming to 0.5 milliseconds.









\subsection{Comparison with Perm}
In this section we compare \WLinks and \LLinks to \emph{Perm}~\cite{ICDE2009GlavicA}, as an instance of a database-integrated provenance system.
This is very much a comparison between apples and oranges.

The subset of queries supported by both \Links variants and \emph{Perm} is limited.
Most of the queries above use nested results which are not supported by \emph{Perm}.
Many common flat relational queries use aggregations which are not supported by \Links.
Others do not have large or interesting provenance annotations, be it where-provenance or lineage.

For this comparison we use a synthetic dataset.
We create tables of integers $1, \hdots, n$ for $n=(10000, 100000, 1000000)$; a simple string representation of the number; an English language cardinal like ``one'', ``two'', \dots; and an English language ordinal (``first'', ``second'', \dots).

\vspace{0.6em}
\begin{tabular}{rlll}
  \toprule
  i & s & cardinal & ordinal \\
  \midrule
  1 & "1" & "one" & "first" \\
  2 & "2" & "two" & "second" \\
  $\vdots$ \\
  $n$ & "$n$" & "$en$" & "$n$th" \\
  \bottomrule
\end{tabular}\vspace{0.6em}

We create 64 copies of these tables at each size $n$ and call them i\_s\_c\_o\_$n$\_1, i\_s\_c\_o\_$n$\_2, \dots.
Their content is the same, but their OIDs are distinct.
The data loading scripts are 55\,MB, 640\,MB, and 7.8\,GB on disk.

We use the same machine as before to run both databases and database clients.
We use \emph{Perm} version 0.1.1, which is a fork of Postgres 8.3 which adds support for provenance.
We compiled from source, which required passing \lstinline!-fno-aggressive-loop-optimizations! to GCC 6.3.1 as it would otherwise miscompile.
This seems to be a known problem with Postgres 8.3, which \emph{Perm} 0.1.1 is based on.
\Links uses the current version of Postgres as its database backend, which is Postgres 9.6.3.

In this set of benchmarks, we measure wall clock time of single runs.
\Links queries execute the query and print the result to stdout which is ignored.
Printing uses \Links's native format with pretty printing (line breaks and indentation) disabled.
\emph{Perm} queries are executed using \lstinline!psql! with a ``harness'' like this:
\begin{lstlisting}[language=SQL]
\COPY (SQL query goes here) TO STDOUT WITH CSV
\end{lstlisting}

\subsubsection{Where-provenance}
We use a family of queries that join $m=(16, 32, 64)$ of the tables described above on their integer column and select the provenance-annotated cardinal column for each of them.
Thus, the \wpr \WLinks queries look like this (table declarations are in \ref{appendix:benchmark-code-perm}):
\begin{lstlisting}[language=WLinks,morekeywords={database,tablekeys}]
query { 
  for (t_1 <-- i_s_c_o_$n$_1) $\hdots$ for (t_$m$ <-- i_s_c_o_$n$_$m$)
  where (mod(t_1.i, 100) < 5 && t_1.i == t_2.i && $\hdots$ && t_1.i = t_$m$.i)
    [(c1 = t_1.cardinal, c2 = t_2.cardinal, $\hdots$, c$m$ = t_$m$.cardinal)] }
\end{lstlisting}

Testing revealed that \WLinks runs out of memory for the largest (n=1000000,m=64) query.
Rather than using smaller input databases, we filtered the result using \lstinline!mod(t_1.i, 100) < 5! as an additional condition in the where clause.

Unfortunately, \emph{Perm's} \wpr support is too restrictive and refuses to execute an equivalent query with the following error message: ``WHERE-CS only supports conjunctive equality comparisons in WHERE clause.''
Fortunately, \emph{Perm} has no problems computing the full result, so we used queries of the following form, \emph{without} filtering based on \lstinline!t_1.i % 100 < 5!.

\begin{lstlisting}[language=SQL,escapechar={\#},morekeywords={PROVENANCE,ON,CONTRIBUTION}]
SELECT PROVENANCE ON CONTRIBUTION (WHERE)
       t_1.cardinal AS c1, $\hdots$, t_$m$.cardinal AS c$m$
FROM i_s_c_o_$n$_1 AS t_1, $\hdots$, i_s_c_o_$n$_$m$ AS t_$m$
WHERE t_1.i = t_2.i AND $\hdots$ AND t_1.i = t_$m$.i
\end{lstlisting}

We execute variants without \wpr of both the \WLinks and \emph{Perm} queries.
For \WLinks we keep the table declarations as they are, but use the \lstinline[morekeywords=data]!data! keyword to project to just the data and rely on query normalization to not compute provenance.
We run a fifth set of queries against Postgres 9.6.3 which are just like the plain \emph{Perm} queries, but \emph{with} filtering, like the \WLinks queries.

\begin{figure}
\centering{\sf \include{where_group_n} }
\caption{Where-provenance times grouped by table size ($n$) and number of tables ($m$).
Note that wlinks and postgres queries are filtered, perm queries are not.}
\label{fig:where-group-n}
\end{figure}

Figure~\ref{fig:where-group-n} shows query runtimes in seconds grouped by size of tables ($n$) and number of tables joined ($m$).
Keep in mind that the \emph{Perm} variants return a lot more data.
In the table below we show result size in megabytes at $n=1000000$ for \WLinks with \wpr annotations, \emph{Perm} with annotations, and Postgres without annotations.
We measure the size simply as byte count of the printed result.
Examples of the output can be found in \ref{appendix:benchmark-code-perm}.

\vspace{0.6em}
\begin{tabular}{l|rrr}
  \toprule
              & m=16       & m=32       & m=64       \\
  \midrule
  \WLinks     & 89.2\,MB   & 179.1\,MB  & 359.1\,MB  \\
  \emph{Perm} & 1589.3\,MB & 3187.5\,MB & 6384.0\,MB \\
  Postgres    & 37.2\,MB   & 74.3\,MB   & 148.6\,MB  \\
  \bottomrule
\end{tabular}\vspace{0.6em}

Looking at the runtime difference between the \emph{Perm} queries without \wpr and the plain Postgres queries we see that the result size does not have a great impact on runtime.
In general, the numbers between systems are hard to compare, not just because of result size.
We only consider one family of highly synthetic queries and the experimental setup is not necessarily a realistic reflection of any real-world use.
However, we do observe some trends:
The runtime difference between processing 10x data (going down one row in the graph) is larger than the difference between systems, by far.
Doubling the number of tables considered also dominates difference between systems.
We conclude that the overhead of \wpr in both \emph{Perm} and \WLinks is moderate and the systems are roughly comparable.

\subsubsection{Lineage}
We use the same data as before and similar queries to compare \LLinks to \emph{Perm Influence Contribution Semantics} (PI-CS).
Lineage and PI-CS are not equivalent in general~\cite{G10a}, but for the queries we use here the annotations contain, more or less, the same information.

We use a family of queries similar to those for where-provenance.
Again we join $m=(16, 32, 64)$ tables, but this time we return only the first table's integer and English cardinal columns, and their lineage.
The number of joins is particularly interesting here because it increases the size of the provenance metadata without affecting the actual result size.

We run variants with lineage and PI-CS metadata, as well as just the plain queries.
Finally, we run the plain version of the \emph{Perm} query against the Postgres database used by \LLinks.
This time all variants, including \emph{Perm}, are filtered to 5\% of the result size, as seen below.
The \LLinks query and example output can be found in \ref{appendix:benchmark-code-perm}.





\begin{lstlisting}[language=SQL,escapechar={\#},morekeywords={PROVENANCE}]
SELECT PROVENANCE t_1.i, t_1.cardinal
FROM i_s_c_o_$n$_1 AS t_1, $\hdots$, i_s_c_o_$n$_$m$ AS t_$m$
WHERE t_1.i % 100 < 5 AND t_1.i = t_2.i AND $\hdots$ AND t_$(m-1)$.i = t_$m$.i
\end{lstlisting}

Instead of a list of annotations per result row, \emph{Perm} produces wider tables, adding columns to identify join partners.
Table rows are identified by the whole width, so for $m=64$ joined tables we have two columns for the actual result and $64 * 4$ columns of provenance metadata.
The example result below is transposed.

\vspace{0.6em}
\begin{tabular}{r|rrc}
  \toprule
  i&1&2&$\hdots$\\
  cardinal&one&two&$\hdots$\\
  \midrule
  prov\_public\_i\_s\_c\_o\_1000\_1\_i &1&2&$\hdots$\\
  prov\_public\_i\_s\_c\_o\_1000\_1\_s &1&2&$\hdots$\\
  prov\_public\_i\_s\_c\_o\_1000\_1\_cardinal &one&two&$\hdots$\\
  prov\_public\_i\_s\_c\_o\_1000\_1\_ordinal &first&second&$\hdots$\\
  $\vdots$\\
  \bottomrule
\end{tabular}\vspace{0.6em}

\begin{figure}
\centering{\sf \include{lineage_group_n} }
\caption{Lineage times grouped by relation size ($n$) and width ($m$). All queries are filtered to return only 5\% of results.}
\label{fig:lineage-group-n}
\end{figure}

We show query runtimes grouped by size of the tables ($n$) and number of tables joined ($m$) in Figure~\ref{fig:lineage-group-n}.
We omitted the largest \LLinks query (n=1000000, m=64); it ran for 33745 seconds, which would have distorted the graph too much.
This query just barely did not run out of memory, causing severe GC thrashing and leaving little memory for the database server and disk caches.

These timings are whole program execution and so include pre- and postprocessing steps.
\LLinks is translated to plain \Links, as described in Section~\ref{sec:translation-lineage}, which took less than 1 millisecond for all queries.
Query normalization for the lineage queries takes around 9 milliseconds for m=16, 41 milliseconds for m=32, and 194 milliseconds for m=64.
Postprocessing times (with data already in memory) range from almost 10 seconds for the lineage query at n=1000000, m=64 to 11 milliseconds for n=10000, m=16.

The queries executed by Postgres are on average a bit faster than the same queries executed by \emph{Perm}.
We did not investigate this further, a simple explanation would be that Postgres 9.6.3 is just a bit faster than Postgres 8.3 which is the version \emph{Perm} was forked from.

Below we show result size at $n=1000000$ for plain queries, and lineage queries at $m=16$ and $m=32$.
We measure the size simply as byte count of the printed result.
In some ways, the data is a worst case for \emph{Perm}, because the width of the result is so much smaller than the width of the annotations.
We can see this clearly in the result size table above.
Despite that, the query execution time overhead of lineage annotations is remarkably low in \emph{Perm}.

\vspace{0.6em}
\begin{tabular}{l|rrr}
  \toprule
  system & plain & lineage (m=16)  & lineage (m=32) \\
  \midrule
  \LLinks & 3.1\,MB & 38.5\,MB & 73.7\,MB \\
  \emph{Perm} & 2.7\,MB & 89.4\,MB & 176.2\,MB \\
  \bottomrule
\end{tabular}\vspace{0.6em}

\emph{Perm} considerably outperforms \LLinks when it comes to lineage computation.
Their performance on plain queries is similar, which comes at a bit of a surprise.
We expected \LLinks to be a worse database client than the native {\tt psql} client, even for flat queries.
This can partly be explained by experiment setup.
We had database clients and servers run on the same machine to avoid network issues.
However, this reduces the amount of memory available for caching, especially since \LLinks uses so much memory to nearly run out on some queries.
This means a lot of time is spent by the database system waiting for disk seeks and postprocessing time is low by comparison.
Except for the largest queries, postprocessing by \LLinks is typically well below 1 second.

We take away three things:
(1) A different experimental setup could alleviate memory pressure and cache behavior and bring out processing times.
(2) We could change \Links to emit queries that use \emph{Perm's} built-in provenance features when possible.
(3) Most interesting would be to look at different ways to rewrite \LLinks queries.
Currently, we use \Links's nested query capabilities which allow a fairly naive translation.
\emph{Perm} exploits the fact that lineage is bounded by the structure of the query, adding columns instead of nested data.
Perhaps we could do something similar in \LLinks.

\section{Related Work}
\label{sec:related}

\citet{ICDT2001BunemanKT} gave the first definition of
where-provenance in the context of a semistructured data model.  The
DBNotes system of \citet{VLDB2005BhagwatCTV} supported
where-provenance via SQL query extensions.  DBNotes provides several
kinds of where-provenance in conjunctive SQL queries, implemented by
translating SQL queries to one or more provenance-propagating queries.
\citet{TODS2008BunemanCV} proposed a where-provenance model for nested
relational calculus queries and updates, and proved expressiveness
results.  They observed that where-provenance could be implemented by
translating and normalizing queries but did not implement this idea;
our approach to where-provenance in \WLinks is directly inspired by
that idea and is (to the best of our knowledge) the first
implementation of it.  One important difference is that we
\emph{explicitly} manage where-provenance via the $\Prov$ type, and
allow the programmer to decide whether to track provenance for some,
all or no fields.  Our approach also allows inspecting and comparing
the provenance annotations, which \citet{TODS2008BunemanCV} did not
allow; nevertheless, our type system prevents the programmer from
forging or unintentionally discarding provenance.  On the other hand,
our approach requires manual \lstinline!data! and \lstinline!prov!
annotations because it distinguishes between raw data and
provenance-annotated data.

\LLinks is inspired by prior work on lineage~\cite{TODS2000CuiWW} and
why-provenance~\cite{ICDT2001BunemanKT}.  There have been several
implementations of lineage and why-provenance.  Cui and Widom
implemented lineage in a prototype data warehousing system called
WHIPS.  The Trio system of \citet{VLDBJ2008BenjellounSHTW} also
supported lineage and used it for evaluating probabilistic queries;
lineage was implemented by defining customized versions of database
operations via user-defined functions, which are difficult for
database systems to optimize. \citet{ICDE2009GlavicA} introduced the
Perm system, which translated ordinary queries to queries that compute
their own lineage; they handled a larger sublanguage of SQL than
previous systems such as Trio, and subsequently
\citet{EDBT2009GlavicA} extended this approach to handle queries with
nested subqueries (e.g.\ SQL's EXISTS, ALL or ANY operations).  They
implemented these rewriting algorithms inside the database system and
showed performance improvements of up to 30 times relative to Trio.
In another line of work, Corcoran et al.~\cite{SIGMOD2009CorcoranSH}
and Swamy et al.~\cite{ICFP2009SwamyHB} developed SELinks, a variant of
Links with sophisticated support for security policies, including a
form of provenance tracking implemented using database extensions and
type-based coercions. Our approach instead shows that it is feasible to
perform this rewriting outside the database system and leverage the
standard SQL interface and underlying query optimization of relational
databases.

Both \WLinks and \LLinks rely on the conservativity and query
normalization results that underlie \Links's implementation of
language-integrated query, particularly Cooper's
work~(\citeyear{DBPL2009Cooper}) extending conservativity to queries
involving higher-order functions, and previous work by
\citet{SIGMOD2014CheneyLW} on ``query shredding'', that is, evaluating
queries with nested results efficiently by translation to equivalent
flat queries.  There are alternative solutions to this problem that
support larger subsets of SQL, such as grouping and aggregation, which
are not currently supported by Links.  There are other approaches to
nested data or grouping and aggregation, such as Grust et al.'s
\emph{loop-lifting} (\cite{PVLDB2010GrustRS}) and more recent work on
\emph{query flattening}~\cite{SIGMOD2015UlrichG} in the Database
Supported Haskell (DSH) library, or Suzuki et al.'s
Que$\Lambda$~\cite{suzuki16pepm}, and it would be interesting to
evaluate the performance of these techniques on provenance queries, or
to extend Links's query support to grouping and aggregation.

Other authors, starting with \citet{green07pods}, have proposed
provenance models based on annotations drawn from algebraic structures
such as semirings.  While initially restricted to conjunctive queries,
the semiring provenance model has subsequently been extended to handle
negation and aggregation operations~\cite{PODS2011AmsterdamerDT}.
\citet{SIGMOD2010KarvounarakisIT} developed ProQL, an implementation
of the semiring model in a relational database via SQL query
extensions.
\citet{Festschrift2013GlavicMG} present further details of the Perm
approach described above, show that semiring provenance can be
extracted from Perm's provenance model, and also describe a row-level
form of where-provenance.
  It is not yet clear how to support other instances of the semiring
  model via query rewriting in \Links.

\WLinks and \LLinks are currently separate extensions, and cannot be
used simultaneously, so another natural area for investigation is
supporting multiple provenance models at the same time.  We are
currently investigating this; one possible difficulty may be the need
to combine multiple type translations.  We intend to explore this
further (and consider alternative models).  \citet{cheney14ppdp} presented
a general form of provenance for nested relational calculus based on
execution traces, and showed how such traces can be used to provide
``slices'' that explain specific results.  While this model appears to
generalize all of the aforementioned approaches, it appears nontrivial
to implement by translation to relational queries, because it is not
obvious how to represent the traces in this approach in a relational
data model. (\citet{DDFP2013GiorgidzeGUW} show how to support
\emph{nonrecursive} algebraic data types in queries, but the trace
datatype is recursive.)  This would be a challenging area for future
work.

Our translation for lineage is similar in some respects to the
doubling translation used in \citet{PEPM2014CheneyLRW} to compile a
simplified form of \Links to a F\#-like core language.  Both
translations introduce space overhead and overhead for normal function
calls due to pair projections.  Developing a more efficient
alternative translation (perhaps in combination with a more efficient
and more complete compilation strategy) is an interesting topic for
future work.

As in most work on provenance, we have focused on explaining
questionable results in terms of the source data, and we assume that
the query itself is correct and not the source of the problem.  It
would also be interesting to consider a different problem where the
query (or other parts of the program) might have errors, and the
question is to identify which parts of the query or program could have
contributed to erroneous data.  This would require a combination of
program slicing~\cite{perera12icfp} and query
slicing~\cite{cheney14ppdp} techniques.

\section{Conclusions}

This article makes several contributions regarding integrating
provenance management with programming languages.  First, we present
language extensions to the \Links web programming language that
accommodate where-provenance (\WLinks) and lineage (\LLinks), give their
semantics, and establish basic provenance correctness properties.
Second, we show how to implement both extensions by translation back
to plain \Links, relying on \Links's existing sophisticated support
for language-integrated query, normalization and nested queries.  

Our approach shows that it is feasible to implement provenance by
rewriting queries \emph{outside} the database system, so that a
standard database management system can be used.  By building on the
well-developed theory of query normalization that underlies \Links's
approach to language-integrated query, our translations remain
relatively simple, while still being translated to SQL queries that
are executed efficiently on the database.  To the best of our
knowledge, our approach is the first efficient implementation of
provenance for \emph{nested} query results or for queries that can
employ \emph{first-class functions}; at any rate, SQL does not provide
either feature.  Our results show that provenance for database queries can be
implemented efficiently and safely at the language-level. This is a
promising first step towards systematic programming language support
for provenance.

\Links is a research prototype language, but the underlying ideas of
our approach could be applied to other systems that support
comprehension-based language-integrated query, such as F\# and
Database Supported Haskell.  There are a number of possible next
steps, including extending \Links's language-integrated query
capabilities to support richer queries and more forms of provenance.
Another area for future work is establishing the correctness of the
provenance translations.  We believe it would be better to develop a
general translation that abstracts the two given in this article, and
prove its correctness once and for all.  Finally, we have placed some
restrictions on the correctness properties for \WLinks and \LLinks:
specifically, we have  not considered the impact of updates on
provenance correctness, and we have restricted attention to monotonic
queries for \LLinks.  Lifting restrictions in a satisfying way is also
an intriguing direction for future work.

\paragraph{Acknowledgments}
We would like to James McKinna and several anonymous reviewers for
comments and helpful suggestions on this work.  This work was
supported by EU FP7 project DIACHRON (grant no. 601043) and by a
Google Research Award.

\bibliographystyle{abbrvnat}
\bibliography{bib}

\newpage
\appendix

  \section{Notation}
  \label{app:notation}
  \noindent\begin{tabular}{|c|c|p{7cm}|}
             \hline
             Notation & Sec. & Meaning\\
             \hline
             $\Sigma,M ⟶ \Sigma',M'$ & 3 & Database state $\Sigma$ and expression
                                           $M$ evaluate in one step to $\Sigma'$ and
                                           $M'$\\
             $\Sigma,M ⟶^* \Sigma',M'$ & 3 & Reflexive, transitive closure of $⟶$\\
             $A :: \QueryType$ & 3 & Type $A$ is allowed as a query result type\\
             $R :: \BaseRow$ & 3 & Row $R$ contains only fields of base types\\
             $\Gamma \vdash M : A$ & 3 & In type context $\Gamma$, expression $M$
                                         has type $A$\\
             $\Gamma \vdash S : \ProvSpec(R)$ & 4.1 & In type context $\Gamma$,
                                                      specification $S$ is a valid
                                                      provenance specification matching $R$\\
             $\erase{A}$ & 4.1 & Erasure of  $A$, replacing occurrences of
                                 $\Prov(O)$ with $O$\\
             $R \triangleright S$ &  4.1 & Augment row $R$ with provenance
                                           specification $S$\\
             $\mathit{cso}_\Sigma(M)$ & 4.1 & Set of colored subobjects of expression $M$,
                                              with respect to database state
                                              $\Sigma$\\
             $𝔏⟦A⟧$ & 4.2 & Lineage type translation of type $A$\\
             $\hat{\Sigma},M \linarr \hat{\Sigma}',M'$ & 4.2 & Lineage-enabled
                                                               evaluation\\
             $\|M\|$ & 4.2 & Collection of all lineage annotations from $M$\\
             $M|_b$ & 4.2 & Restriction of $M$  to collection elements
                            whose lineage is contained in $b$\\
             $V\sqsubseteq V'$ & 4.2 & $V$ is obtainable from $V'$ by deleting some
                                       list elements\\
             $𝔚⟦A⟧$ & 5.1 & Where-provenance type translation\\
             $𝔚⟦M ⟧$ & 5.1 & Where-provenance expression translation\\
             $R \triangleright^n_x S$ & 5.1 & A row expression constructing initial
                                              provenance from a row of type $R$ with
                                              table name $n$ and variable $x$ according
                                              to provenance specification $S$\\
             $𝔇⟦A⟧$ & 5.2 & Doubling translation of type $A$\\
             $𝔇⟦M⟧$ & 5.2 & Doubling translation of expression $M$\\
             $𝔏⟦M⟧$ & 5.2 & Lineage translation of query expression $M$\\
             $𝔏^*⟦M⟧$ & 5.2 & Closing lineage translation of $M$\\
             $d2l⟦A⟧(M)$& 5.2 & Mapping from doubling translation to lineage translation\\
             \hline
           \end{tabular}

\section{Proofs}

\subsection{Proof of Theorem~\ref{thm:where-correctness}}\label{sec:where-correctness-proof}
The statement of the theorem was:
  \[
    \Sigma, M \longrightarrow \Sigma, N \Rightarrow \cso(N) \subseteq \cso(M) 
  \]
  where $M$ and $N$ are \WLinks terms, and $\Sigma$ is a context that provides annotated table rows.
  \begin{proof}
    The proof is by induction on $⟶$.
    \begin{itemize}
  
  \item Case $(\kw{fun}\,f(x_i)\,M)(V_i) ⟶ M[f \coloneqq \kw{fun}\,f(x_i)\,M, x_i \coloneqq V_i]$:
  \begin{align*}
    \cso(M[f \coloneqq \kw{fun}\,f(x_i)\,M, x_i \coloneqq V_i])
    & \subseteq \cso(M) \cup \cso(\kw{fun}\,f(x_i)\,M) \cup \bigcup_{i=0}^n \cso(V_i) \\
    & = \cso(\kw{fun}\,f(x_i)\,M) \cup \bigcup_{i=0}^n \cso(V_i) \\
    & = \cso\left((\kw{fun}\,f(x_i)\,M)(V_i)\right)
  \end{align*}

  \item Case $\kw{var}\, x = V; M ⟶ M[x \coloneqq V]$:
  \begin{align*}
    \cso(M[x \coloneqq V])
    & \subseteq \cso(M) \cup \cso(V)  = \cso(\kw{var}\, x = V; M) 
  \end{align*}

  \item Case $(l_i = V_i)_{i=1}^n.l_k ⟶ V_k$ where $1 \leq k \leq n$:
  \begin{align*}
    \cso(V_k)
    & \subseteq \bigcup_{i=1}^n\cso(V_i) \\
    & = \cso((l_i = V_i)_{i=1}^n) \\
    & = \cso((l_i = V_i)_{i=1}^n.l_k)
  \end{align*}

  \item Case $\kw{if}\,(\kw{true})\,M\,\kw{else}\,N ⟶ M$:
  \begin{align*}
    \cso(M)
    & \subseteq \cso(M) \cup \cso(N) \\
    & = \cso(\kw{if}\,(\kw{true})\,M\,\kw{else}\,N) \\
  \end{align*}

  \item Case $\kw{if}\,(\kw{false})\,M\,\kw{else}\,N ⟶ N$:
  \begin{align*}
    \cso(N)
    & \subseteq \cso(M) \cup \cso(N) \\
    & = \cso(\kw{if}\,(\kw{false})\,M\,\kw{else}\,N) \\
  \end{align*}

  \item Case $\kw{query}\, M ⟶ M$: $\cso(M) = \cso(\kw{query}\, M)$.

  \item Case $\kw{table}\, n ⟶ \Sigma(n)$: $\cso(\Sigma(n) = \cso(\kw{table}\, n)$.

  \item Case $\kw{empty}(\texttt{[]}) ⟶ \kw{true}$:
  \begin{align*}
    \cso(\kw{true})
    & = \emptyset  = \cso(\kw{empty}(\texttt{[]}))
  \end{align*}

  \item Case $\kw{empty}(V) ⟶ \kw{false}$, where $V \neq \texttt{[]}$:
  \begin{align*}
    \cso(\kw{false})
    & = \emptyset  \subseteq \cso(V)  = \cso(\kw{empty}(V))
  \end{align*}

  \item Case $\kw{for}\,(x\,\texttt{<-}\,\texttt{[]})\,M ⟶ \texttt{[]}$:
  \begin{align*}
    \cso(\texttt{[]})
    & = \emptyset  \subseteq \cso(\kw{for}\,(x\,\texttt{<-}\,\texttt{[]})\,M)
  \end{align*}

  \item Case $\kw{for}\,(x\,\texttt{<-}\,\texttt{[}V\texttt{]})\,M ⟶ M[x\coloneqq V]$:
  \begin{align*}
    \cso(M[x\coloneqq V])
    & \subseteq \cso(M) \cup \cso(V) \\
    & = \cso(\kw{for}\,(x\,\texttt{<-}\,\texttt{[}V\texttt{]})\,M)
  \end{align*}

  \item Case $\kw{for}\,(x\,\texttt{<-}\,V \concat W)\,M ⟶ (\kw{for}\,(x\,\texttt{<-}\, V)\,M) \concat (\kw{for}\,(x\,\texttt{<-}\, W)\,M)$:
  \begin{align*}
    \cso(\kw{for}\,(x\,\texttt{<-}\,V \concat W)\,M)
    & = \cso(V \concat W) \cup \cso(M) \\
    & = \cso(V) \cup \cso(W) \cup \cso(M) \\
    & = \cso((\kw{for}\,(x\,\texttt{<-}\, V)\,M) \concat (\kw{for}\,(x\,\texttt{<-}\, W)\,M))
  \end{align*}

  \item Case $\kw{for}\,(x\,\texttt{<--}\, V)\, M ⟶ \kw{for}\,(x\,\texttt{<-}\, V)\, M$:
  \begin{align*}
    \cso(\kw{for}\,(x\,\texttt{<-}\, V)\, M)
    & = \cso(V) \cup \cso(M) \\
    & = \cso(\kw{for}\,(x\,\texttt{<--}\, V)\, M)
  \end{align*}

  \item Case $M ⟶ M' \Rightarrow \mathcal{E}[M] ⟶ \mathcal{E}[M']$ (evaluation step inside a context):
  \begin{align*}
    \cso(\mathcal{E}[M'])
    & = \cso(\mathcal{E}) \cup \cso(M') & \text{Lemma~\ref{lemma:cso-evaluation-context}} \\
    & \subseteq \cso(\mathcal{E}) \cup \cso(M) & \text{IH} \\
    & = \cso(\mathcal{E}[M]) & \text{Lemma~\ref{lemma:cso-evaluation-context}}
  \end{align*}
\end{itemize}
  \end{proof}

  \subsection{Full definitions of auxiliary functions for lineage annotation extraction and restriction}\label{def:lineage-aux-full}

  The interesting cases can be found in Figure~\ref{fig:llinks-supporting-definitions}.
  
  We extend $\| \cdot \|$, the lineage annotation collection function, by recursively collecting annotations.
  
  \begin{align*}
    \| \texttt{[} M \texttt{]}^a \| & = a \cup \| M \| \\
    \| \texttt{[]} \| & = \emptyset \\
    \| M \concat N \| & = \| M \| \cup \| N \| \\
    \| M^{\cup b}\| & = b \cup \| M \| \\
    \| \kw{table}\, t \| & = \| \hat{\Sigma}(t) \| \\
    \| \kw{var}\,x = M; N \| & = \| M \| \cup \| N \| \\
    \| c \| & = c \\
    \| (l_i = M_i)_{i=1}^n \| &= \bigcup_{i=1}^n \| M_i \|\\
    \| M.l \| &= \|M\| \\
    \| \kw{fun}\,f(x_i| _{i = 1}^n)\,M \| &= \kw{fun}\,f(x_i| _{i = 1}^n)\,\|M \| \\
    \| \kw{if}\,(L)\,M\,\kw{else}\,N \| &= \|L\| \cup \|M\| \cup \|N\|\\
    \|\kw{query}\,M\| &= \|M\|\\
    \|\kw{for}\,(x\,\texttt{<-}\,M)\,N \| &= \|M\| \cup \|N\|\\
    \|\kw{for}\,(x\,\texttt{<--}\,M)\, N \| &= \|M\| \cup \|N\| \\
%
%
  \end{align*}

  We extend $\cdot |_b$, the erasure function, by recursively erasing.
  
  \begin{align*}
    \texttt{[} M \texttt{]}^a|_b & = \begin{cases}
      \texttt{[} M|_b \texttt{]}^a & \text{if } a \subseteq b \\
      \texttt{[]} & \text{otherwise}
    \end{cases} \\
    \texttt{[]}|_b & = \texttt{[]} \\
    (M \concat N)|_b & = M|_b \concat N|_b \\
    M^{\cup a}|_b & = \begin{cases}
      (M|_b)^{\cup a} & \text{if } a \subseteq b\\
      \texttt{[]} & \text{otherwise}
    \end{cases}\\
    \kw{table}\, t|_b & = \kw{table}\, t \\
    (\kw{var}\,x = M; N)|_b & = \kw{var}\,x = M|_b; N|_b \\
    c |_b & = c \\
    (l_i = M_i)_{i=1}^n |_b &= (l_i = M_i|_b)_{i=1}^n \\
    M.l |_b &= (M|_b).l \\
    (\kw{fun}\,f(x_i| _{i = 1}^n)\,M) |_b &= \kw{fun}\,f(x_i| _{i = 1}^n)\, (M |_b) \\
    (\kw{if}\,(L)\,M\,\kw{else}\,N) |_b &= \kw{if}\,(L|_b)\,M|_b\,\kw{else}\,N|_b \\
    (\kw{query}\,M)|_b &= \kw{query}\,(M|_b)\\
    (\kw{for}\,(x\,\texttt{<-}\,M)\,N) |_b &= \kw{for}\,(x\,\texttt{<-}\,M|_b)\,N |_b \\
    (\kw{for}\,(x\,\texttt{<--}\,M)\, N) |_b &= \kw{for}\,(x\,\texttt{<--}\,M|_b)\, N |_b \\
  \end{align*}

  \subsection{Proof of Theorem~\ref{thm:where-type-preservation}}\label{thm:where-type-preservation-proof}
Recall the statement of the theorem:
  \begin{enumerate}
  \item For every \WLinks context $\Gamma$, term $M$, and type $A$, if $
    \Gamma \vdash_\WLinks M : A$ then $ 𝔚⟦\Gamma⟧ \vdash_\Links
    𝔚⟦M⟧ : 𝔚⟦A⟧$.
\item For every \WLinks context $\Gamma$, provenance specification
  $S$, row $R$ and subrow $R'$ such that $R'\triangleright^n_x S$ is defined, if $Γ ⊢ S : \ProvSpec(R)$ then 
$𝔚⟦\Gamma⟧,x{:}\rcd{R}   ⊢ \rcd{R' \triangleright^n_x  S} :  𝔚⟦\rcd{R' \triangleright S} ⟧ $.
  \end{enumerate}
\begin{proof}
  Proof is by induction on the structure of \WLinks derivations.  Most
  cases for the first part are immediate; we show some representative examples.
  \begin{itemize}
  \item If the derivation is of the form:
\[\infer [Data]
  { Γ ⊢ M : \Prov(A)}
  { Γ ⊢ \kw{data}~M : A }
\]
then by induction we have $𝔚⟦\Gamma⟧  ⊢ 𝔚⟦ M⟧ : 𝔚⟦ \Prov(A)⟧$, and can
conclude:
\[\infer
{ 𝔚⟦\Gamma⟧  ⊢ 𝔚⟦ M⟧ : (\mathsf{data}:  𝔚⟦A⟧,
                   \mathsf{prov}:(\StringTy,\StringTy,\IntTy) ) }
{ 𝔚⟦\Gamma⟧  ⊢ 𝔚⟦ M⟧.\mathsf{data} :  𝔚⟦A⟧}
\]
 \item If the derivation is of the form:
\[\infer [Data]
  { Γ ⊢ M : \Prov(A)}
  { Γ ⊢ \kw{prov}~M : \rcd{\StringTy, \StringTy, \IntTy} }
\]
then by induction we have $𝔚⟦\Gamma⟧  ⊢ 𝔚⟦ M⟧ : 𝔚⟦ \Prov(A)⟧$, and can
conclude:
\[\infer
{ 𝔚⟦\Gamma⟧  ⊢ 𝔚⟦ M⟧ : (\mathsf{data}:  𝔚⟦A⟧,
                   \mathsf{prov}:(\StringTy,\StringTy,\IntTy) ) }
{ 𝔚⟦\Gamma⟧  ⊢ 𝔚⟦ M⟧.\mathsf{prov} : \rcd{\StringTy, \StringTy, \IntTy} }
\]
\item If the derivation is of the form:
\[\infer[Table]
 { R :: \BaseRow\\
Γ ⊢ S : \ProvSpec(R)
}
 { Γ ⊢ \kw{table}\ n\ \kw{with}\ \rcd{ R }\ \kw{where}\ S : \kw{table}\rcd{R \triangleright S}}
\]
Then since $\|R \triangleright S\| = R$ (Lemma~\ref{lem:where-helper}) we can conclude:
\[\infer
 { 
}
 { 𝔚⟦\Gamma⟧   ⊢ \kw{table}~n~\kw{with} ~(R) :  \kw{table}(\|R \triangleright S\|)
}
\]
and by the second induction hypothesis,
\[\small\infer*
 { 
\infer*
{
\infer*
{R :: \BaseRow}
{𝔚⟦\Gamma⟧   ⊢ 
    \kw{table}~n~\kw{with} ~(R) :
    \kw{table}(R)}
~
\infer*
{𝔚⟦\Gamma⟧,x{:}\rcd{R}   ⊢ \rcd{R \triangleright^n_x
        S} :  
    𝔚⟦\rcd{R \triangleright S} ⟧}
{𝔚⟦\Gamma⟧,x{:}\rcd{R}   ⊢ \coll{\rcd{R \triangleright^n_x
        S}} :  
    \coll{𝔚⟦\rcd{R \triangleright S} ⟧}}
}
{
𝔚⟦\Gamma⟧   ⊢ \kw{for} (x \drarr
    \kw{table}~n~\kw{with} ~(R)) \coll{\rcd{R \triangleright^n_x
        S}} :  
    \coll{𝔚⟦\rcd{R \triangleright S} ⟧}
}
}
 { 𝔚⟦\Gamma⟧   ⊢ \kw{ fun} () \{ \kw{for} (x \drarr
    \kw{table}~n~\kw{with} ~(R)) \coll{\rcd{R \triangleright^n_x
        S}}\} :  () \arr
    \coll{𝔚⟦\rcd{R \triangleright S} ⟧}
}
\]
\item If the derivation is of the form
\[
\infer[For-Table]
  {Γ ⊢ L : \kw{table}(R)\\
   Γ, x : \rcd{R}  ⊢ M : \coll{B}}
  {\Gamma \vdash \kw{for}\ \rcd{ x\  \drarr\ L }\ M : \coll{B}}
\]
then by induction we have $𝔚⟦  Γ⟧ ⊢ 𝔚⟦  L⟧ : (\kw{table}(\|R\|),()
\arr \coll{𝔚⟦\rcd{R} ⟧} )$, so we can proceed as follows:
\[\small
\infer*
{
  {
    \infer*
    {𝔚⟦ \Gamma⟧ \vdash 𝔚⟦ L⟧.2 : () \arr \coll{𝔚⟦\rcd{R} ⟧}}
    {𝔚⟦ \Gamma⟧ \vdash 𝔚⟦ L⟧.2() :\coll{𝔚⟦\rcd{R} ⟧}}
  }
  \\
  𝔚⟦ \Gamma⟧,x:𝔚⟦\rcd{R}⟧ \vdash 𝔚⟦ M⟧ :\coll{𝔚⟦ B⟧}
}
{𝔚⟦ \Gamma⟧ \vdash \kw{for}\ \rcd{ x\  \rarr\ 𝔚⟦ L⟧.2() }\ 𝔚⟦ M⟧
  :\coll{𝔚⟦ B⟧}}
\]
\item If the derivation is of the form:
\[
\infer[Delete]
{Γ ⊢ L : \kw{table}(R)\\
Γ,x:\rcd{\|R\|} ⊢ M : \BoolTy}
{Γ ⊢ \kw{delete}~(x \drarr L)~\kw{where}~{M} : \rcd{}}
\]
then by induction we have $𝔚⟦Γ⟧ ⊢ 𝔚⟦ L⟧ : 𝔚⟦ \kw{table}(R)⟧$ and $ 𝔚⟦
Γ⟧,x: 𝔚⟦ \rcd{\|R\|}⟧ ⊢  𝔚⟦  M⟧ : \BoolTy$.
\[\small
\infer*
{
\infer*
{
𝔚⟦Γ⟧ ⊢ 𝔚⟦ L⟧ : (\kw{table}(\|R\|),\rcd{} \arr \coll{\rcd{R}})
}
{
𝔚⟦Γ⟧ ⊢ 𝔚⟦ L⟧.1 : \kw{table}(\|R\|)
}
\\
𝔚⟦Γ⟧,x:\rcd{\|R\|} ⊢ 𝔚⟦ M⟧ : \BoolTy
}
{
𝔚⟦Γ⟧ ⊢ \kw{delete}~(x \drarr  𝔚⟦ L⟧.1)~\kw{where}~{ 𝔚⟦ M⟧} : \rcd{}
}
\]
\end{itemize}

For the second part, we proceed by induction on the structure of the
derivation of $Γ ⊢ S : \ProvSpec(R)$.  We show one representative
case, for derivations of the form
\[\infer{
Γ ⊢ S : \ProvSpec(R)\\
Γ ⊢ M : \rcd{ R} ~\texttt{\small ->} ~\rcd{\StringTy, \StringTy ,\IntTy}
}{
Γ ⊢ S, l~\kw{prov}~{M} : \ProvSpec(R)
}
\]
In this case, by induction we have that $𝔚⟦\Gamma⟧,x{:}\rcd{R}   ⊢
\rcd{R' \triangleright^n_x  S} :  𝔚⟦\rcd{R' \triangleright S} ⟧ $
holds for any subrow $R'$ of $R$, and by the first induction
hypothesis we also know that $𝔚⟦Γ⟧ ⊢ 𝔚⟦M⟧ : 𝔚⟦\rcd{ R}⟧ ~\arr
~\rcd{\StringTy, \StringTy ,\IntTy}$.  

Suppose $R',l:O \triangleright ^n_x S,l~\kw{prov}~M$.   Then we can conclude that 
$𝔚⟦\Gamma⟧,x{:}\rcd{R}   ⊢
\rcd{R',l:\Prov(O) \triangleright^n_x  S,l~\kw{prov}~M}:
𝔚⟦\rcd{R',l:O \triangleright S,l~\kw{prov}~O} ⟧ $
because $\rcd{R',l:O \triangleright^n_x  S,l~\kw{prov}~M} =
\rcd{R'\triangleright^n_x  S},l=\rcd{\kw{data}=x.l,\kw{prov}=
  𝔚⟦M⟧(x)}$ and $R',l:O \triangleright S,l~\kw{prov}~O = (R'
\triangleright S), l:\Prov(O)$.
\end{proof}

\subsection{Proof of Lemma~\ref{lem:lin-helpers}}\label{lem:lin-helpers-proof}
Recall the statement of the lemma:
\begin{enumerate}
\item If $A :: \QueryType$ then  $𝔇⟦A⟧ = 𝔇⟦𝔏⟦A⟧⟧$.
\item If $\Gamma \vdash M : 𝔇⟦A⟧$ then $\Gamma\vdash d2l(M) : 𝔏⟦A⟧$.
\end{enumerate}
\begin{proof}
  For part 1, the proof is by induction on the derivation of $A ::
  \QueryType$, and is straightforward since both $𝔇$ and $𝔏$ are the
  identity on types formed only from base types, records or collection
  types.  

For the second part, the proof is by induction on the
  structure of $A$ but each case is straightforward.  We show the
  interesting cases for function types and collection types:
  \begin{itemize}
  \item If $A = B_1 \arr B_2$ then we proceed as follows:
\[
\infer*
{\Gamma \vdash M : (\DD{B_1} \arr \DD{B_2}, \LL{B_1} \arr \LL{B_2})}
{\Gamma \vdash M.2 : \LL{B_1} \arr \LL{B_2}}
\]
which suffices since $\LL{B_1 \arr B_2}  = \LL{B_1}\arr\LL{B_2}$.
\item If $A = \coll{B}$ then we proceed as follows:
\[\begin{array}{ll}
\Gamma \vdash M : \coll{\DD{B}}
 & \text{assumption}\\
\Gamma,x:\DD{B} \vdash x:\DD{B} & \text{by rule}\\
\Gamma,x:\DD{B} \vdash  d2l\sem{B}(x) : \LL{B} & 
\text{by IH}\\
\Gamma,x:\DD{B} \vdash  \coll{}:
\coll{(\StringTy,\IntTy)}
& \text{by rule}\\
\Gamma,x:\DD{B} \vdash  \rcd{\kw{data}=
      d2l\sem{B}(x),\kw{prov}=\coll{}}: \Lin(\LL{B})
&\text{by rule}
\\
\Gamma,x:\DD{B} \vdash  \coll{\rcd{\kw{data}=
      d2l\sem{B}(x),\kw{prov}=\coll{}}} :  \coll{\Lin(\LL{B})}
&\text{by rule}\\
\Gamma \vdash \kw{for}~(x \rarr M)~\coll{\rcd{\kw{data}=
      d2l\sem{B}(x),\kw{prov}=\coll{}}} :   \coll{ \Lin(\LL{B})}
&\text{by rule}
\end{array}
\]
  \end{itemize}
\end{proof}

\subsection{Proof of Theorem~\ref{thm:lin-type-preservation}}\label{thm:lin-type-preservation-proof}
Recall the statement of the theorem:
  \begin{enumerate}
 \item $𝔏⟦\Gamma⟧ \vdash_\Links 𝔏⟦M⟧ : 𝔏⟦A⟧ $
 \item $𝔇⟦\Gamma⟧ \vdash_\Links 𝔏^*⟦M⟧ : 𝔏⟦A⟧$
  \item $𝔇⟦\Gamma⟧ \vdash_\Links 𝔇⟦M⟧ : 𝔇⟦A⟧$
  \end{enumerate}
\begin{proof}
For the first part, we show the details of the cases for singleton
lists and list comprehensions.  Table comprehensions are similar.
\begin{itemize}
\item If the derivation is of the form:
\[
\infer[List]
{ Γ ⊢ M : A}
{Γ ⊢ \coll{ M} : \coll{ A }}
\]
then we proceed as
follows:
\[
\begin{array}{ll}
𝔏⟦\Gamma⟧ ⊢ 𝔏⟦M⟧ : 𝔏⟦A⟧ 
& \text{by IH}\\
\LL{\Gamma} \vdash \coll{}:  \coll{\rcd{\StringTy,\IntTy}}
& \text{by rule}\\
\LL{\Gamma} \vdash \rcd{\kw{data}=\LL{M},\kw{prov}=\coll{}} :  \Lin( \LL{A} )
& \text{by rule}\\
\LL{\Gamma} \vdash \coll{\rcd{\kw{data}=\LL{M},\kw{prov}=\coll{}}}:   \coll{\Lin( \LL{A} )}
& \text{by rule}
\end{array}
\]
which suffices since $\LL{\coll{A}} =  \coll{\Lin{ \LL{A} }} = \coll{\rcd{\kw{data} : \LL{A} , \kw{prov}:
  \coll{\rcd{\StringTy,\IntTy}}}}$.
\item If the derivation is of the form:
\[
\infer[For-List]
  {Γ ⊢ L : \coll{ A }\\
   Γ, x : A ⊢ M : \coll{B}}
  {\Gamma \vdash \kw{for}\ \rcd{ x\  \texttt{\small<-}\ L }\
    M : \coll{B}}
\]
then we proceed as follows:
\[\small\begin{array}[b]{ll}
\LL{Γ} ⊢ \LL{L} : \coll{\Lin(\LL{A})} & 
\text{by IH}\\
\LL{Γ}, x : \LL{A} ⊢ \LL{M} : \coll{\Lin(\LL{B})}&
\text{by IH}\\
\LL{\Gamma},y:\Lin(\LL{A}) \vdash y.\kw{data} : \LL{A} & \text{by rule}\\
\LL{\Gamma},y:\Lin(\LL{A}) \vdash 
𝔏⟦M⟧[x \mapsto y.\mathsf{data}] : \Lin(\LL{A})
& \text{by substitution}\\
\LL{\Gamma},y:\Lin(\LL{A}) , z:\Lin(\LL{B}) \vdash 
 z.\mathsf{data} : \LL{B} 
& \text{by rule}\\
\LL{\Gamma},y:\Lin(\LL{A}) , z:\Lin(\LL{B}) \vdash 
 y.\mathsf{prov} : \coll{\rcd{\StringTy,\IntTy}}
 & \text{by rule}\\
\LL{\Gamma},y:\Lin(\LL{A}) , z:\Lin(\LL{B}) \vdash 
 z.\mathsf{prov}  : \coll{\rcd{\StringTy,\IntTy}}
& \text{by rule}\\
\LL{\Gamma},y:\Lin(\LL{A}) ,z:\Lin(\LL{B}) \vdash y.\mathsf{prov} \plusplus z.\mathsf{prov} :\coll{\rcd{\StringTy,\IntTy}}
& \text{by rule}\\
\LL{\Gamma},y:\Lin(\LL{A}) ,z:\Lin(\LL{B}) \vdash \\
\qquad (\mathsf{data} = z.\mathsf{data}, \mathsf{prov} =
  y.\mathsf{prov} \plusplus z.\mathsf{prov}):\Lin(\LL{B})
& \text{by rule}\\
\LL{\Gamma},y:\Lin(\LL{A}) ,z:\Lin(\LL{B}) \vdash \\
\qquad [(\mathsf{data} = z.\mathsf{data}, \mathsf{prov} =
  y.\mathsf{prov} \plusplus z.\mathsf{prov})] :\coll{\Lin(\LL{B})}
& \text{by rule}\\
\LL{\Gamma},y:\Lin(\LL{A}) \vdash \\
\qquad \begin{array}[t]{l}
 \kw{for}~(z \rarr 𝔏⟦M⟧[x \mapsto y.\mathsf{data}])\\
\quad [(\mathsf{data} = z.\mathsf{data}, \mathsf{prov} =
  y.\mathsf{prov} \plusplus z.\mathsf{prov})] :\coll{\Lin(\LL{B})}
\end{array} & \text{by rule}\\
\LL{\Gamma} \vdash 
\begin{array}[t]{l}\kw{for}~(y \rarr 𝔏⟦L⟧)\\
\quad \kw{for}~(z \rarr 𝔏⟦M⟧[x \mapsto y.\mathsf{data}])\\
\qquad [(\mathsf{data} = z.\mathsf{data}, \mathsf{prov} =
  y.\mathsf{prov} \plusplus z.\mathsf{prov})] : \coll{\Lin(\LL{B})}
\end{array} & \text{by rule}
\end{array}
\]
\end{itemize}

Finally, for the third part, we show the interesting cases
for functions, function calls, and $\kw{lineage}$.
\begin{itemize}
\item If the derivation is of the form:
\[
\infer[Fun]
{Γ , x : A⊢ M : B}
{Γ ⊢ \kw{fun }\rcd{ x} \braces{M}: A \arr B}
\]
then by induction we have $\DD{Γ} ,x : \DD{A} ⊢ \DD{M}
: \DD{B}$ and by part 2 we know that $\DD{\Gamma} \vdash \LLstar{\kw{fun }\rcd{ x } \braces{M}}
:  \LL{\rcd{A } \arr B}$.  We can proceed as follows:

\[\begin{array}{ll}
\DD{Γ} , x:\DD{A} ⊢ \DD{M}: \DD{B} 
&\text{by IH}\\
\DD{Γ} ⊢ \kw{fun }\rcd{ x}
  \braces{\DD{M}}: \DD{ A} \arr \DD{B}
&\text{by rule}\\
 \DD{\Gamma} \vdash \LLstar{\kw{fun }\rcd{ {x} } \braces{M}}
  :  \LL{A} \arr \LL{B}
& \text{by part 2}\\
\DD{Γ} ⊢ (\kw{fun }\rcd{x }
  \braces{\DD{M}},\LLstar{\kw{fun }\rcd{ x }
    \braces{M}}): \DD{A \arr B}
&\text{by rule}
\end{array}
\]
where the final step relies on the fact that $\DD{A \arr B} = (\DD{A}
\arr \DD{B}, \LL{A} \arr \LL{B})$.
\item If the derivation is of the form:
\[
\infer[App]
{Γ ⊢ M : A \arr B\\
Γ ⊢ N : A }
{Γ ⊢ M(N) : B}
\]
then we proceed as follows:
\[
\begin{array}{ll}
\DD{Γ} ⊢ \DD{M} : (\DD{A} \arr \DD{B},\LL{A} \arr\LL{B})
&\text{by IH}\\
\DD{Γ} ⊢ \DD{M}.1 : \DD{A} \arr \DD{B}
&\text{by rule}\\
\DD{Γ} ⊢ \DD{N} : \DD{A} 
& \text{by IH}\\
\DD{Γ} ⊢ \DD{M}.1(\DD{N}) : \DD{B}
&\text{by rule}
\end{array}
\]
where in the first step we use the fact that
$\DD{A \arr B} = (\DD{A} \arr \DD{B},\LL{A} \arr\LL{B})$.
\item If the derivation is of the form:
\[
\infer[Lineage]
{\Gamma \vdash M : \coll{A} \\
A :: \QueryType}
{\Gamma \vdash \kw{lineage}~\{M\} :  𝔏⟦\coll{A}⟧}
\]
then by part (2) we know that $\DD{\Gamma}\vdash \LLstar{M} :
\LL{\coll{A}}$, so we proceed as follows:
\[
\infer*
{\DD{\Gamma}\vdash \LLstar{M} : \coll{\LL{A}}
\\
\LL{A} :: \QueryType}
{\DD{\Gamma} \vdash \kw{query}~\{\LLstar{M}\} : \coll{\LL{A}}}
\]
which suffices since $\DD{\LL{A}} = \DD{A}$ by Lemma~\ref{lem:lin-helpers}(1).
\end{itemize}
\end{proof}

\section{Benchmark code}\label{appendix:benchmark-code}

This appendix contains the full listings for the where-provenance and
lineage benchmarks. Figures~\ref{fig:table-declarations}
and~\ref{fig:table-declarations-where} show the plain table
declarations and declarations with where-provenance, respectively.
These tables also include \lstinline!readonly! and
\lstinline!tablekeys! annotations which were suppressed in the main
body of the article;
the former indicates that a field is read-only and the latter lists
the subsets of the fields that uniquely determine the others.

Figure~\ref{fig:noprov-helpers} shows the helper functions used by the
plain versions of the queries, and Figure~\ref{fig:prov-helpers} shows
the variants of these functions adapted to work with
where-provenance.  Some of the functions, such as \lstinline!any!,
need no changes at all because they are polymorphic.
Figure~\ref{fig:someprov-queries} shows the versions of the queries
with some provenance (the someprov benchmarks).

Figures~\ref{fig:nolineage-queries1} and~\ref{fig:nolineage-queries2}
show the plain queries without lineage annotations; these also employ
abbreviations from Figure~\ref{fig:noprov-helpers}.

\begin{figure}
\begin{lstlisting}[language=LLinks,morekeywords={database,tablekeys,readonly,from}]
var db = database "links";

var departments =
  table "departments"
  with (oid: Int, name: String)
  where oid readonly
  tablekeys [["name"],["oid"]]
  from db;

var employees =
  table "employees"
  with (oid: Int, dept: String, name: String, salary : Int)
  where oid readonly
  tablekeys [["name"],["oid"]]
  from db;

var tasks =
  table "tasks"
  with (oid: Int, employee: String, task: String)
  where oid readonly
  tablekeys [["oid"]]
  from db;

var contacts =
  table "contacts"
  with (oid: Int, dept: String, name: String, "client": Bool)
  where oid readonly
  tablekeys [["name"], ["oid"]]
  from db;
\end{lstlisting}
\caption{Table declarations for lineage, nolin, and noprov queries.}
\label{fig:table-declarations}
\end{figure}

\begin{figure}
\begin{lstlisting}[language=WLinks,morekeywords={database,tablekeys,readonly,from}]
var departments =
  table "departments"
  with (oid: Int, name: String)
  where oid readonly, name prov default
  tablekeys [["name"]]
  from db;

var employees =
  table "employees"
  with (oid: Int, dept: String, name: String, salary : Int)
  where oid readonly, dept prov default,
        name prov default, salary prov default
  tablekeys [["name"]]
  from db;

var tasks =
  table "tasks"
  with (oid: Int, employee: String, task: String)
  where oid readonly, employee prov default, task prov default
  tablekeys [["oid"]]
  from db;

var contacts =
  table "contacts"
  with (oid: Int, dept: String, name: String, "client": Bool)
  where oid readonly, dept prov default,
        name prov default, "client" prov default
  tablekeys [["name"]]
  from db;
\end{lstlisting}
\caption{Table declarations for where-provenance queries (except noprov).}
\label{fig:table-declarations-where}
 \end{figure}

 \begin{figure}
\begin{lstlisting}[language=Links]
sig tasksOfEmp: ((name:String|_)) -> [String]
fun tasksOfEmp(e) {
  for (t <-- tasks) where (t.employee == e.name) [t.task]
}

sig contactsOfDept: ((name:String|_)) -> [("client":Bool,name:String)]
fun contactsOfDept(d) {
  for (c <-- contacts)
  where (d.name == c.dept)
    [("client" = c."client", name = c.name)]
}

sig employeesByTask: ((employee:String|_)) -> [(name:String,salary:Int,tasks:[String])]
fun employeesByTask(t) {
  for (e <-- employees)
    for (d <-- departments)
    where (e.name == t.employee && e.dept == d.name)
      [(name = e.name, salary = e.salary, tasks = tasksOfEmp(e))]
}

sig employeesOfDept: ((name:String|_)) -> [(name:String,salary:Int,tasks:[String])]
fun employeesOfDept(d) {
  for (e <-- employees)
  where (d.name == e.dept)
    [(name = e.name, salary = e.salary, tasks = tasksOfEmp(e))]
}

sig any : ([a],(a) -a-> Bool) -a-> Bool
fun any(xs,p) { not(empty(for (x <- xs) where (p(x)) [()])) }

sig all : ([a],(a) -a-> Bool) -a-> Bool
fun all(xs, p) { not(any(xs, fun (x) { not(p(x)) })) }

sig contains: ([a], a) -> Bool
fun contains(xs, u) { any(xs, fun (x) { x == u }) }

fun isPoor(x) { x.salary < 1000 }
fun isRich(x) { x.salary > 1000000 }

sig get: ([(name:a::Any|b)], ((name:a::Any|b)) -c-> d::Any)
      -c-> [(name:a::Any,tasks:d::Any)]
fun get(xs, f) {
  for (x <- xs)
    [(name = x.name, tasks = f(x))]
}

sig outliers: ([(salary:Int|a)]) -> [(salary:Int|a)]
fun outliers(xs) { filter(fun (x) { isRich(x) || isPoor(x) }, xs) }

sig clients: ([("client":Bool|a)]) -> [("client":Bool|a)]
fun clients(xs) { filter(fun (x) { x."client" }, xs) }
\end{lstlisting}
\caption{Helper functions noprov.}
\label{fig:noprov-helpers}
\end{figure}

\begin{figure}[!h]
\begin{lstlisting}[language=WLinks]
# the original (allprov) Q1
fun q_org() {
  for (d <-- departments)
    [(contacts = contactsOfDept(d),
      employees = employeesOfDept(d),
      name = d.name)]
}

sig tasksOfEmp: ((name:Prov(String)|_)) -> [Prov(String)]
fun tasksOfEmp(e) {
  for (t <-- tasks)
  where ((data t.employee) == data e.name)
    [t.task]
}

sig contactsOfDept: ((name:Prov(String)|_)) -> [("client":Prov(Bool),name:Prov(String))]
fun contactsOfDept(d) {
  for (c <-- contacts)
  where ((data d.name) == data c.dept)
    [("client" = c."client", name = c.name)]
}

sig employeesByTask: ((employee:Prov(String)|_))
      -> [(name:Prov(String),salary:Prov(Int),tasks:[Prov(String)])]
fun employeesByTask(t) {
  for (e <-- employees)
    for (d <-- departments)
    where ((data e.name) == (data t.employee)
           && (data e.dept) == (data d.name))
      [(name = e.name, salary = e.salary, tasks = tasksOfEmp(e))]
}

sig employeesOfDept: ((name:Prov(String)|_))
     -> [(name:Prov(String),salary:Prov(Int),tasks:[Prov(String)])]
fun employeesOfDept(d) {
  for (e <-- employees)
  where ((data d.name) == data e.dept)
    [(name = e.name, salary = e.salary, tasks = tasksOfEmp(e))]
}

fun get(xs, f) {
  for (x <- xs) [(name = x.name, tasks = f(x))]
}

sig outliers: ([(salary:Prov(Int)|a)]) -> [(salary:Prov(Int)|a)]
fun outliers(xs) { filter(fun (x) { isRich(x) || isPoor(x) }, xs) }

sig clients: ([("client":Prov(Bool)|a)]) -> [("client":Prov(Bool)|a)]
fun clients(xs) { filter(fun (x) { data x."client" }, xs) }
\end{lstlisting}
\caption{Helper functions allprov, someprov (use \kw{data} in some places).}
\label{fig:prov-helpers}
\end{figure}

\begin{figure}
\begin{lstlisting}[language=WLinks]
# Q1
sig q1 : () -> [(contacts: [("client": Bool, name: String)],
                 employees: [(name: Prov(String), salary: Prov(Int),
                              tasks: [Prov(String)])],
                 name: Prov(String))]
fun q1() {  for (d <-- departments)
              [(contacts = for (c <- contactsOfDept(d))
                              [("client" = data c."client", name = data c.name)],
                employees = employeesOfDept(d),
                name = d.name)] }

# Q2
sig q2 : () -> [(d: String, p: (String, String, Int))]
fun q2() { for (d <- q_org())
              where (all(d.employees, fun (e) {
                     contains(map(fun (x) { data x }, e.tasks), "abstract")
                     }))
              [(d = data d.name, p = prov d.name)] }

# Q3: employees with lists of tasks
sig q3 : () -> [(b: [Prov(String)], e: Prov(String))]
fun q3() { for (e <-- employees) [(b = tasksOfEmp(e), e = (e.name))] }

# Q4: departments with lists of employees
sig q4 : () -> [(dpt:Prov(String), emps:[(String, String, Int)])]
fun q4() { for (d <-- departments)
             [(dpt = d.name, emps = for (e <-- employees)
                                          where ((data d.name) == (data e.dept))
                                          [prov e.name])] }

# Q5: Tasks with employees and departments
fun dropProv(l) { map(fun (x) { data x }, l) }

sig q5: () -> [(a: Prov(String), b: [(name: String, salary: Int, tasks: [String])])]
fun q5() { for (t <-- tasks)
              [(a = t.task, b = for (x <- employeesByTask(t))
                                      [(name = data x.name,
                                        salary = data x.salary,
                                        tasks = dropProv(x.tasks))])]
}

# Q6 Drop prov on department.
sig q6: () -> [(department: String, people: [(name: Prov(String), tasks: [String])])]
fun q6() {  for (x <- q_org())
            [(department = data x.name,
              people = get(outliers(x.employees),
                           fun (y) { map(fun (z) { data z }, y.tasks) }) ++
                       get(clients(x.contacts),
                           fun (y) { ["buy"] }))] }
\end{lstlisting}
\caption{Queries someprov.}
\label{fig:someprov-queries}
\end{figure}








\begin{figure}
\begin{lstlisting}[language=Links]
# AQ6 : [(department: String, outliers: [(name: String, ...
for (d <- for (d <-- departments)
            [(employees = for (e <-- employees)
                              where (d.name == e.dept)
                              [(name = e.name, salary = e.salary)],
              name = d.name)])
  [(department = d.name, outliers = for (o <- d.employees)
                                             where (o.salary > 1000000 || o.salary < 1000)
                                             [o])]

# Q3 : [(b: [String]), e: String)]
for (e <-- employees) [(b = tasksOfEmp(e), e = e.name)]

# Q4 : [(dpt: String, emps: [String]))]
for (d <-- departments)
  [(dpt = d.name, emps = for (e <-- employees)
                                 where (d.name == e.dept)
                                 [(e.name)])]

# Q5 : [(a: String, b: [(name: String, salary: Int, ...
for (t <-- tasks) [(a = t.task, b = employeesByTask(t))]

# Q6N : [(department: String, people:[(name: String, ...
for (x <-- departments)
  [(department = x.name,
    people = (for (y <-- employees)
                 where (x.name == y.dept && (y.salary < 1000 || y.salary > 1000000))
                 [(name = y.name,  tasks = for (z <-- tasks)
                                                   where (z.employee == y.name)
                                                   [z.task])]) ++
               (for (y <-- contacts)
                where (x.name == y.dept && y."client")
                [(name = y.dept, tasks = ["buy"])]))]
\end{lstlisting}
\caption{Nolineage queries, part 1}
\label{fig:nolineage-queries1}
\end{figure}

\begin{figure}
\begin{lstlisting}
# Q7 : [(department: String, employee: (name: String, ...
for (d <-- departments)
for (e <-- employees)
where (d.name == e.dept && e.salary > 1000000 || e.salary < 1000)
  [(employee = (name = e.name, salary = e.salary), department = d.name)]

# QC4 : [(a: String, b: String, c: [(doer: String, ...
for (x <-- employees)
for (y <-- employees)
where (x.dept == y.dept && x.name <> y.name)
  [(a = x.name, b = y.name,
    c = (for (t <-- tasks)
         where (x.name == t.employee)
           [(doer = "a", task = t.task)]) ++
          (for (t <-- tasks)
           where (y.name == t.employee)
             [(doer = "b", task = t.task)]))]

# QF3 : [(String, String)]
for (e1 <-- employees)
for (e2 <-- employees)
where (e1.dept == e2.dept && e1.salary == e2.salary && e1.name <> e2.name)
  [(e1.name, e2.name)]

# QF4 : [String]
(for (t <-- tasks) where (t.task == "abstract")[t.employee]) 
  ++
(for (e <-- employees) where (e.salary > 50000) [e.name])
\end{lstlisting}
\caption{Nolineage queries, part 2}
\label{fig:nolineage-queries2}
\end{figure}

\FloatBarrier 
\subsection{Perm comparison}\label{appendix:benchmark-code-perm}

Table declarations and where-provenance queries in \WLinks.

\begin{lstlisting}[language=WLinks,morekeywords={database,tablekeys}]
var db = database "links";
var i_s_c_o_$n$_1 =
  table "i_s_c_o_$n$_1" 
  with (oid: Int, i: Int, s: String, cardinal: String, ordinal: String)
  where cardinal prov default tablekeys [["oid"], ["i"]] from db;
$\hdots$
query { 
  for (t_1 <-- i_s_c_o_$n$_1) $\hdots$ for (t_$m$ <-- i_s_c_o_$n$_$m$)
  where (mod(t_1.i, 100) < 5 && t_1.i == t_2.i && $\hdots$ && t_1.i = t_$m$.i)
    [(c1 = t_1.cardinal, c2 = t_2.cardinal, $\hdots$, c$m$ = t_$m$.cardinal)]
}
\end{lstlisting}

The \WLinks results with \wpr enabled look something like this with pretty printing of provenance-annotated values disabled (we can see the type \lstinline!Prov(a)! really desugars to the tuple type \lstinline{(!data:a, !prov:(String, String, Int))}):

\begin{lstlisting}[language=Links]
[(c1=(!data="one",!prov=("i_s_c_o_10000_1", "cardinal", 715924950)),
  c2=(!data="one",!prov=("i_s_c_o_10000_2", "cardinal", 715925958)), $\hdots$), $\hdots$]
\end{lstlisting}

\emph{Perm} uses arrays to collect annotations of equal rows.
In our query, all rows are different, so these are all singleton arrays.

{\small
\begin{tabular}{lll}
  \toprule
  c1           &               annot\_c1 & \dots \\
  \midrule
  two hundred sixty-seven&\{public.i\_s\_c\_o\_10000\_1\#cardinal\#114040340\}& \dots \\
  three hundred seventeen&\{public.i\_s\_c\_o\_10000\_1\#cardinal\#114040390\}& \dots \\
  \dots\\
  \bottomrule
\end{tabular}
}

\LLinks lineage queries and part of an example result.

\begin{lstlisting}[language=LLinks]
lineage {
  query {
    for (t_1 <-- i_s_c_o_$n$_1) $\hdots$ for (t_$m$ <-- i_s_c_o_$n$_$m$)
    where (mod(t_1.i, 100) < 5 && t_1.i == t_2.i && $\hdots$ && t_$(m-1)$.i == t_$m$.i)
      [(i=t_1.i, c = t_1.cardinal)]
}}
\end{lstlisting}

\begin{lstlisting}[language=LLinks]
[(data=(c="one", i=1),
  prov=[(row=715924950, %table%="i_s_c_o_1000_1"),
          (row=715925958, %table%="i_s_c_o_1000_2"),
          $\hdots$]),
 (data=(c="two", i=2),
  prov=[(row=715924951, %table%="i_s_c_o_1000_1"), $\hdots$]),
 $\hdots$]
\end{lstlisting}

The template for ``equivalent'' \emph{Perm} queries is shown below.
We use the \lstinline[morekeywords={PROVENANCE}]!PROVENANCE! keyword which enables Perm influence contribution semantics.

\begin{lstlisting}[language=SQL,escapechar={\#},morekeywords={PROVENANCE}]
SELECT PROVENANCE t_1.i, t_1.cardinal
FROM i_s_c_o_$n$_1 AS t_1, $\hdots$, i_s_c_o_$n$_$m$ AS t_$m$
WHERE t_1.i % 100 < 5 AND t_1.i = t_2.i AND $\hdots$ AND t_$(m-1)$.i = t_$m$.i
\end{lstlisting}

\end{document}